\newcommand{\scaledN}{\scaleto{N}{4pt}}
\newcommand{\scaled}[1]{\scaleto{#1}{4pt}}
\newcommand{\ind}[1]{\mathbbm{1}(#1)}
\newcommand{\limn}{\lim_{n\rightarrow\infty }}
\newcommand{\limN}{\lim_{N\rightarrow\infty }}
\newcommand{\rd}[1]{\mathbb{R}^{#1}}
\newcommand{\re}{\mathbb{R}}
\newcommand{\dom}{[0,1]}
\newcommand{\intd}{\int_{\dom}}
\newcommand{\kerr}{\mathcal{K}}
\newcommand{\kerrO}{\mathscr{K}}
\newcommand{\ltwo}{\mathscr{L}^2}
\newcommand{\norm}[1]{\left\lVert#1\right\rVert}
\newcommand{\eqd}{\stackrel{\text{d}}{=}}
\newcommand{\sumJ}{\sum_{j=1}^J}
\newcommand{\rhatbarj}{\overline{\widehat{R}}_{j}}
\newcommand{\rankji}{\widehat{R}_{ji}}
\newcommand{\cond}{\stackrel{\text{d}}{\rightarrow}}
\newcommand{\D}[1]{{\rm D}\left(#1\right)}
\newcommand{\Dd}{{\rm D}}
\newcommand{\Var}[1]{{{\rm V}ar}\left(#1\right)}
\newcommand{\Varr}[1]{{{\rm V}ar}(#1)}
\newcommand{\E}[2]{{\rm {E}}_{#1}\left[#2\right]}
\newcommand{\Eee}[2]{{\rm {E}}_{#1}[#2]}
\DeclareRobustCommand{\rchi}{{\mathpalette\irchi\relax}}
\newcommand{\irchi}[2]{\raisebox{\depth}{$#1\chi$}} 
\DeclarePairedDelimiter\floor{\lfloor}{\rfloor}
\DeclarePairedDelimiter\ip{\langle}{\rangle}
\DeclareMathOperator{\med}{Med}
\DeclareMathOperator{\LTR}{LTR}
\DeclareMathOperator{\HD}{HD}
\DeclareMathOperator{\RP}{RPD}
\DeclareMathOperator{\MBD}{MBD}
\DeclareMathOperator{\MMBD}{MMBD}
\DeclareMathOperator{\MFHD}{MFHD}
\DeclareMathOperator{\SD}{SD}
\DeclareMathOperator{\KSD}{KSD}
\newtheorem{theorem}{Theorem}
\newtheorem{ass}{Assumption}
\title{Robust nonparametric hypothesis tests for differences in the covariance structure of functional data}
\author{Kelly Ramsay, Shojaeddin Chenouri}
\date{November 2020}
\begin{document}
\doublespacing
\maketitle
\begin{abstract}
We develop a group of robust, nonparametric hypothesis tests which detect differences between the covariance operators of several populations of functional data. 
These tests, called FKWC tests, are based on functional data depth ranks. 
These tests work well even when the data is heavy tailed, which is shown both in simulation and theoretically. 
These tests offer several other benefits, they have a simple distribution under the null hypothesis, they are computationally cheap and they possess transformation invariance properties. 
We show that under general alternative hypotheses these tests are consistent under mild, nonparametric assumptions. 
As a result of this work, we introduce a new functional depth function called $L^2$-root depth which works well for the purposes of detecting differences in magnitude between covariance kernels. 
We present an analysis of the FKWC test using $L^2$-root depth under local alternatives. 
In simulation, when the true covariance kernels have strictly positive eigenvalues, we show that these tests have higher power than their competitors, while still maintaining their nominal size. 
We also provide a methods for computing sample size and performing multiple comparisons. 
\end{abstract}
\section{Introduction}
Data such that the observations are each a smooth curve, deemed functional data, is being increasingly observed in a variety of fields. 
For example, medical images \citep{Lopez-Pintado2017, Aston2017}, intraday financial asset returns \citep{CEROVECKI2019353} and environmental ``omics'' data \citep{PINA2018583} can all be interpreted as functional data. 
As such, many functional analogues of univariate and multivariate statistical tools are needed. 
One such tool is the notion of common variance in the functional context; common covariance operator or covariance kernel. 
In this work, we introduce new, nonparametric functional $k$-sample tests for equal covariance structure. 
We call this class of tests the functional Kruskal-Wallis tests for covariance structure, or for short, FKWC tests. 

Before introducing the FKWC tests, it is useful to review the existing, related works. 
Some early related works include \citep{James2006}, who presented a test for a difference in the shape of the mean function between two populations of curves and \citep{Gabrys2007, Aue2009, Horvath2010} who all consider tests related to serial dependence, or time series characteristics.

\cite{Panaretos2010} were the first to discuss comparing the covariance structures of two functional data populations. 
They provide a two sample test and restrict their attention to that of Gaussian processes. 
The test is based on the Hilbert-Schmidt norm for integral operators. 
\cite{Fremdt2013} later extended the methods of \cite{Panaretos2010} to compare two populations of non-Gaussian data. 
In a related work, \cite{JARUSKOVA2013} proposed a modification of the test of \cite{Panaretos2010}, used for covariance operator change-point detection. 
\cite{Zhang2015} re-normalized the test statistic of \cite{Panaretos2010} to account for dependence in the data. 
\cite{Gaines2011} proposed a test for equality of two covariance operators based on univariate likelihood ratios and Roy's union intersection principle. 

Up until this point, existing tests were based on the Hilbert-Schmidt metric. 
\cite{Pigoli2014} presented a discussion of distances between covariance operators, including criticisms of using finite dimensional distances on functional data. 
They argued that using a Hilbert-Schmidt metric ignores the geometry of the space of covariance kernels, and therefore is not an appropriate distance. 
As a result, they introduced a two sample permutation procedure, which \cite{cabassi2017} later extended to the multi-sample case. 
In the same vein of resampling, \cite{Paparoditis2016} proposed a $k$-sample bootstrap test that can detect differences in the mean and/or the covariance structure simultaneously.

\cite{Guo2016} further studied a multi-sample test, which was first proposed by \cite{Zhang2013}. 
When the data comes from a Gaussian process, under the null hypothesis, their test statistic is a $\chi^2$-type mixture. 
This distribution must be approximated in practice. 
They also provided a random permutation method to be used in the case of small samples and/or non-Gaussian data. 
\cite{Guo2018} developed a $k$-sample test inspired by functional ANOVA. 
One feature of this test is that it does not require some form of dimension reduction.  
Further, their method is scale invariant in the sense that re-scaling the data at any time $t$ does not affect the test statistic. 
Similar to that of \cite{Guo2016}, the distribution of the test statistic under the null hypothesis must to be estimated. 
The estimation of the critical value relies on parameters estimated from the data, which may pose problems if the data are contaminated. 

\cite{Boente2018} studied a new type of bootstrapping method to calibrate critical values for the covariance kernel testing problem. 
They focused on norms between covariance operators and the resulting distribution under the null hypothesis is based on eigenvalues of fourth moments, which must be estimated. 
They suggested bootstrapping the eigenvalues of fourth moment operators. 
This can be problematic if the data is heavy tailed or contaminated.

\cite{Kashlak2019} provided a concentration inequality based analysis of covariance operators, which includes a $k$-sample test and a classifier. 
They used concentration results to develop confidence sets based on $p$-Schatten norms. 
They then used `tuned' confidence sets to define rejection regions for $k$-sample tests. 
This test tends to underestimate the confidence level in the case where the data is heavy tailed.

Some other related works include the following: 
\cite{Lopez-Pintado2017} used a version of band depth defined on images to test for a difference in dispersion between two sets of images. 
The measure of dispersion ignores shape or `wigglyness' differences between the two samples. 
\cite{Sharipov2019} extended the bootstrapping procedures of \cite{Paparoditis2016} to change-point problems and dependent data. 
\cite{Rice2019} introduced a test for change-points in the cross-covariance operator of two functional time series. 
\cite{Flores2018} presented a test for homogeneity of two distributions based on depth measures. 
They explicitly stated that the paper was not focused on means or covariance operators. 
They provided four test statistics, based on the deepest functions or absolute values of differences in the depth distributions. 
\cite{Aston2017} focused on testing for a condition called separability which is specific to hypersurface data such as f-MRI.

FKWC tests have several advantages over these other methods. 
First, FKWC tests are very robust, a feature that has not often been discussed in other works. 
FKWC tests are based on rank statistics, generated via functional data depth measures. 
Functional data depth measures are, among other things, used for outlier detection and trimmed means; data depth measures are designed to produce robust inference procedures. 
A test statistic based on ranks of data depth measures would thus, inherent the robustness properties of both the depth measure and of rank statistics in general. 
We demonstrate this robustness via simulation in Section \ref{sec::sim} as well as the lack of moment assumptions needed for consistency of the test, see Section \ref{sec::theory}. 

Aside from being robust, many functional data depth measures are invariant under certain transformations of the data \citep{Gijbels2017}. 
If the functional observations are all scaled by an arbitrary function, we would like the test statistic to remain unchanged. 
\cite{Guo2018} points out that many existing tests for equal covariance structure are not invariant under this type of transformation, e.g., those of \citep{Panaretos2010, Guo2016}. 
On the contrary, many data depth measures remain unchanged if the data are scaled by an arbitrary function. 
Such invariance properties are then inherited by the FKWC test statistic, provided derivatives are not included in the calculation of the depth, see Section \ref{sec::dep}. 
If derivatives are included, the FKWC test satisfies a weaker form of transformation invariance. 

Furthermore, using data depth measures allows us to leverage existing consistency results \citep{Nagy2019} and provide asymptotic analysis of the FKWC tests under both the null and alternative hypotheses. 
We show that under the null hypothesis the test statistic is a chi-squared random variable. 
This is a particularly nice feature, as it circumvents the need to estimate the distribution of the test statistic under the null hypothesis using the data, which many other tests require, e.g., \citep{Pigoli2014, Paparoditis2016, cabassi2017, Guo2018, Boente2018, Kashlak2019}. 
Using the data to estimate the distribution of the test statistic under the null hypothesis can be complicated when the data are contaminated. 
This can also be computationally expensive if resampling methods are used. 

Not only is there no need to use the data to estimate the distribution of the FKWC test statistic under the null hypothesis, there is also no need to estimate the sample covariance operators in the computation of the FKWC test statistic. 
This fact implies that one does not need to reduce the dimension of the data via truncated basis expansions of the covariance kernels, as is needed by the methods of \citep{ Fremdt2013, Pigoli2014, Paparoditis2016, Boente2018}. 
An additional byproduct of avoiding estimation of the covariance operators is that we do not require finite fourth moments or any fourth moment related assumptions for our theoretical analysis. 
Such assumptions are required for many other tests, for example, \citep{Panaretos2010, Gaines2011, Fremdt2013, Paparoditis2016, Guo2018, Boente2018} all require some type of fourth moment assumption on the data. 

In terms of the alternative hypothesis, we show that under some mild conditions, the FKWC tests are consistent under a wide class of alternatives. 
We also provide a method for estimating the power and sample size under general alternatives. 
Some recent works have explored various local alternatives for this testing problem \citep{Gaines2011, Guo2016, Guo2018, Boente2018}. 
We also provide a class of local alternatives under which a particular FKWC test is consistent. 
This FKWC test is based on a new depth measure $L^2$-root depth, for which we prove several elementary properties. 
This depth measure has a particular interpretation in this testing problem, which provides the basis for its development. 

The rest of the paper is organised as follows. 
Section \ref{sec::meth} covers the methodology of the hypothesis tests, including the data model and the intuition behind the test statistic. 
Section \ref{sec::dep} gives a brief overview of functional data depths and their benefits when used in a hypothesis testing context. 
Section \ref{sec::theory} presents asymptotic results on the behaviour of the test statistic under both the null and alternative hypotheses. 
Section \ref{sec::sim} presents a simulation study, in which we compare the FKWC tests to some competing tests, including those of \citep{Guo2018, Boente2018}. 
The last section, Section \ref{sec::dataana} shows an application of the FKWC test to two data sets. 
We first compare several samples of intraday financial asset return curves. 
Here, we test to see if the residuals of a functional GARCH model have similar covariance structure. 
We next analyse speech recognition data, where the observations are log periodograms of five groups of recorded syllables. 
We perform FKWC multiple comparisons on these data to determine which pairs of syllables are similar in terms of covariance structure. 
\section{Model and test statistic}\label{sec::meth}
Suppose that we have observed $J$ independent, random samples and that for each sample $j\in\{1,\ldots,J\}$ we have $X_{j1},\dots,X_{jN_j}$ functional observations. 
The combined sample size is then $N=\sumJ N_j,$ where we assume that $N_j/N\rightarrow\vartheta_j$ as $N\rightarrow\infty$. 
We define `functional' observations by the following assumptions: 
First, we assume that each  $X_{ji}$ is a mean square continuous stochastic process, meaning for each $t\in[0,1],$ $X_{ji}(t)$ is measurable with respect to some probability space $(\Omega,\mathscr{A},P)$ and that $\lim_{t\rightarrow s}\E{}{|X(t)-X(s)|^2}=0$. 
Secondly, for each $\omega\in \Omega$,  $X_{ji}(\cdot,\omega)$ is a continuous function. 
We use $\mathfrak{F}$ to denote the space of such processes. 
These assumptions imply that $X_{ji}(t,\omega)$ is jointly measurable with respect to the product $\sigma$-field $\mathscr{B}\times \mathscr{A}$, where $\mathscr{B}$ denotes the Borel sets of $[0,1]$. 
This joint measurability implies that each $X_{ji}$ can be interpreted as a random element which lies in $\ltwo([0,1],\mathscr{B},\mu)$, where $\mu$ is the Lebesgue measure on $[0,1]$. 
For more details see Chapter 7 of \cite{hsing_eubank_2015}. 
We also assume that $\E{}{X_{ji}}=\mathbf{0}$ where $\mathbf{0}$ is the zero function. 
If necessary, in practice, data can be centered by a robust estimator of the mean. 
Some variants of the proposed test involve derivatives, and for those tests we will additionally require that the derivative of $X_{ji}$, by which we denote $X_{ji}^{(1)}(t)$, exists on the interval $(0,1)$ and satisfies the same continuity assumptions imposed on $X_{ji}$. 
We remark that our methods extend to higher dimensional domains and range, i.e., $X_{ji}\colon [0,1]^{d}\rightarrow \re^{p}$ on which functional data depths are defined, but we restrict our study to the univariate domain and range setting. 

The covariance kernel of a mean square continuous stochastic process $X$, whose mean is $\mathbf{0}$, is defined as 
\begin{equation*}
    \kerr(s,t)\coloneqq\E{}{X(t)X(s)}
\end{equation*}
and the associated covariance operator is defined as 
\begin{equation*}
    (\kerrO f)(t)\coloneqq\intd f(s)\kerr(s,t)ds\ .
\end{equation*}
The goal is to construct a test statistic for testing the following hypothesis
\begin{equation*}
    H_0\colon    \kerr_{1}=\dots=\kerr_{J}\qquad \text{v.s.}\qquad H_1\colon  \kerr_{j}\neq\kerr_{k},\ \text{for some }j\neq k.
\end{equation*}
Here $\kerr_j$ refers to the covariance kernel of group $j$. 
The assumptions on the random variables $X_{ji}$ imply that this is equivalent to the hypothesis
\begin{equation*}
    H_0\colon    \kerrO_{1}=\dots=\kerrO_{J} \qquad \text{v.s.}\qquad H_1\colon  \kerrO_{j}\neq\kerrO_{k},\ \text{for some } k\neq j,
\end{equation*}
where $\kerrO_{j}$ is the covariance operator of group $j$.  
We can denote the probability measure over $\ltwo([0,1],\mathscr{B},\mu)$ which describes the random behaviour of $X_{ji}$ by $F_j$. 
Let $$F_*\coloneqq \vartheta_1 F_1+\vartheta_2 F_2+\vartheta_3 F_3+\dots+\vartheta_\ell F_\ell+\vartheta_{J} F_{J},$$
which is a mixture of probability measures over $\ltwo([0,1],\mathscr{B},\mu)$. 
Alternatively, this can be interpreted in the stochastic process sense, such that the finite dimensional distributions of an element from the combined sample $X$: $(X(t_1),\ldots,X(t_k))$ for $k\in \mathbb{N}$, are mixtures of the $J$ finite dimensional distributions of each group, with weights $\{\vartheta_j\}_{j=1}^J$. 
Note that these finite dimensional distributions can be identified by $F_*$ and $F_j$. 

Typically, testing this hypothesis involves estimating $\kerr_j$. 
In order for estimates of $\kerr_j$ to converge weakly, it is typically required that $\E{}{\norm{X_{ji}}^4}<\infty$, in some cases it is not desirable to make this assumption. 
Estimation of $\kerr_j$ is also high dimensional, and can be computationally intensive if repeated a number of times, such as in a bootstrap procedure. 
We take a different approach, and do not aim to estimate $\kerr_j$. 
Instead, the idea is to reduce each observation to a one dimensional rank via a data driven ranking function. 
The ranking function is designed such that differences in the samples' mean ranks are implied by differences in underlying covariance kernels. 
We can then use the classical Kruskal Wallis test statistic \citep{Kruskal1952} and perform a rank test. 
Specifically, the test statistic proposed is
\begin{equation}
   \widehat{\mathcal{W}}_{\scaledN}\coloneqq  \frac{12}{N(N+1)} \sumJ N_j \left(\rhatbarj-\frac{N+1}{2}\right)^2.
    \label{eqn:KW}
\end{equation}
Here, $\rhatbarj$ is the mean rank of the observations in group $j$, where the ranking mechanism will be explained in the next section. 
This test statistic also gives, for each sample $j$, a measure of how much its covariance kernel differs from the average sample kernel via $$\left(\rhatbarj-\frac{N+1}{2}\right)^2.$$ 
Alternatively, we can perform FKWC multiple comparisons, see Section \ref{sec::dataana2}. 

We can further modify this test statistic using the methods of \cite{Gastwirth1965}, who presented a more powerful version of the univariate Wilcoxon Rank-Sum test. 
These were later extended to multivariate, depth-based rank tests \citep{Chenouri2011}. 
We further extend these methods to the functional setting. The percentile modification is predicated on the fact that it is actually the extreme rank values that allow us to detect differences between samples. 
The idea is to remove the middle portion of the data, and only use the outlying data or, equivalently, the low depth-based ranks. 
To this end, let $r\in (0,1)$ and let $N'=\floor{rN}{}$. 
Let $\delta_j(s)=1$ if the observation which has rank equal to $s$ is in group $j$ and let $\delta_j(s)=0$ otherwise. 
Define the percentile modified test statistic as 
\begin{equation}
    \widehat{\mathcal{M}}_{\scaledN,r}\coloneqq \sum_{j=1}^N \left(1-\frac{N_j}{N}\right)K_j \qquad\text{with}\qquad K_j=\frac{1}{\sigma^2_j}\left(\sum_{s=1}^{N'}(N'-s+1)\delta_j(s)-\varrho_j\right)^2,
\end{equation}
where 
\begin{equation*}
    \varrho_j=\frac{N_{j} N^{\prime}\left(N^{\prime}+1\right)}{2 N} \qquad\text{and}\qquad \sigma^2_j=\frac{N_{j}\left(N-N_{j}\right) N^{\prime}\left(N^{\prime}+1\right)\left[2 N\left(2 N^{\prime}+1\right)-3 N^{\prime}\left(N^{\prime}+1\right)\right]}{12 N^{2}(N-1)}.
\end{equation*}
Choosing $r$ is a matter of simulation and will be taken up in Section \ref{sec::sim}. 

\section{Functional depth measures}\label{sec::dep}
The methods we use for ranking the data are based on functional depth measures $\D{\cdot;F}$. 
Functional data depth measures, extended from multivariate data depth measures \citep{Zuo2000}, are versatile, nonparametric tools used in the analysis of functional data. 
Functional data depth measures $\D{\cdot;F}\colon \mathfrak{F}\rightarrow\re$ assign each value in $x\in \mathfrak{F}$ a real number which describes how `central' $x$ is with respect to some measure $F$ over the Hilbert space $\ltwo([0,1],\mathscr{B},\mu)$. 
Here, `central' is used very loosely; central can be in terms of location but also can be in terms of shape. 
Often we have that $F=F_{\scaledN}$, the empirical measure of the data and the depth values $\D{x ;F_{\scaledN}}$ describe centrality with respect to the sample. 
$F_{\scaledN}$ is such that there is $1/N$ mass at each observation in the sample.

Sample ranks based on data depth measures can be calculated simply by ranking the (univariate) depth values of the observations, where depth is calculated with respect to $F_{\scaledN}$. 
In this work, we rank the observations with respect to $F_{*,\scaledN}$, which places equal weight on each element of the combined sample. 
For some $j\in \{1,\ldots,J\}$ and some $i\in\{1,\ldots,N_j\}$, define the sample depth-based rank of $X_{ji}$ to be
$$ \rankji\coloneqq \#\{X_{\ell m }\colon \D{X_{\ell m };F_{*,\scaledN}}\leq \D{X_{ji};F_{*,\scaledN}},\ \ell\in\{1,\ldots,J\},\ m\in\{1,\ldots,N_\ell\}\}\ .$$
This method of ranking is analogous to univariate center outward ranking, in the sense that observations have a high rank when they are deep inside the data cloud.

The motivation for using depth measures as the ranking function follows from the fact that depth has already been shown to have good power for detecting scale changes in surface data \citep{Lopez-Pintado2017}. 
Additionally, using depth ranks for such a purpose has been shown to work well in the univariate and multivariate settings \citep{Sieg1960, ansari1960, Chenouri2011, chenouri2012}. 
In fact, it appears that ranks based on depth measures capture second order differences better in the functional setting than in the multivariate setting. 
For example, a sign change in a covariance parameter could not be detected by data depth ranks in the multivariate setting \citep{Chenouri2011}. 
This is not the case in the functional setting; some of our proposed FKWC tests can detect a difference of the type $\kerr_1(s,t)=-\kerr_2(s,t)$. 
As discussed by \cite{Serfling2017}, functional depth measures are designed with the goals of describing functional data in mind; these goals differ from those of multivariate data description. 
Specifically, the functional data depth measures are not required to be affine invariant and are designed to account for both shape and scale features of the data. 
Differences between the covariance kernels are often exhibited by changes in the shape and/or scale of the data, precisely the features captured by functional data depth functions. 
For example, Figure \ref{fig:PM} shows two samples of 10 Gaussian processes and their derivatives. 
Each sample has the same mean but a different covariance kernel. 
Visually, the distinguishing factor between these two samples is the scale and shape of the curves and their derivatives. 
Notice that the difference is more pronounced in the derivatives. 

\begin{figure}
 \begin{minipage}[c]{0.5\textwidth}
      \includegraphics[width=\textwidth]{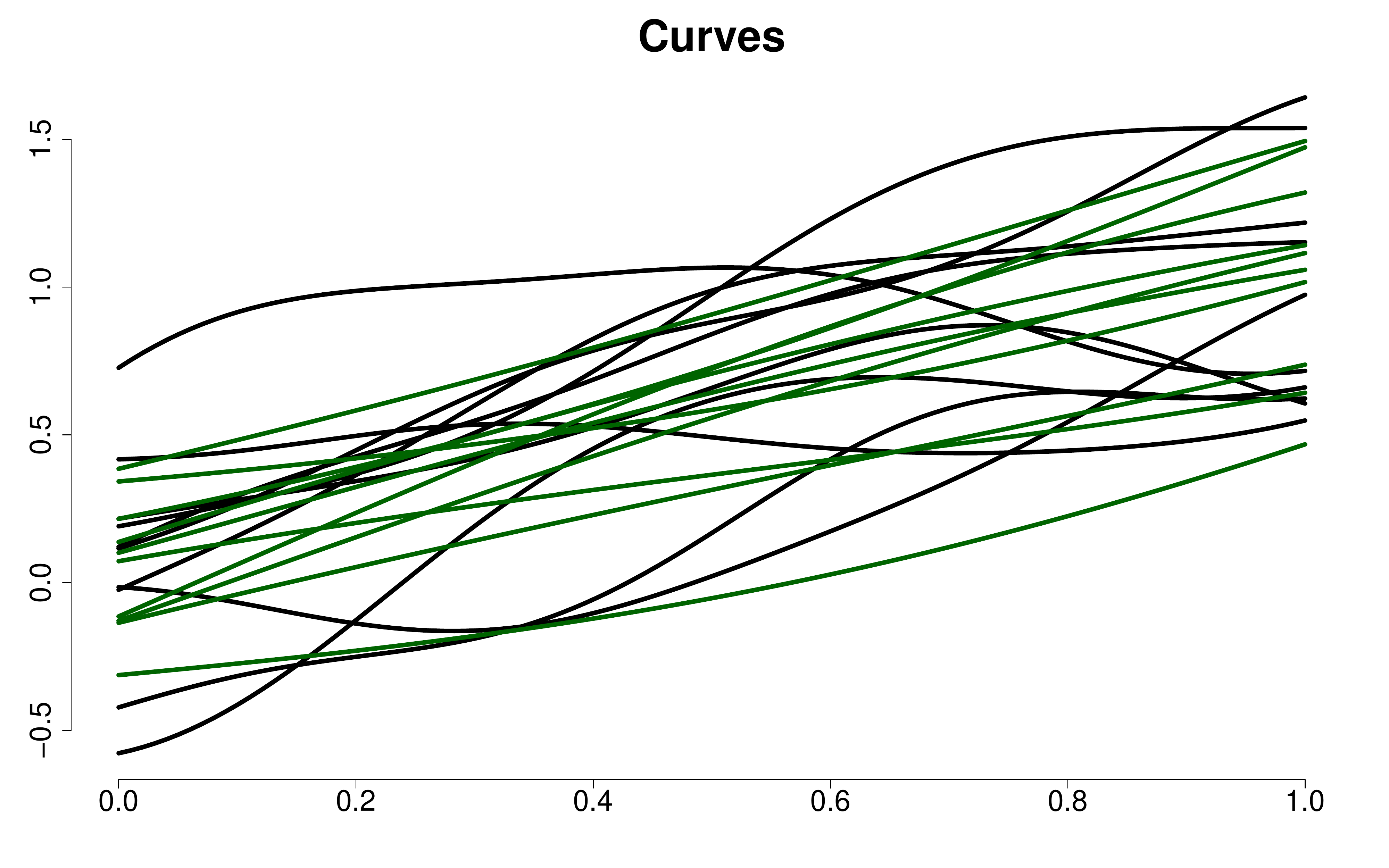}
      \caption*{(a)}
 \end{minipage}
 \begin{minipage}[c]{0.5\textwidth}
      \includegraphics[width=\textwidth]{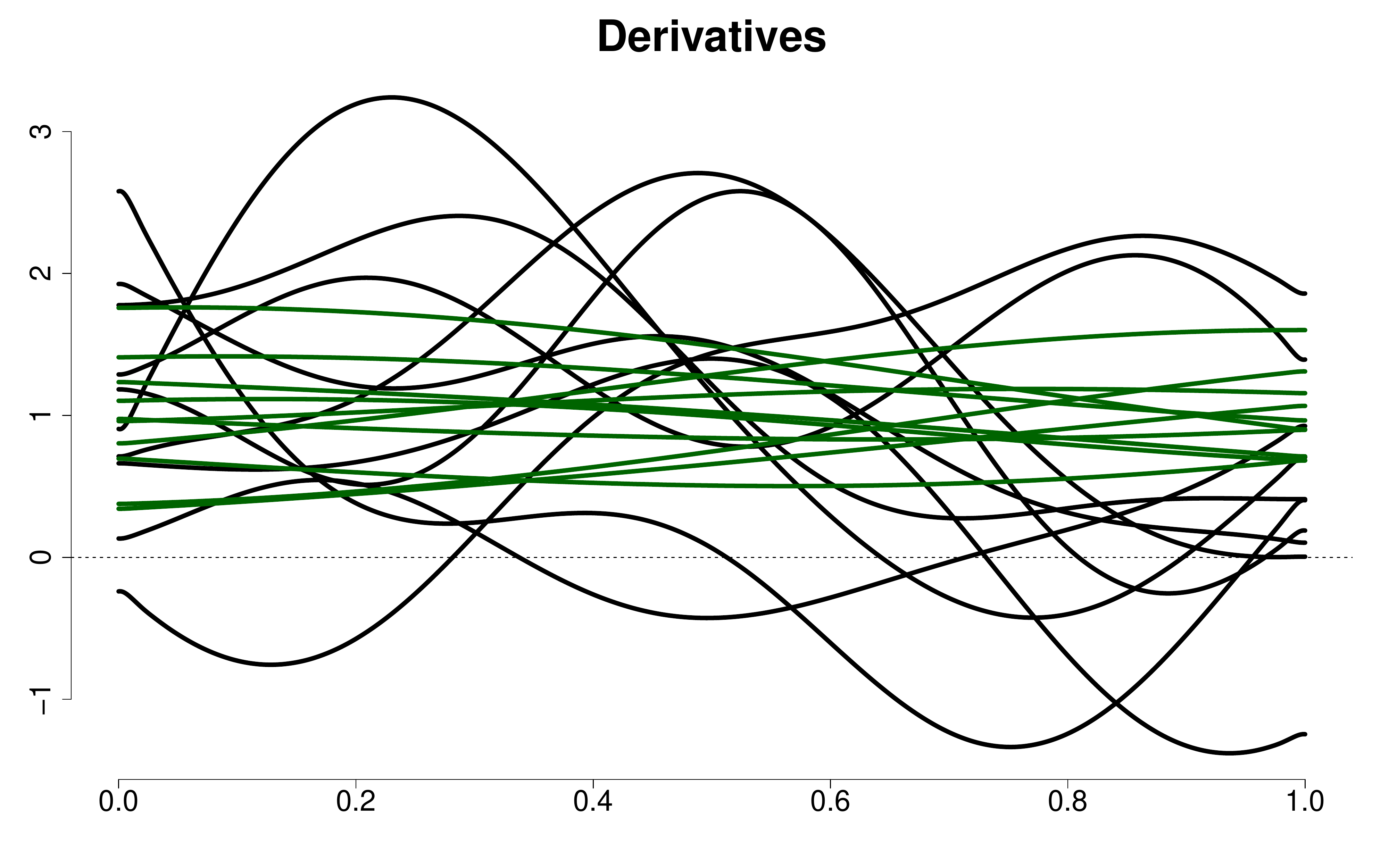}
            \caption*{(b)}
 \end{minipage}
\caption{Two samples of Gaussian processes (a) and their derivatives (b) each with the same mean but a different covariance kernel. The first has an exponential kernel with $\alpha= 0.3$ and $\alpha=1$ in the second sample (see Section \ref{sec::sim}). Notice that the difference in covariance structure is exhibited by a changed in shape of the original curves and a change in shape and scale of the derivatives.}
\label{fig:PM}
\end{figure}

There have been many definitions put forth for functional depth measures, especially recently. 
We restrict ourselves to depth measures that meet criteria suitable for the hypothesis testing problem. 
Namely, we would like the depth measures to admit definitions on multivariate functional data. 
Multivariate functional data refers to observations whose range is $\re^{p}$ for some integer $p>1$. 
For example, the pair of a functional observation and its derivative $(X,X^{(1)})$ could be referred to as a multivariate functional observation with $p=2$. 
Choosing depth measures that admit multivariate functional definitions allows for the use of derivatives, warping functions or other functions in the depth computation. 
In this paper, we use the derivatives in conjunction with the originally observed curves to compute the depth values and show that this leads to better performance.

On top of having multivariate functional definitions, we would like the depth measure to be scale and translation invariant with respect to $F_{\scaledN}$. 
By scale invariant, we mean that if the data are collectively scaled by some non-zero function $a(t)$ we want the results of the test to remain unchanged. 
Clearly translation invariance is necessary. 
Many functional depth measures are invariant under the aforementioned transformations of the data. 
However, suppose that we have a set of univariate functional observations but we take the data to be the observed function along with the observed first derivative $(X,X^{(1)})$. 
Scaling each $X$ by $a$ results in a different transformation of the derivative and produces different observations
$$(aX,aX^{(1)}+a^{(1)}X).$$
If $a$ is a constant function, then a scaling of the data corresponds to a scaling of the derivative and all is well. 
Therefore, ranking methods that are based on analysis of the function and its derivative are not generally invariant under scaling by a non-constant function. 
They are, however, invariant under scaling by a constant function. 
In other words, the FKWC test based on the original curves is invariant under scaling by a non-constant function whereas the test based on the paired observations of functions and their derivatives is not invariant under scaling by a non-constant function. 
However, in order for $X^{(1)}$ to be affected overly by arbitrary scaling, the derivatives of $a$ must be large relative to function $X$, suggesting $a$ is quite steep in some places. 
It is hard to see a reason to scale the data by steep functions in practice.

It is also desirable for the depth measures to be uniformly consistent with a rate of $O(N^{-1/2})$ under some general conditions, which gives that for large $N$, the test will detect a difference in covariance structure with high probability. 
So far, this has only been shown for the depth of \cite{Fraiman2001} by \cite{Nagy2019}. 
We extend their result to the random projection depth in Section \ref{sec::theory}. 
Lastly, we use functional data depth measures which have implementations in \texttt{R} or software that can be integrated into \texttt{R}. 
This last restriction allows us to readily provide simulation results for the tests, as well as implementations that can be used in practice.

The first depth measure we discuss is the integrated depth of \cite{Fraiman2001},
Let $F_t$ be the finite dimensional distribution of $X(t) \in\re^p$ if $X\sim F$. 
For each fixed $t$, we can define the depth of $x(t)$ with respect to $F_t$ as $\D{x(t);F_t}$ where $\D{\cdot}$ is a univariate or multivariate depth function. 
It is natural to define the functional depth as 
$$\intd \D{x(t);F_t} dt;$$
the average of these pointwise depths.  
In this work, we let $\D{\cdot}$ be the multivariate halfspace depth \citep{Tukey1974}, which is defined as
$$\HD(x(t);F_t)=\inf_{y\in \rd{p},\ \norm{y}=1} \Pr(\ip{X(t), y}\leq \ip{x(t), y}).$$
With $\Dd=\HD$ the depth of \cite{Fraiman2001} is equivalent to a version of 
the multivariate functional halfspace depth \citep{slaets11, Hubert2012, Claeskens2014}. 
Therefore, for the remainder of the paper we shall refer to this depth as the multivariate functional halfspace depth:
\begin{equation*}
\MFHD(x ;F)=\intd \HD(x(t);F_t)dt.
\end{equation*}
Multivariate functional halfspace depth is invariant under transformations of the type $T(X)=AX+b$ where for each $t$, $A(t)$ is an invertible matrix and $b$ is a smooth function. 
Note the remark above; methods using derivatives are only invariant under constant scaling. 
\cite{Nagy2019} have recently shown that the sample $\MFHD$ depths converge uniformly to their population counterpart at a rate of $N^{-1/2}$. 
We can use the \texttt{MFHD} \texttt{R} package to compute this depth or the \texttt{fda.usc} package via \texttt{depth.FMp}. 
We opt for the latter in simulation in Section \ref{sec::sim}. 
Further, we use $\MFHD'(x;F)$ to refer to $\MFHD$ applied to the pair $(x,x^{(1)})$. 

We next consider modified band depth \citep{Lopez-Pintado2009}. 
For some $x\in \mathfrak{F}$ and a set of functional observations $Y_1,\dots,Y_k$ we can define
$$B\left(x;Y_1,\dots,Y_{k}\right)=\left\{t: t \in [0,1], \min _{r=1, \ldots, k} Y_{i_r}(t) \leq x(t) \leq \max _{r=1, \ldots, k} Y_{i_r}(t)\right\},$$
as the set such that $x$ is in the $k$-band delimited by $Y_1,\dots,Y_k$. If $Y_1,\dots,Y_k$ are independent and come from the distribution $F$, we can define
\begin{equation*}
    \MBD_k^{(x)}=\E{F}{\mu(B\left(x;Y_1,\dots,Y_{k}\right))}
\end{equation*}
where $\mu$ is the standard Lebesgue measure. 
Then the `modified' band depth with parameter $K$ is equal to 
$$\MBD_{\scaled{K}}(x;F)=\sum_{k=2}^K \MBD_k^{(x)}.$$ 
This depth measure is invariant under the linear transformations described previously \citep[see property ``P1-F'' in][]{Gijbels2017}.
There are two multivariate functional extensions of this depth measure, \citep{Ieva2013} and \citep{LP2014}. 
We choose the extension of \cite{Ieva2013} because of its existing implementation in \texttt{R}. 
Essentially, the multivariate extension of the modified band depth as in \cite{Ieva2013} is defined as 
$$\MMBD_{\scaled{J}}(x;F)\coloneqq\sum_{k=1}^{p} w_k\MBD_{\scaled{J}}(x;F^{k}), \qquad\text{and}\qquad \sum_{p=1}^{p}w_p=1,$$
where $F^k$ is the marginal distribution pertaining to the $k^{th}$ univariate functional argument in the vector of observations. 
This depth is also invariant under the discussed transformations. 
We will use $\MBD'_K(x;F)$ to refer to $\MMBD_{\scaled{K}}$ applied to the pair $(x,x^{(1)})$.

Functional spatial depth, described by \cite{Chakraborty2014}, is the extension of multivariate spatial depth to the functional setting. 
Define spatial depth to be $$\SD(x;F)\coloneqq 1-\norm{\E{F}{s(x-X)}},$$ 
where 
$$s(y)=\left\{\begin{array}{cc}
    y/\norm{y} & \norm{y}\neq 0 \\
     0 & o.w.
\end{array}\right. .$$
Here, $\norm{\cdot}$ refers to the $\ltwo([0,1],\mathscr{B},\mu)$ norm. 
We can also define the kernelized spatial depth \citep{Sguera2014}. 
Consider a mapping $\phi$ from $\ltwo([0,1],\mathscr{B},\mu)$ into some feature space and a corresponding kernel function, different from the covariance kernels described above, $\gamma\colon \ltwo([0,1],\mathscr{B},\mu)\times\ltwo([0,1],\mathscr{B},\mu)\rightarrow\re$ as 
$$\gamma(x,z)=\ip{{\phi(x)},{\phi(z)}}.$$
We can then define the kernel depth as 
$$\KSD(x;F)=1-\norm{\E{F}{s(\phi(x)-\phi(X))}}.$$
These depth measures are transformation invariant as above if $a$ is surjective, which is a fairly mild restriction.
We can extend these depth measures to the multivariate functional setting by computing the spatial functional depth values marginally and then taking an equally weighted average of such marginal depth values; analogous to what is done above for the modified band depth \citep{Ieva2013}. 
This modification will be referred to as $\SD'$ for spatial depth and $\KSD'$ for kernelized spatial depth. 
One might also take a multivariate depth of the marginal functional depths, but this is more computationally costly and we think unnecessary in this context.
Both kernelized and standard spatial functional depths are implemented in the \texttt{fda.usc} \texttt{R} package. 

The last of the existing depth measures we introduce is the random projection depth \citep{Cuevas2007}. 
The idea behind this depth measure is to choose $M$ random directions $u_m(t)\in S$ where $S$ is the unit sphere in $\ltwo([0,1],\mathscr{B},\mu)$. 
Then, for each direction $u$, compute a separate depth value based on the projections onto $u$, i.e., $\D{\ip{x,u};F_u}$. 
Here, $F_u$ is the CDF of the random variable $\ip{X,u},\ X\sim F$. 
These depths $\D{\ip{x,u};F_u}$ are then averaged to give a final resulting depth value, viz.
$$\RP_{\scaled{M}}(x;F)=\frac{1}{M}\sum_{m=1}^M\D{\ip{x,u_m};F_{u_m}}.$$ 
In this work, we typically take $\D{\ip{x,u};F_u}=F_u(x)(1-F_u(x))$ and use $M=20$ projections. 
\cite{Cuevas2007} introduce a second version of the random projection depth in which they calculate both the projection of the function and the projection of the function's first derivative, which provides pairs of observations. 
They then use a multivariate depth on the couples:
$$\RP_{\scaled{M}}'(x;F)=\frac{1}{M}\sum_{m=1}^M\D{(\ip{x,u_m},\ip{x^{(1)},u_m});F_{u_m,(1)}},$$
where $F_{u,(1)}$ is the bivariate distribution of $(\ip{X,u},\ip{X^{(1)},u})$. 
For $\D{\cdot}$ we use the likelihood depth \citep{MULLER2005153} as this is the default in the \texttt{fda.usc} \texttt{R} package. 
Note that this depth does not have any consistency or invariance properties presented in the original paper. 
We discuss consistency in Section \ref{sec::theory} under the assumption that the unit vectors are drawn from a compact subset of $S$. 
It is necessary to restrict to some compact set, seeing as the uniform measure on the unit sphere in infinite dimensional spaces does not exist. 
The choice of directions is therefore an interesting point of discussion. 
For example, instead of selecting the directions from a general subset of $S$, one can take a semi-parametric approach and assume that $X_{ji}=c_{ji}^\top \mathbf{Y}$ where $\mathbf{Y}=(y_1,\ldots,y_B)^\top$ is a vector of orthogonal basis functions. 
It is then natural to use $\mathbf{Y}$ as the set of `directions' for $\RP$ depth. 
Another idea would be to use a robust version of the estimated principal components. 
We leave the study of these methods for future work.

Lastly, we study a new depth related to spatial depth, which we call $L^2$-root depth, or $\LTR$ depth for short. 
We can define $\LTR$ depth as
\begin{equation}
    \LTR(x;F)=\left(1+\E{F}{\norm{x-X}^2}^{1/2}\right)^{-1},
\end{equation}
which is inspired by $L^p$ depths of \cite{Zuo2000}. 
The following theorem lists some properties of $\LTR$ depth. 
\begin{theorem}
Let $X\sim F$, where $F$ is a measure over $\mathscr{L}^2([0,1],\mathscr{B},\mu)$, $F_{\scaledN}$ be the empirical measure corresponding to a random sample of size $N$ from $F$, $a,b\in \mathfrak{F}$ and $c,c'\in \re^+$, then $\LTR$ depth satisfies the following properties
\begin{enumerate}
    \item Sample ranks based on $\LTR$ depth are invariant when the sample is transformed by a linear function $h$, such that $h(x)=a\,x+b$. 
    \item If $X\eqd -X$, then $\sup_x\LTR(x;F)=\LTR(\mathbf{0};F)$. 
    \item If $X\eqd -X$, then $\LTR(c\,x;F)$, is decreasing in $c$. 
    \item $\lim\limits_{c\rightarrow\infty} \LTR(c\,x;F)=0$.
    \item Suppose that $\E{}{\norm{X}^2}<\infty$, then $\sup\limits_x|\LTR(x;F_{\scaledN})-\LTR(x;F)|=o(1)\ a.s.\ $. 
\end{enumerate}
\label{thm::ltd_prop}
\end{theorem}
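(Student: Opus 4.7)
The plan is to reduce every claim to an elementary statement about the functional $g(x;F) := \E{F}{\norm{x-X}^2}$ via the observation that $\LTR(x;F) = (1+\sqrt{g(x;F)})^{-1}$ is a strictly decreasing, continuous function of $g$. Claims (1)--(4) then follow from direct expansion of $g$, while (5) requires a separate argument combining the law of large numbers with a tail estimate.

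For (1), writing $h(X_k) - h(X_i) = a(X_k - X_i)$ cancels the translation $b$, and---treating $a$ as a positive constant, the case in which $\norm{a(X_k - X_i)}^2 = a^2\norm{X_k - X_i}^2$---every empirical value of $g$ is multiplied by the same positive number $a^2$, so the order, and hence the depth-rank, of each observation is preserved under the monotone map $z \mapsto (1+\sqrt{z})^{-1}$. For (2)--(4) I would expand, using Fubini,
\[
g(x;F) = \norm{x}^2 - 2\ip{x, \E{}{X}} + \E{}{\norm{X}^2}.
\]
Under the symmetry hypothesis $X \eqd -X$ we have $\E{}{X} = \mathbf{0}$, so $g(x;F) = \norm{x}^2 + \E{}{\norm{X}^2}$ is minimised at $x=\mathbf{0}$, which gives (2); replacing $x$ by $cx$ yields $g(cx;F) = c^2\norm{x}^2 + \E{}{\norm{X}^2}$, strictly increasing in $c>0$, which gives (3). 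Without symmetry, for any $x \neq \mathbf{0}$ the quadratic term $c^2\norm{x}^2$ still dominates the linear one, so $g(cx;F) \to \infty$ and (4) follows.

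Property (5) is the substantive step. I would decompose
\[
g(x;F_{\scaledN}) - g(x;F) = -2\,\ip{x,\, \overline{X}_N - \E{}{X}} + \Bigl(\tfrac{1}{N}\sumN \norm{X_i}^2 - \E{}{\norm{X}^2}\Bigr),
\]
where $\overline{X}_N := N^{-1}\sumN X_i$. The Hilbert-space strong law (valid because $\E{}{\norm{X}^2}<\infty$) yields $\norm{\overline{X}_N - \E{}{X}} \asc 0$, and the scalar SLLN gives $N^{-1}\sumN \norm{X_i}^2 \asc \E{}{\norm{X}^2}$. Combining $|\sqrt{a}-\sqrt{b}| \leq \sqrt{|a-b|}$ with the denominators $(1+\sqrt{g})$ being at least $1$ gives, by Cauchy--Schwarz,
\[
\sup_{\norm{x}\leq R}\bigl|\LTR(x;F_{\scaledN})-\LTR(x;F)\bigr| \leq \sqrt{\,2R\,\norm{\overline{X}_N - \E{}{X}} + \bigl|\tfrac{1}{N}\sumN\norm{X_i}^2 - \E{}{\norm{X}^2}\bigr|\,},
\]
which is $o(1)$ a.s.\ for each fixed $R$.

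The main obstacle, and the step I would treat most carefully, is the supremum over $\norm{x} > R$, since the linear term in the decomposition is unbounded in $x$. Here I would use that $g(x;F) \geq \norm{x}^2 - 2\norm{x}\norm{\E{}{X}} + \E{}{\norm{X}^2}$ (and the analogue for $F_{\scaledN}$, whose linear coefficient $\norm{\overline{X}_N}$ is a.s.\ bounded for all large $N$) grows quadratically in $\norm{x}$, so $\LTR(x;F)$ and $\LTR(x;F_{\scaledN})$ are both bounded by a quantity $(1+c_R)^{-1}$ on $\{\norm{x}>R\}$ with $c_R \to \infty$. A standard $\varepsilon/3$ split---choose $R$ large to kill both tails, then let $N \to \infty$ to kill the interior via the preceding display---delivers the claimed uniform almost sure convergence.
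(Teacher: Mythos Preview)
Your argument is correct. For properties (1)--(4) you do essentially what the paper does---reduce everything to the monotone dependence of $\LTR$ on $g(x;F)=\E{F}{\norm{x-X}^2}$ and expand---though your use of $g(x;F)=\norm{x}^2-2\ip{x,\E{}{X}}+\E{}{\norm{X}^2}$ is slightly more direct than the paper's parallelogram manipulation. Your explicit restriction to constant $a$ in (1) matches what the paper's computation actually establishes.

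The genuine difference is in (5). The paper splits at a \emph{growing} threshold $\norm{x}\lessgtr c\,N^{1/2}/\log N$ and controls the inner region by invoking the law of the iterated logarithm on the linear term, so that $\norm{x}\cdot\norm{\overline X_N}$ stays $o(1)$ even as the admissible $\norm{x}$ grows with $N$; the outer region is handled by vanishing at infinity. Your approach instead fixes $R=R(\varepsilon)$, uses only the strong law of large numbers (scalar for $N^{-1}\sum\norm{X_i}^2$ and Hilbert-space for $\overline X_N$) on $\{\norm{x}\le R\}$, and closes with an $\varepsilon/3$ tail estimate that is uniform in $N$ because $\sup_N\norm{\overline X_N}<\infty$ almost surely. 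Your route is more elementary---it avoids the LIL entirely and does not implicitly rely on $\E{}{X}=\mathbf 0$---while the paper's growing-threshold/LIL argument is the natural precursor to extracting an explicit rate, even though no rate is ultimately claimed in the statement.
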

\noindent The proof of this theorem is given in Appendix \ref{sec::app}. 
We remark that $\LTR(x;F)$ is not invariant under linear transformations as given by $h$ in property 1.\ of Theorem \ref{thm::ltd_prop}. 
It is easy to see that if
$$\LTR(x;F)=\frac{1}{1+c'}\qquad\text{then}\qquad\LTR(ax;aF)=\frac{1}{1+\norm{a}c'},$$
where $aX\sim aF$, with $X\sim F$. 
This fact implies that hypothesis tests based on $\LTR$ depth values themselves won't be invariant under linear transformations. 
However, Theorem \ref{thm::ltd_prop} shows that hypothesis tests based on ranks of these depth values invariant under linear transformations. 
Further, a median based on this depth would be equivariant under linear transformations as given by $h$ in property 1.\ . 

Motivation for the study of $\LTR$ depth in this context comes from the relationship between the covariance operator of a random variable and its expected, squared $L^2$ norm.
In this setting, we see that ranking the observations based on $\LTR$-depth is equivalent to ranking the observations based on their norms. 
Under the assumption of zero mean, we have that
$$\E{F_*}{\norm{X_{ji}-X}^2}=\E{F_*}{\norm{X_{ji}}^2}+\E{F_*}{\norm{X}^2}=\norm{X_{ji}}^2+\sum_{j=1}^J \vartheta_j \mathscr{K}_j+o(1),$$
from which it is easily seen that the ranks are equivalent to those based on $\E{F_*}{\norm{X_{ji}}^2}$. 
We emphasize that this relationship relies on the assumption of zero mean, and the data must be centred. 
In this context, ranks generated from this depth function do not need to be estimated; we can compute ranks based on $\D{\cdot;F_*}$ directly. 
It should be noted that the aspects of the distribution captured by $\LTR$ depth are limited to the magnitude of the observations. 
To account for this, we propose incorporating shape information, which can be obtained from the derivatives of the observations. 
This depth can be extended to account for $p$ derivatives with
$$\LTR_p(x;F)=\left(1+\frac{1}{p}\sum_{k=0}^{p}\E{F}{\norm{x^{(k)}-X^{(k)}}^2}^{1/2}\right)^{-1}.$$


\section{Theoretical results}\label{sec::theory}





This section is devoted to characterizing the behaviour of the FKWC tests under the null and alternative hypotheses when the sample size is large. 
Note all proofs of theorems presented in this section can be found in Appendix \ref{sec::app}. 
\begin{theorem}
Suppose that $\Pr(\widehat{R}_{ji}=\widehat{R}_{k\ell})=0$ for all $ji\neq k\ell$. 
Then, under the null hypothesis 
$$\widehat{\mathcal{W}}_{\scaledN}\cond \rchi^2_{\scaled{J}-1}\text{ as } N\rightarrow\infty\qquad \text{and}\qquad \widehat{\mathcal{M}}_{\scaledN,r}\cond\rchi^2_{\scaled{J}-1} \text{ as } N\rightarrow\infty.$$
\label{thm:::null}
\end{theorem}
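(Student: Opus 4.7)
The plan is to reduce both convergence statements to the classical Kruskal--Wallis asymptotic theory by first establishing that, under $H_0$, the depth-based rank vector is uniformly distributed on the permutations of $\{1,\ldots,N\}$. The key observation is that all $N$ pooled observations are i.i.d.\ from a common distribution, and the ranking mechanism uses the pooled empirical measure $F_{*,N}$, which is a symmetric function of the pooled sample. Consequently, the vector of depth values $(\Dd(X_{11};F_{*,N}),\ldots,\Dd(X_{JN_J};F_{*,N}))$ is exchangeable, and combined with the no-ties hypothesis $\Pr(\widehat R_{ji}=\widehat R_{k\ell})=0$, the induced rank vector $(\widehat R_{11},\ldots,\widehat R_{JN_J})$ is uniformly distributed over the $N!$ permutations of $\{1,\ldots,N\}$. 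Once this is established, the two statements reduce to standard rank-sum calculations in which the depth function no longer appears.

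For $\widehat{\mathcal W}_{\scaledN}$: under a uniformly distributed rank vector, $\widehat{\mathcal W}_{\scaledN}$ is literally the classical Kruskal--Wallis statistic of \cite{Kruskal1952}. Elementary finite-sample moment calculations give the mean, variance, and cross-covariances of the group-wise rank averages $\rhatbarj$; applying the multivariate CLT to $\bigl(N_j^{1/2}(\rhatbarj - (N+1)/2)\bigr)_{j=1}^J$ under the balance assumption $N_j/N \to \vartheta_j > 0$, the quadratic form defining $\widehat{\mathcal W}_{\scaledN}$ converges to the corresponding quadratic form in a Gaussian vector. The linear constraint $\sum_{j=1}^J N_j \rhatbarj = N(N+1)/2$ reduces the rank of the limiting covariance matrix from $J$ to $J-1$, giving $\widehat{\mathcal W}_{\scaledN} \cond \rchi^2_{\scaled{J}-1}$.

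For $\widehat{\mathcal M}_{\scaledN,r}$ the same uniform rank vector is the only distributional input required. Each score $\sum_{s=1}^{N'}(N'-s+1)\delta_j(s)$ is a weighted sum over the top $N'$ ranks of an indicator that picks out group $j$. Under the uniform rank vector the indicator vectors $(\delta_j(s))$ have a multivariate hypergeometric joint law, and the centerings and variances $\varrho_j$, $\sigma_j^2$ given in the definition of $\widehat{\mathcal M}_{\scaledN,r}$ are precisely the corresponding mean and variance. A multivariate CLT for these rank-sums together with the single linear constraint $\sum_{j=1}^{J} \sum_{s=1}^{N'}(N'-s+1)\delta_j(s) = N'(N'+1)/2$ then yields $\widehat{\mathcal M}_{\scaledN,r} \cond \rchi^2_{\scaled{J}-1}$ by the same degrees-of-freedom argument, mirroring the development of \cite{Gastwirth1965} and \cite{Chenouri2011}. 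The step that requires the most care is the exchangeability claim itself: one must verify that the depth-based ranking depends on $(X_{11},\ldots,X_{JN_J})$ only through a symmetric function of the pooled sample so that exchangeability under $H_0$ transfers cleanly to the ranks; once this is in hand, the remainder of the argument is essentially classical rank-test asymptotics and makes no further use of the functional structure of the data.
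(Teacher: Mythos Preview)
Your proposal is correct and follows essentially the same route as the paper's own proof. The paper's argument is terser: it observes that under $H_0$ the depth-based ranks are identically distributed, hence the rank vector is uniform over permutations, and then directly invokes \cite{Kruskal1952} for $\widehat{\mathcal W}_{\scaledN}$ and Theorem~2 of \cite{Chenouri2011} for $\widehat{\mathcal M}_{\scaledN,r}$; your version makes the exchangeability step more explicit (noting that $F_{*,\scaledN}$ is symmetric in the pooled sample) and sketches the underlying CLT mechanics, but the logical skeleton is identical.
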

Tied ranks can be randomly broken to meet the requirements of Theorem \ref{thm:::null}. 
The behaviour under the null hypothesis is remarkably simple for such a complex testing problem. 
Therefore, the critical values can easily be obtained independently of the data and thus, this aids accuracy and computation time. 
The fact that the critical values are independent of the data is important for robustness, as there is no need to assess the robustness of procedures used to approximate the null distribution. 
Under the alternative hypothesis, we must impose additional assumptions in order to have consistency of the test. 
\begin{ass} For all $j$, it holds that $\Pr(\D{X_{j1};F_{*}}\leq v)$, as a function of $v$, is a Lipschitz function.  
\label{ass:cont}
\end{ass}
\begin{ass} It holds that
$\E{}{\sup_{x\in \mathfrak{F}}|\D{x;F_{N,*}}-\D{x;F_*}|}=O(N^{-1/2}).$
\label{ass:consDepth}
\end{ass}
\begin{ass}
There exists $k \in \{1,\ldots,J\}$ such that
\begin{equation}
      \sum_{j=1}^J\vartheta_j \Pr\left(\D{X_{k 1};F_*}>\D{X_{j1};F_*}\right)\neq \frac{1}{2}.
    \label{eqn:assu}
\end{equation}
\label{ass:med_d}
\end{ass}
Assumption \ref{ass:cont} is generally satisfied when $F_*$ is continuous. 
Assumption \ref{ass:consDepth} has been shown to be satisfied for $\MFHD$, see \citep{Nagy2019}.  
We can extend the results of \cite{Nagy2019} to $\RP$ depth. 
Suppose that the unit vectors are drawn from some compact set $\mathcal{C}$ and that $M_N=O(N)$. 
Let 
\begin{equation}
\RP_\infty(x;F)=\int_\mathcal{C}F_u(\ip{x,u})(1-F_u(\ip{x,u})) d\nu(u),
    \label{eqn::rpinf}
\end{equation}
where $\nu$ is the uniform measure on $\mathcal{C}$. 
Then, 
{\small
\begin{align*}
    \E{}{\sup_{x\in\mathfrak{F}}|\RP_{M_N}(x;F_N)-\RP_\infty(x;F)|}&=\E{}{\bigg|M_N^{-1}\sum_{m=1}^{M_N}\D{\ip{x,u_m};F_{N,u_m}}-\int_\mathcal{C}\D{\ip{x,u};F_{u}}d\nu(u)\Bigg|}\\
    &\leq  \E{}{M_N^{-1}\sum_{m=1}^{M_N}|\D{\ip{x,u_m};F_{N,u_m}}-\D{\ip{x,u_m};F_{u_m}}|}+O(N^{-1/2})\\
    &\leq \E{}{4M_N^{-1}\sum_{m=1}^{M_N}  \E{}{\sup_{z\in\re}|F_{N,u_m}(z)-F_{u_m}(z)|\Bigg|u_1,\ldots u_{_{M_N}}}}+O(N^{-1/2})\\
    &=O(N^{-1/2}),
\end{align*}}
where the first inequality is from the triangle inequality and Hoeffding's inequality, and the last equality results from the Dvoretzky–Kiefer–Wolfowitz inequality. 
Note that the same analysis applies when $F_u(\ip{x,u})(1-F_u(\ip{x,u}))$ is replaced with $1/2-|1-F_u(\ip{x,u})|$ in \eqref{eqn::rpinf}. 
Assumption \ref{ass:consDepth} need not be satisfied for LTR depth since the sample ranks are already based on $\LTR(\cdot;F)$. 
To our knowledge, there does not exist results on the rates of convergence for (kernelized) spatial depth or modified band depth. 
\begin{theorem}
If Assumptions  \ref{ass:cont}-\ref{ass:med_d} hold, then for any $\delta>0$, it follows that 
$$\Pr\left(\widehat{\mathcal{W}}_{\scaledN}>\delta\right)\rightarrow 1, \text{ as } N\rightarrow\infty.$$
\label{thm:::alt}
\end{theorem}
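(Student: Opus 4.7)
The plan is to show $\widehat{\mathcal{W}}_{\scaledN}\to\infty$ in probability by identifying the probabilistic limit of each $\rhatbarj/N$ and using Assumption \ref{ass:med_d} to guarantee at least one of these limits differs from $1/2$. Starting from
\[
\widehat{R}_{ji}=\sum_{\ell,m}\ind{\D{X_{\ell m};F_{*,\scaledN}}\leq\D{X_{ji};F_{*,\scaledN}}},
\]
the first step is to replace the sample depth by the population depth $\D{\cdot;F_*}$ and define the corresponding ``population rank'' $\widetilde{R}_{ji}$; the target of this step is $\rhatbarj/N-\overline{\widetilde{R}}_j/N=o_P(1)$.

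This replacement step is the main obstacle, since the random uniform error
\[\epsilon_N\coloneqq\sup_{x\in\mathfrak{F}}\bigl|\D{x;F_{*,\scaledN}}-\D{x;F_*}\bigr|\]
appears inside an indicator. By Assumption \ref{ass:consDepth} together with Markov's inequality, $\epsilon_N=O_P(N^{-1/2})$, so on the high-probability event $\{\epsilon_N\leq N^{-1/4}\}$ a sample and population indicator can disagree only when $|\D{X_{\ell m};F_*}-\D{X_{ji};F_*}|\leq 2N^{-1/4}$. Conditioning on $X_{ji}$ and applying Assumption \ref{ass:cont} (Lipschitz continuity of the depth CDF) bounds this event's probability by $CN^{-1/4}$; averaging over $\ell,m,i$ yields the claim. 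The delicate point is that $\epsilon_N$ is random, so one must first reduce to a deterministic threshold via the high-probability event before invoking the Lipschitz bound.

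Next I would show $\overline{\widetilde{R}}_j/N\conp\pi_j$ with
\[\pi_j\coloneqq\sum_{\ell=1}^J\vartheta_\ell\Pr\bigl(\D{X_{\ell 1};F_*}\leq\D{X_{j 1};F_*}\bigr),\]
where $X_{\ell 1}$ and $X_{j 1}$ are taken independent (the self-comparison diagonal contributes $O(1/N_j)=o(1)$). As an average of indicators bounded by $1$, a direct second-moment calculation gives convergence of $\overline{\widetilde{R}}_j/N$ to its mean, and $N_\ell/N\to\vartheta_\ell$ delivers the limit $\pi_j$. Assumption \ref{ass:cont} forces the depths to be atomless, so $\Pr(\D{X_{\ell 1};F_*}\leq\D{X_{j 1};F_*})=1-\Pr(\D{X_{\ell 1};F_*}>\D{X_{j 1};F_*})$; a short algebraic rearrangement shows that Assumption \ref{ass:med_d} is equivalent to $\pi_k\neq 1/2$.

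To finish, combining the two steps gives $\rhatbarj/N-(N+1)/(2N)\conp \pi_j-1/2$, which is nonzero for $j=k$. Since $N_k/N\to\vartheta_k>0$, the $k$th summand of $\widehat{\mathcal{W}}_{\scaledN}$ grows like $12\,\vartheta_k(\pi_k-1/2)^2 N$, while every other summand is nonnegative. Hence $\widehat{\mathcal{W}}_{\scaledN}\to\infty$ in probability, so $\Pr(\widehat{\mathcal{W}}_{\scaledN}>\delta)\to 1$ for every $\delta>0$.
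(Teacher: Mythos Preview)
Your argument is correct and takes a genuinely different route from the paper's.

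The paper introduces the ``population--rank'' statistic $\mathcal{W}_{\scaledN}$ (same formula as $\widehat{\mathcal{W}}_{\scaledN}$ but with ranks $R_{ji}$ computed from $\D{\cdot;F_*}$), invokes Kruskal's classical result to get $\Pr(\mathcal{W}_{\scaledN}>\delta)\to 1$, and then proves the finer comparison $|\widehat{\mathcal{W}}_{\scaledN}-\mathcal{W}_{\scaledN}|=O_p(1)$. That last step is where the paper's work lies: it writes $\widehat{R}_{ji}=R_{ji}+\mathcal{E}_{ji}$, bounds $\E{}{\ind{A_{ji,\ell m}}},\E{}{\ind{B_{ji,\ell m}}}=O(N^{-1/2})$ from Assumptions~\ref{ass:cont}--\ref{ass:consDepth}, shows $\Varr{\widehat{R}_{ji}}/\Varr{R_{ji}}\to 1$, and applies the exchangeable CLT of Weber (1980) to both rank averages.

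You instead work at the cruder scale $N^{-1}\rhatbarj$, showing directly that $\rhatbarj/N\conp\pi_j$ and hence $\widehat{\mathcal{W}}_{\scaledN}/N\conp 12\sum_j\vartheta_j(\pi_j-1/2)^2>0$. Your depth--replacement step (truncating at the deterministic threshold $N^{-1/4}$, then using the Lipschitz bound from Assumption~\ref{ass:cont}) is the same idea the paper uses to control $\mathcal{E}_{ji}$, but you only need it in first moment; your second step is a weak law for a multi-sample $U$-statistic and needs no CLT. This is more elementary and yields the explicit linear growth rate of $\widehat{\mathcal{W}}_{\scaledN}$. What the paper's approach buys is the sharper intermediate statement $|\widehat{\mathcal{W}}_{\scaledN}-\mathcal{W}_{\scaledN}|=O_p(1)$, which is exactly what is needed later for the local-alternative analysis (the non-central $\rchi^2$ limit in Theorem~\ref{thm::sdla} is stated for $\mathcal{W}_{\scaledN}$). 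For Theorem~\ref{thm:::alt} alone, your route suffices and is shorter.
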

Theorem \ref{thm:::alt} is quite general in terms of the alternatives under which the test is consistent. 
The set of alternative hypotheses which induce consistency are contained completely in Assumption \ref{ass:med_d}. 
First, Assumption \ref{ass:med_d} implies that if there is a difference in covariance operator between groups, the difference is exhibited in the depth values (by at least one pair of groups, and not all).
Specifically, the median of $\D{X_{\ell1};F_*}-\D{X_{j1};F_*}$ is non-zero for some $\ell\neq j$. 
Assumption \ref{ass:med_d} has an additional requirement that says the differences between the groups do not perfectly `cancel' each other out. 
This is not an issue if $J=2$. 

Assumption \ref{ass:med_d} does not appear to be directly related to the covariance operator; as a result it is beneficial to demonstrate that changes in the covariance kernels produce, on average, a location difference in the depth values. 
As an example, we show this for the $L^2$-root depth and the random projection depth measures when $J=2$. 
The aim is to say that if two samples differ in covariance operator, then their depth values differ in location. 
Let the sample rank of $X_{ji}$ based on the true distribution $F_*$ be defined as
$$R_{j i}:=\#\left\{X_{\ell m}: \mathrm{D}\left(X_{\ell m} ; F_{*}\right) \leq \mathrm{D}\left(X_{j i} ; F_{*}\right),\ \ell\in\{1,\ldots,J\},\ m\in\{1,\ldots,N_j\}\right\}.$$
Then we define $\mathcal{W}_{\scaledN}$ as the test statistic based on these ranks, which are unknown except in the special case of the $L^2$-root depth-based ranks. 

\begin{theorem}[$L^2$-root Depth]
Suppose Assumption \ref{ass:cont} holds, $J=2$, and that
\begin{equation}
\E{} {\norm{X_{11}}^2-\norm{X_{21}}^2}\neq 0 \text{ implies } \med\left(\norm{X_{21}}^2- \norm{X_{11}}^2\right)\neq 0.
\label{eqn::sdc}
\end{equation}
Then the test based on $\mathcal{W}_{\scaledN}$, using ranks based on $\LTR$ depth, is consistent in the sense of Theorem \ref{thm:::alt} under alternatives of the form $H_1\colon \norm{\kerrO_1}_{TR}\neq\norm{\kerrO_2}_{TR},$ where $\norm{\cdot}_{TR}$ refers to the trace norm. 

\label{thm::SD}
\end{theorem}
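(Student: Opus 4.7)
The plan is to reduce the theorem to verifying Assumption \ref{ass:med_d}, after which (a suitable true-ranks version of) Theorem \ref{thm:::alt} applies. The reduction hinges on two elementary identities: that $\LTR(\cdot;F_*)$ is a strictly decreasing function of $\norm{\cdot}^2$ under the zero-mean assumption, and that $\norm{\kerrO_j}_{TR}=\E{}{\norm{X_{j1}}^2}$ for each group.

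For the first identity, since each $F_j$ has mean $\mathbf{0}$, so does $F_*$, and for every $x\in\mathfrak{F}$,
\begin{equation*}
\E{F_*}{\norm{x-X}^2}=\norm{x}^2-2\ip{x,\E{F_*}{X}}+\E{F_*}{\norm{X}^2}=\norm{x}^2+c,
\end{equation*}
with $c=\sum_{j=1}^{J}\vartheta_j\E{}{\norm{X_{j1}}^2}$ constant in $x$. Hence $\LTR(x;F_*)=(1+\sqrt{\norm{x}^2+c})^{-1}$ is strictly decreasing in $\norm{x}^2$, so $\D{X_{\ell 1};F_*}>\D{X_{k1};F_*}$ if and only if $\norm{X_{\ell 1}}^2<\norm{X_{k1}}^2$. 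For the second identity, expanding $\Tr(\kerrO_j)=\sum_n\ip{\kerrO_j e_n,e_n}$ in any orthonormal basis $\{e_n\}$ of $\ltwo([0,1],\mathscr{B},\mu)$ and applying Parseval gives $\norm{\kerrO_j}_{TR}=\Tr(\kerrO_j)=\E{}{\norm{X_{j1}}^2}$.

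I would then verify Assumption \ref{ass:med_d} with $k=1$. The alternative $\norm{\kerrO_1}_{TR}\neq\norm{\kerrO_2}_{TR}$ yields $\E{}{\norm{X_{11}}^2-\norm{X_{21}}^2}\neq 0$, which by hypothesis \eqref{eqn::sdc} gives $\med(\norm{X_{21}}^2-\norm{X_{11}}^2)\neq 0$, i.e., $\Pr(\norm{X_{11}}^2<\norm{X_{21}}^2)\neq 1/2$. Letting $X_{11}'$ denote an independent copy of $X_{11}$, the continuity implied by Assumption \ref{ass:cont} makes $\D{X_{11};F_*}-\D{X_{11}';F_*}$ continuous and symmetric about $0$, so $\Pr(\D{X_{11};F_*}>\D{X_{11}';F_*})=1/2$. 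Combining these observations,
\begin{equation*}
\sum_{j=1}^{2}\vartheta_j\Pr(\D{X_{11};F_*}>\D{X_{j1};F_*})=\frac{\vartheta_1}{2}+\vartheta_2\Pr(\norm{X_{11}}^2<\norm{X_{21}}^2),
\end{equation*}
which equals $1/2$ only if $\Pr(\norm{X_{11}}^2<\norm{X_{21}}^2)=1/2$, a contradiction. Hence Assumption \ref{ass:med_d} holds.

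Finally, I would invoke Theorem \ref{thm:::alt}. Because the ordering of $\{\LTR(X_{ji};F_*)\}$ coincides with the reverse ordering of $\{\norm{X_{ji}}^2\}$---the constant $c$ plays no role---the true ranks $R_{ji}$ are directly computable and $\mathcal{W}_{\scaledN}$ is therefore observable. Consequently Assumption \ref{ass:consDepth}, which is used in the proof of Theorem \ref{thm:::alt} only to approximate $\widehat{\mathcal{W}}_{\scaledN}$ by $\mathcal{W}_{\scaledN}$, is not required here. With Assumptions \ref{ass:cont} and \ref{ass:med_d} in hand, the remainder of the proof of Theorem \ref{thm:::alt} applies verbatim and yields $\Pr(\mathcal{W}_{\scaledN}>\delta)\to 1$ for every $\delta>0$. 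The principal obstacle is essentially bookkeeping; the delicate step of turning a difference in expectations into a difference in medians has already been absorbed into hypothesis \eqref{eqn::sdc}.
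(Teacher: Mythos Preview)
Your proposal is correct and follows essentially the same route as the paper: both reduce to verifying Assumption \ref{ass:med_d} by exploiting that $\LTR(\cdot;F_*)$ is a strictly decreasing function of $\norm{\cdot}^2$ under the zero-mean assumption, and that $\norm{\kerrO_j}_{TR}=\E{}{\norm{X_{j1}}^2}$. Your treatment is in fact more explicit than the paper's---you spell out the within-group term $\Pr(\D{X_{11};F_*}>\D{X_{11}';F_*})=1/2$ and compute the weighted sum in \eqref{eqn:assu} directly, whereas the paper stops after establishing $\E{}{\norm{X_{11}}^2-\norm{X_{21}}^2}=\norm{\kerrO_1}_{TR}-\norm{\kerrO_2}_{TR}\neq 0$ and leaves the passage through \eqref{eqn::sdc} to Assumption \ref{ass:med_d} implicit; your remark that Assumption \ref{ass:consDepth} is unnecessary here also matches the paper's own observation in the text preceding Theorem \ref{thm:::alt}.
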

Note that, by assumption, $\kerrO_j$ are trace class since the observed processes are mean square continuous; the kernel is continuous. 
Theorem \ref{thm::SD} shows that the mean depth values differ if the trace norm of the covariance operators differ. 
This alternative hypothesis is equivalent to $\sum_{k=1}^\infty\lambda_{k,1}\neq\sum_{k=1}^\infty\lambda_{k,2},$
where $\{\lambda_{k,1}\}_{k=1}^\infty$ and $ \{\lambda_{k,2}\}_{k=1}^\infty$ are the decreasing sequences of eigenvalues resulting from the singular value decomposition of $\kerr_1$ and $\kerr_2$, respectively. 
Clearly if the covariance operators are equal then $\norm{\kerrO_1}_{TR}=\norm{\kerrO_2}_{TR}$ and \eqref{eqn::sdc} is not satisfied.

It is useful to mention a few cases where \eqref{eqn::sdc} is surely satisfied. 
If, for all $j$, $\norm{X_{ji}}^2$ has a symmetric distribution then \eqref{eqn::sdc} is satisfied. 
If $$\norm{\kerrO_1}_{TR}-\norm{\kerrO_2}_{TR}\geq \Var{\norm{X_{11}}^2}^{\frac{1}{2}}+\Var{\norm{X_{21}}^2}^{\frac{1}{2}}$$
then \eqref{eqn::sdc} is satisfied. 
If the distribution of the squared norms is unimodal then we only need the size of the change to satisfy 
$$\norm{\kerrO_1}_{TR}-\norm{\kerrO_2}_{TR}\geq \left(\frac{3}{5}\left(\Var{\norm{X_{11}}^2}+\Var{\norm{X_{21}}^2}\right)\right)^{\frac{1}{2}},$$
in order for \eqref{eqn::sdc} to be satisfied. 
If $X_{ji}$ are Gaussian processes $\mathcal{GP}(0,\kerrO_j)$, then we have that 
$$\norm{X_{11}}^2=\sum_{k=1}^\infty \lambda_{k,1} V_k \qquad \text{and} \qquad \norm{X_{21}}^2=\sum_{k=1}^\infty \lambda_{k,2} V_k',\ \text{ where } V_k,V_k'\overset{iid}{\sim} \chi^2_1.$$
It follows that the random variables $\norm{X_{11}}^2,\ \norm{X_{21}}^2$ are stochastically ordered, implying that \eqref{eqn::sdc} is satisfied. 
The next theorem gives a similar result for the random projection depth. 
\begin{theorem}[Random Projection Depth]
If $X\sim F_*$, let $F_{*,u}$ be the distribution of $\ip{X,u}$. 
Suppose that $u_1,\ldots,u_{_{M_N}}$ are drawn uniformly from some compact set $\mathcal{C}$ and that $M_N=O(N)$. 
Assume that for any $u$, $F_{*,u}$ is three times differentiable, and that the first three derivatives of $F_{*,u}$ are bounded functions in $u$. 
Suppose that Assumption \ref{ass:cont} holds for $\RP_\infty$, $J=2$, $\E{}{\norm{X_{ji}}^3}<\infty$,
and that 
\begin{equation}
\E{} {\RP_\infty(X_{11};F_{*})-\RP_\infty(X_{21};F_{*})}\neq 0 \implies  \med{\left(\RP_\infty(X_{11};F_{*})-\RP_\infty(X_{21};F_{*})\right)}\neq 0.
\label{eqn::rpdc}
\end{equation}
Then FKWC test based on $\widehat{\mathcal{W}}_{\scaledN}$ using $\RP_{M_N}$ with $\D{z;F}=F(z)(1-F(z))$ is consistent in the sense of Theorem \ref{thm:::alt} under alternatives of the form  $$H_1\colon\int_\mathcal{C}\mathcal{H}(F_{*,u})\ip{\kerrO_1 u,u}d\nu(u) +\mathcal{R}_1\neq\int_\mathcal{C}\mathcal{H}(F_{*,u})\ip{\kerrO_2 u,u}d\nu(u)+\mathcal{R}_2,$$
where $\mathcal{H}$ is a function of a univariate distribution function such that $\mathcal{H}(F)\coloneqq \frac{1}{2}f'(0)-(F(0)f'(0)-f^2(0)),$
and the remainder term $\mathcal{R}_j$ can be written as
$$\mathcal{R}_j=\frac{1}{6}\int_\mathcal{C}\E{}{\bigintssss_{0}^{\ip{X_{j1},u}} \left(f_{u, z}^{(2)}(t)(1-2F_{u, *}(t))- 6 f_{u, *}(t) f_{u, *}^{(1)}(t)\right)\left(\ip{X_{j1},u}-t\right)^{3} d t}d\nu(u).$$
\label{thm::RPD}
\end{theorem}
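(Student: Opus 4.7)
The plan is to derive this as a corollary of Theorem \ref{thm:::alt} by verifying its three assumptions for the random projection depth $\RP_{M_N}$. Assumption \ref{ass:cont} is imposed directly in the hypotheses (applied to $\RP_\infty$), and Assumption \ref{ass:consDepth} was already established for $\RP_{M_N}$ in the chain of inequalities preceding the theorem statement using Hoeffding together with Dvoretzky–Kiefer–Wolfowitz. The substantive content is therefore to show that the stated alternative forces Assumption \ref{ass:med_d}, which will then give consistency via Theorem \ref{thm:::alt}.

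When $J=2$, I would unwind Assumption \ref{ass:med_d} at $k=1$: by continuity of $F_*$ we have $\Pr(\RP_\infty(X_{11};F_*)>\RP_\infty(X_{11}';F_*))=\tfrac12$ for independent copies, so the condition collapses to $\Pr(\RP_\infty(X_{11};F_*)>\RP_\infty(X_{21};F_*))\neq\tfrac12$, equivalently $\med(\RP_\infty(X_{11};F_*)-\RP_\infty(X_{21};F_*))\neq 0$. By hypothesis \eqref{eqn::rpdc}, this median condition is implied by $\E{}{\RP_\infty(X_{11};F_*)}\neq\E{}{\RP_\infty(X_{21};F_*)}$, so the task reduces to showing that this expectation gap is precisely the quantity appearing in the stated alternative.

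To produce the gap, I would use Fubini to write $\E{}{\RP_\infty(X_{j1};F_*)}=\int_\mathcal{C}\E{}{g_u(\ip{X_{j1},u})}\,d\nu(u)$ with $g_u(z)=F_{*,u}(z)(1-F_{*,u}(z))$, then Taylor expand $g_u$ around zero to second order in integral-remainder form. The constant term $g_u(0)$ does not depend on $j$ and will cancel in the difference. The linear term has coefficient $g_u'(0)=f_{*,u}(0)(1-2F_{*,u}(0))$, and its expectation vanishes because $\E{}{\ip{X_{j1},u}}=\ip{\E{}{X_{j1}},u}=0$. The quadratic term gives $\tfrac12 g_u''(0)\,\E{}{\ip{X_{j1},u}^2}=\mathcal{H}(F_{*,u})\,\ip{\kerrO_j u,u}$ after computing $g_u''(0)/2$ and identifying $\E{}{\ip{X_{j1},u}^2}$ with $\ip{\kerrO_j u,u}$ from the definition of the covariance operator. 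The expected remainder gives exactly the $\mathcal{R}_j$ of the statement (involving $g_u'''(t)=f_{*,u}''(t)(1-2F_{*,u}(t))-6f_{*,u}(t)f_{*,u}'(t)$). Subtracting the $j=1,2$ expressions yields the left–right difference of the alternative, so under the stated alternative the expectation gap is nonzero, hence Assumption \ref{ass:med_d} holds, and Theorem \ref{thm:::alt} concludes.

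The main obstacle I anticipate is the analytic bookkeeping around the remainder: justifying the Fubini swap and bounding $\E{}{|R_u(\ip{X_{j1},u})|}$ uniformly in $u$. This should work out using compactness of $\mathcal{C}$, the uniform-in-$u$ boundedness of $f_{*,u},f_{*,u}',f_{*,u}''$ assumed in the theorem (which bounds $g_u'''$ uniformly), and the moment hypothesis $\E{}{\norm{X_{ji}}^3}<\infty$, which controls $\E{}{|\ip{X_{j1},u}|^3}\le \E{}{\norm{X_{j1}}^3}\sup_{u\in\mathcal{C}}\norm{u}^3<\infty$. A secondary point is simply recognizing $g_u''(0)/2$ as the functional $\mathcal{H}(F_{*,u})$; once those technicalities are handled the identification with the alternative hypothesis is immediate.
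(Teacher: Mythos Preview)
Your proposal is correct and follows essentially the same route as the paper: reduce to Assumption \ref{ass:med_d} via the median--mean implication \eqref{eqn::rpdc}, then identify $\E{}{\RP_\infty(X_{j1};F_*)}$ by a Taylor expansion of $z\mapsto F_{*,u}(z)(1-F_{*,u}(z))$ about $0$, using $\E{}{\ip{X_{j1},u}}=0$ and $\E{}{\ip{X_{j1},u}^2}=\ip{\kerrO_j u,u}$. The only cosmetic difference is that the paper expands $F_{*,u}$ and $F_{*,u}^2$ separately and then subtracts, whereas you expand the product $F_{*,u}(1-F_{*,u})$ directly; both yield the same quadratic coefficient $\mathcal{H}(F_{*,u})$ and the same third-derivative integrand $f_{*,u}^{(2)}(1-2F_{*,u})-6f_{*,u}f_{*,u}^{(1)}$ in the remainder.
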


Note that we expect $\mathcal{R}_j$ to be small because $X_{j1}$ have zero mean. 
In terms of $\mathcal{H}(F_{u,*})$, as $F_{u,*}$ approaches symmetry, $\mathcal{H}(F_{u,*})$ approaches $f_{u,*}^2(0)$, the squared height of the projected density at 0. 
Symmetry of the projected distributions implies a form of symmetry on $F_*$. 
This is a natural definition of symmetry, in the sense that it is analogous to halfspace symmetry in the multivariate setting. 
This discussion implies that this method will work better for $F_*$ that are symmetric. 
Another benefit of using projections is the fact that projections characterize the the distribution $F_{*}$ \citep[see][Chapter 7]{hsing_eubank_2015}. 
Therefore, provided we have ``enough'' directions, the test based on $\RP$ will be able to detect arbitrary differences in covariance kernels. 
Theorem \ref{thm::RPD} therefore, shows that all depth functions may not be equivalent in terms of which types of alternatives they can detect. 


To give more insight into the behaviour of the test statistic, we provide some analysis under local alternatives. 
Suppose Assumptions \ref{ass:consDepth} and \ref{ass:med_d} hold. 
Following \cite{Fan2011}, $\mathcal{W}_{\scaledN}$ is approximately distributed as a non-central chi-squared random variable $\rchi^2_{\scaled{J}-1}(\tau_{\scaledN})$ with non-centrality parameter
\begin{equation}
    \tau_{\scaledN}=\frac{12}{N(N+1)} \sum_{j=1}^{J} N_{j}\left\{N \sum_{k \neq j} \vartheta_{k}\left(\Pr\left(\D{X_{j 1};F_*} \leq \D{X_{k 1};F_*}\right)-\frac{1}{2}\right)\right\}^{2}.
    \label{eqn::tau}
\end{equation}
For $L^2$-root depth, we are able to compute $\Pr(\D{X_{k 1};F_*} \leq \D{X_{j 1};F_*}),$ which is the same quantity as $\Pr(\norm{X_{k1}} \leq \norm{X_{j1}}).$ 
Therefore one can compute the power and consequently sample sizes for any assumed $F_*$, using
$$\Pr\left(\sum_{m=1}^p\left[\norm{X_{k1}^{(m)}}-\norm{X^{(m)}_{j1}}\right] \leq 0\right).$$
This could of course be done by Monte Carlo simulation for complicated models.   
\begin{theorem}[Local Alternative Analysis-$L^2$-root depth]
Suppose that for all $i,j,k,\ell$ 
\begin{equation}
\norm{X_{ji}}^2\eqd \left[\frac{\sqrt{N}+\delta_k}{\sqrt{N}+\delta_j}\right]
\norm{X_{k\ell}}^2 \sim G,
\label{eqn::alt-spat2}
\end{equation}
where $G$ has a continuously differentiable density $g$ for some real-valued $\delta_j$s. 
Let $\overline{\delta}=\sum_{j=1}^J \vartheta_j\delta_j$ then, under the $\LTR$ depth ranks, $$\mathcal{W}_{\scaledN}\cond\rchi^2_{\scaled{J}-1}(\tau)\text{ with }\tau=12 \left(\int_{\mathbb{R}} z g(z)^{2} d z\right)^{2} \sum_{j=1}^{J} \vartheta_{j}\left(\delta_{j}-\overline{\delta}\right)^{2}.$$
 \label{thm::sdla}
\end{theorem}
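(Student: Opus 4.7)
The plan is to combine the noncentral chi-squared approximation described around (\ref{eqn::tau}) with a scale-family Taylor expansion of the pairwise probabilities that drive the noncentrality parameter. The essential simplification is that, under the zero-mean assumption and the functional form of $\LTR$,
\begin{equation*}
\E{F_*}{\norm{x-X}^{2}}=\norm{x}^{2}+\E{F_*}{\norm{X}^{2}},
\end{equation*}
so $\LTR(\cdot;F_*)$ is a strictly decreasing function of $\norm{\cdot}^{2}$. Consequently, for independent $X_{j1}$ and $X_{k1}$, the event $\{\D{X_{j1};F_*}\leq \D{X_{k1};F_*}\}$ coincides with $\{\norm{X_{j1}}^{2}\geq\norm{X_{k1}}^{2}\}$, and the entire problem reduces to studying these pairwise probabilities under the scale model (\ref{eqn::alt-spat2}).

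Next, I would adopt the scale representation $\norm{X_{ji}}^{2}\eqd Y/\beta_j^{(N)}$ with $Y\sim G$ and $\beta_j^{(N)}=\sqrt{N}/(\sqrt{N}+\delta_j)$, which is consistent with (\ref{eqn::alt-spat2}). Setting $r_{jk}=\beta_k^{(N)}/\beta_j^{(N)}=(\sqrt{N}+\delta_j)/(\sqrt{N}+\delta_k)=1+(\delta_j-\delta_k)/(\sqrt{N}+\delta_k)$ and conditioning on an independent copy $Y'\sim G$ gives
\begin{equation*}
\Pr\!\left(\norm{X_{j1}}^{2}\geq\norm{X_{k1}}^{2}\right)=\int_0^\infty G(r_{jk}\,y)\,g(y)\,dy.
\end{equation*}
Taylor expanding $G(r_{jk}y)$ in $r_{jk}$ around $1$ and using the elementary identity $\int_0^\infty G(y)g(y)\,dy=1/2$ then yields
\begin{equation*}
\Pr\!\left(\norm{X_{j1}}^{2}\geq\norm{X_{k1}}^{2}\right)-1/2 = \frac{\delta_j-\delta_k}{\sqrt{N}+\delta_k}\int_0^\infty y\,g(y)^{2}\,dy+O(N^{-1}),
\end{equation*}
uniformly in the finitely many pairs $(j,k)$.

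Finally, I would plug this into (\ref{eqn::tau}). Using the identity $\sum_{k\neq j}\vartheta_k(\delta_j-\delta_k)=\delta_j-\overline{\delta}$ together with $(\sqrt{N}+\delta_k)^{-1}=N^{-1/2}+O(N^{-1})$, one obtains $N\sum_{k\neq j}\vartheta_k(\Pr-1/2) = \sqrt{N}(\delta_j-\overline{\delta})\int_0^\infty yg(y)^{2}\,dy+O(1)$. Squaring, weighting by $N_j/(N(N+1))$, and summing over $j$, the cross terms of order $O(\sqrt{N})$ vanish at rate $N^{-1/2}$ while the leading part converges under $N_j/N\to\vartheta_j$ to $\tau=12(\int z g(z)^{2}\,dz)^{2}\sum_{j=1}^{J}\vartheta_j(\delta_j-\overline{\delta})^{2}$. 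The desired convergence $\mathcal{W}_{\scaledN}\cond\rchi^2_{\scaled{J}-1}(\tau)$ then follows from the noncentral chi-squared approximation of the Kruskal--Wallis statistic attributed to \cite{Fan2011}, which was invoked immediately before the theorem statement. The main obstacle I anticipate is controlling the Taylor remainder uniformly over the full integration range $[0,\infty)$ rather than just on compact sets, since the integration variable ranges over the entire support of $G$; continuous differentiability of $g$ combined with the integrability conditions built into the scale family should suffice, but the uniformity across pairs $(j,k)$ and over the unbounded range needs explicit verification.
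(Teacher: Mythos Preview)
Your proposal is correct and follows essentially the same route as the paper: reduce $\LTR$ ranks to ranks of $\norm{X_{ji}}^2$, exploit the scale-family structure to write the pairwise probabilities as $\int G(r_{jk}y)\,g(y)\,dy$, Taylor expand in $r_{jk}$ about $1$, and substitute into (\ref{eqn::tau}) before passing to the limit via \cite{Fan2011}. Your bookkeeping (the identity $\sum_{k\neq j}\vartheta_k(\delta_j-\delta_k)=\delta_j-\overline\delta$ and the expansion $(\sqrt{N}+\delta_k)^{-1}=N^{-1/2}+O(N^{-1})$) is slightly more explicit than the paper's, and your closing caveat about the uniform-in-$y$ control of the Taylor remainder is a technical point the paper also leaves implicit.
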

Note that \eqref{eqn::alt-spat2} holds when $ \kerrO_j= \kerrO_0\left[1+N^{-1/2} \delta_j\right]$ and $\norm{X_{ji}}^2$ form a scale family. For example, if $X_{ji}\sim \mathcal{GP}(0,\kerrO_0\left[1+N^{-1/2}\delta_j\right])$, then Theorem \ref{thm::sdla} is applicable. 


Due to the fact that Euclidean spaces are Hilbert spaces, the previous results provide some consequences for similar methods based on depth-based ranks in the multivariate setting. 
For example, \citep{Chenouri2020DD,Ramsay2019b} provide methodologies for detecting single and multiple change-points in the covariance matrix of multivariate data, based on data depth ranks.
Theorem \ref{thm::SD} provides justification for assuming the hypothesis of Theorem 2 of \citep{Chenouri2020DD} as well as for Assumption 4 of \citep{Ramsay2019b}, when the ranks are based on $L^2$-root depth. 
Note that the definition of spatial depth in \citep{Chenouri2020DD} provides equivalent ranks to those of $L^2$-root depth. 
Theorem \ref{thm::SD} then implies that the methods of \citep{Chenouri2020DD,Ramsay2019b} can detect changes in the sum of the eigenvalues of the covariance matrix. 
Similarly, Theorem \ref{thm::RPD} provides justification for assuming the hypothesis of Theorem 2 of \citep{Chenouri2019} and Assumption 4 of \citep{Ramsay2019b} under the multivariate depth studied by \cite{Cuevas2009}. 
\cite{Liu2006} provides a $k$-sample test for the covariance matrix of multivariate data. 
Theorems \ref{thm::SD}, \ref{thm::RPD} and \ref{thm::sdla} give analogous results for this multivariate $k$-sample test. 

\section{Simulation results}\label{sec::sim}
%
In this section we evaluate the finite sample performance of the FKWC tests. 
We compare the performance of the tests using the different depth functions discussed in Section \ref{sec::dep}, as well as the effect of the percentile modification discussed in Section \ref{sec::meth}. 
We further compare the FKWC test against seven other tests: the test of \cite{Boente2018} $Boen$, the $L^2$-norm tests \citep{Guo2016} $Tmax,\ L2nv,\ L2br,\ L2rp$ and the ANOVA inspired tests of \cite{Guo2018} $GPFnv,\ GPFrp,\  Fmax$. 
For the test of \cite{Boente2018} we used 10 principle components and 5000 bootstrap samples. 
For the tests of \cite{Guo2016, Guo2018} that required permutations, we used 1000 permutations. 

We simulated data from both infinite and finite dimensional models, we start with the infinite dimensional models. 
For the infinite dimensional models, we simulated observations from $J=2$ and $J=3$ samples. 
Sample sizes of $N_1=N_2=50,\ 100 $ and 250 for the two sample case were used. 
For the three sample, we used $N_1=N_2=N_3=50,$ and 100. 
Since the results of the three sample case were so similar to those of the two sample case, we only report the results of the two sample case. 
In each infinite dimensional case, the data were sampled from either a Gaussian process $\mathcal{GP}$, a student-t process with one degree of freedom $t_1$, or a skewed Gaussian process $\mathcal{SG}$. 
For the infinite dimensional runs we used a squared exponential covariance kernel 
$$\kerr_{}(s,t; \alpha,\beta)=\beta\ \exp\left(\frac{-(s-t)^2}{2\alpha^2}\right),$$
and the sample differences were controlled via $\alpha$ and $\beta$.

\begin{figure}[t]
\begin{minipage}[c]{.49\textwidth} 
    \centering
    \includegraphics[width=0.9\textwidth]{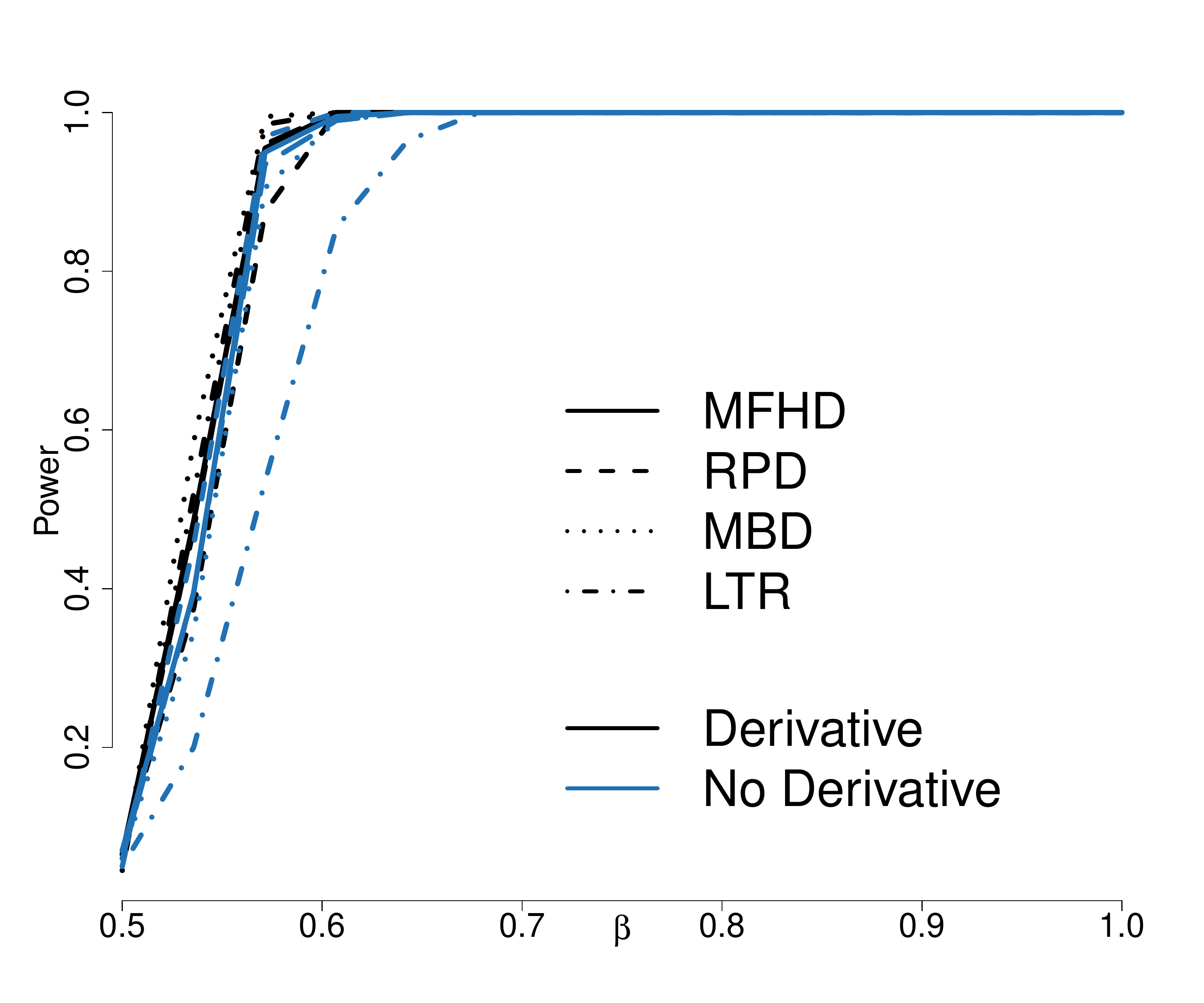}
    \caption*{(a)}
\end{minipage}
\begin{minipage}[c]{.49\textwidth} 
    \centering
    \includegraphics[width=0.9\textwidth]{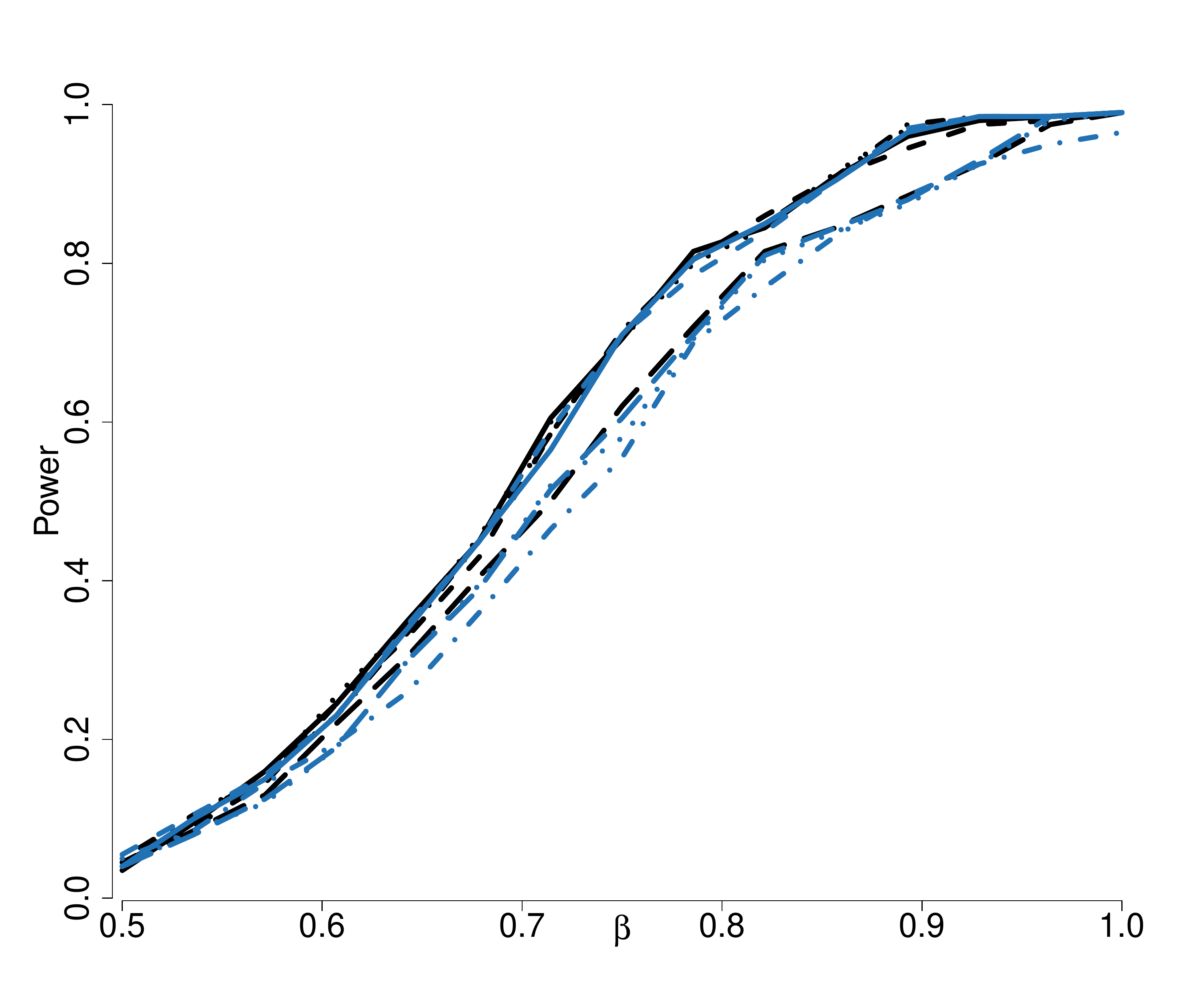}
    \caption*{(b)}
\end{minipage}
\hfill\newline
\begin{minipage}[c]{.49\textwidth} 
    \centering
    \includegraphics[width=0.9\textwidth]{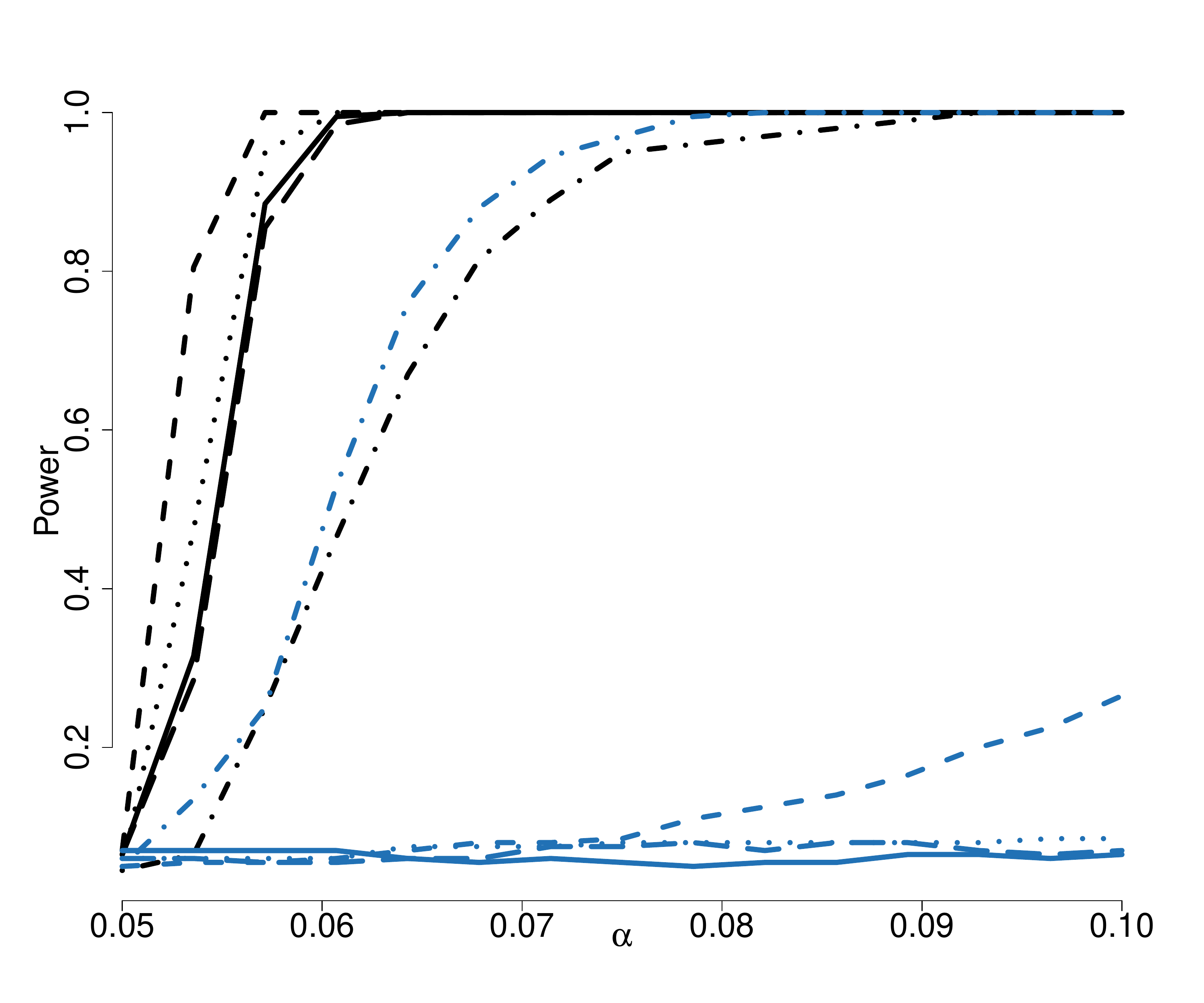}
    \caption*{(c)}
\end{minipage}
\begin{minipage}[c]{.49\textwidth} 
    \centering
    \includegraphics[width=0.9\textwidth]{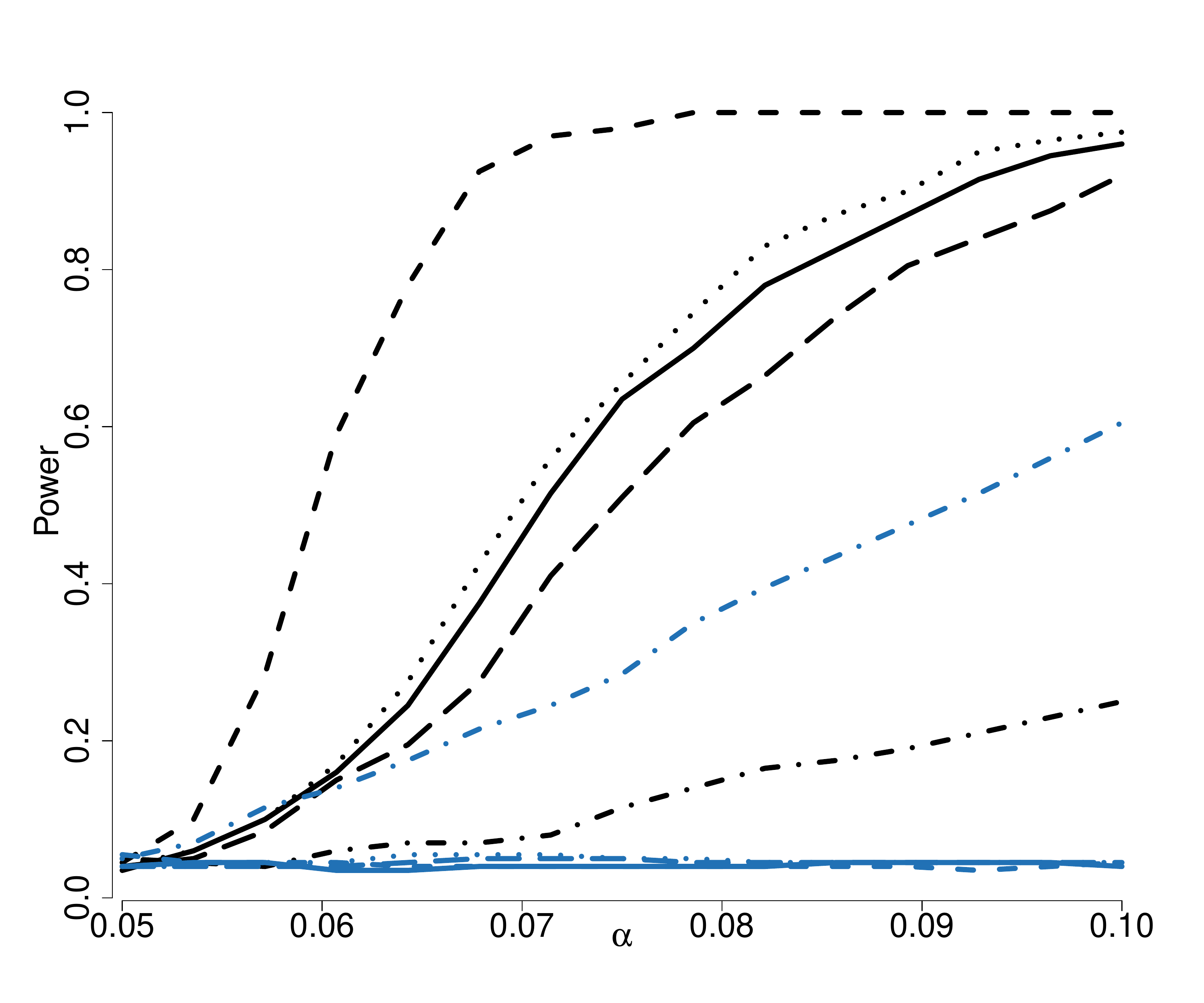}
    \caption*{(d)}
\end{minipage}
\caption{Empirical power curves for the tests based on $\widehat{\mathcal{W}}_{\scaledN}$, where the depth values were calculated with and without the derivatives. The underlying models were (a) $\mathcal{GP}$ samples, as $\beta$ varies with $\alpha=0.05$ (b) $t_1$ samples, as $\beta$ varies with $\alpha=0.05$ (c) $\mathcal{GP}$ samples, as $\alpha$ varies with $\beta=0.5$ (d) $t_1$ samples, as $\alpha$ varies with $\beta=0.5$. }
\label{fig::DvND}
\end{figure}
Figure \ref{fig::DvND} shows the empirical power curves for $N_j=250$ as $\beta$ and $\alpha$ vary under the Student-t and Gaussian processes. 
The skewed Gaussian results were similar to that of the Gaussian and so they are not included here. 
We can draw the following conclusions regarding the comparison of the FKWC tests with and without the derivatives. Including derivatives generally led to better performance, most notably when $\alpha$ differed between the two groups. Shape differences cannot generally be detected by the FKWC tests that do not include the derivatives; notice the flat curves in the bottom row of Figure \ref{fig::DvND}. The FKWC test with $\RP'$-based ranks appears to have the best performance. Similar relationships were seen across all sample sizes and under the percentile modification.

\begin{figure}
\begin{minipage}[c]{0.49\textwidth}
\centering
    \includegraphics[width=\textwidth]{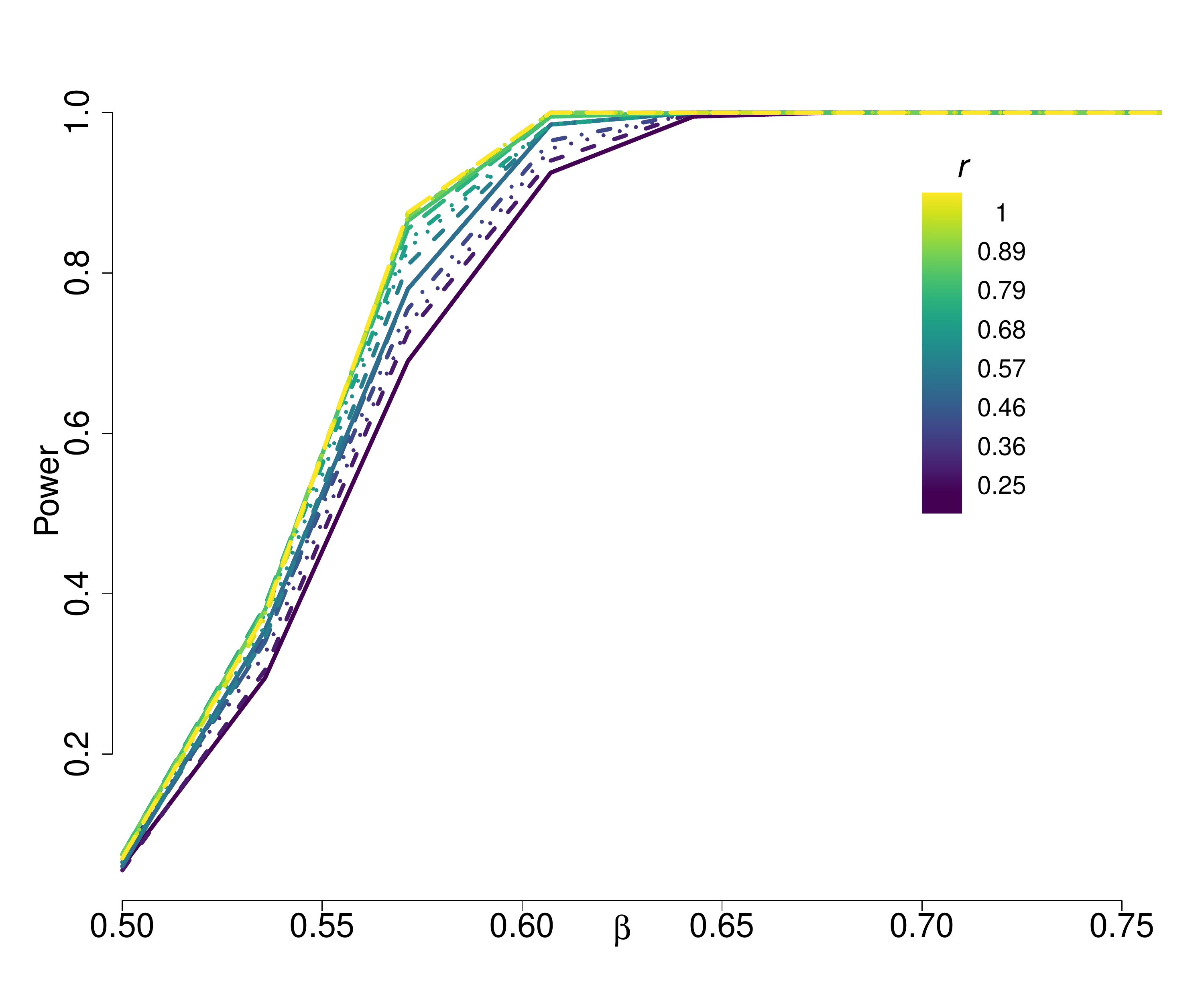}
       \caption*{(a)}
\end{minipage}
\begin{minipage}[c]{0.49\textwidth}
\centering
    \includegraphics[width=\textwidth]{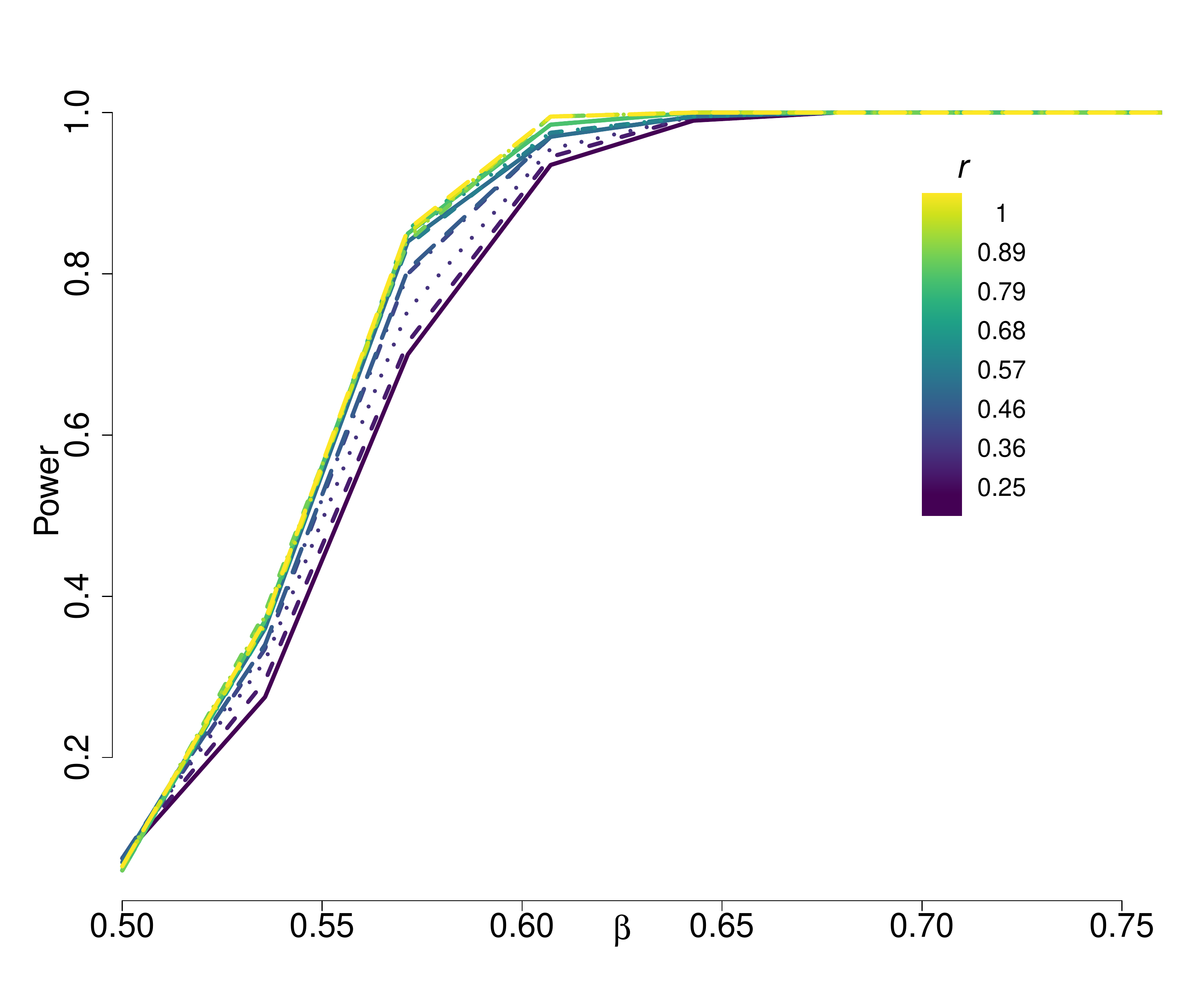}
    \caption*{(b)}
\end{minipage}
    \caption{Power curves for $\widehat{\mathcal{M}}_{\scaledN,r}$ for different values of $r$, as $\beta$ varies. $\RP'$ ranks were used to compute $\widehat{\mathcal{M}}_{\scaledN,r}$ and the underlying distributions were Gaussian (a) and Skewed Gaussian (b). Recall that $1-r$ is the percentage of central observations removed in the percentile modification. Note that there is a slight improvement over $\widehat{\mathcal{W}}_{\scaledN}$ in the Gaussian case but this effect vanishes under skewness.}
    \label{fig::PM_plot}  
\end{figure}

In terms of comparing $\widehat{\mathcal{M}}_{\scaledN,r}$ versus $\widehat{\mathcal{W}}_{N}$, in general there was not a large difference between the two tests. 
Figure \ref{fig::PM_plot} shows the power curves of $\widehat{\mathcal{M}}_{\scaledN,r}$ for different levels of $r$ as $\beta$ varies. 
$\RP'$ ranks were used to compute $\widehat{\mathcal{M}}_{\scaledN,r}$ and the underlying distributions were Gaussian (a) and Skewed Gaussian (b). 
It is useful to note that $\widehat{\mathcal{M}}_{\scaledN,1}=\widehat{\mathcal{W}}_{\scaledN}$. 
We observe that for $r\in (0.7,0.8)$ there is a slight improvement in power for the Gaussian case when the covariance differences between the samples are small. 
Outside of the Gaussian case, we observed that $\widehat{\mathcal{W}}_{\scaledN}$ was superior to $\widehat{\mathcal{M}}_{\scaledN,r}$. 
Similar results were seen at different sample sizes and different depth functions. 
We do not recommend using the percentile modification when the data is skewed or contains outliers, but it does provide some small additional power in the Gaussian case. 

We now compare the FKWC tests to some competing tests. 
As the results based on $\widehat{\mathcal{M}}_{\scaledN,r}$ are very similar to those of $\widehat{\mathcal{W}}_{\scaledN}$, we omit the results for $\widehat{\mathcal{M}}_{\scaledN,r}$. 
Similarly, graphically, we only compare the FKWC tests that include the derivative information. 
Table \ref{tab::null} contains the empirical sizes of each test and Figure \ref{fig:ARes} shows the power of the FKWC tests alongside the power of the competing tests. 
We can draw the following conclusions:
\begin{itemize}
    \item The sizes of the competing tests are slightly smaller than those of the FKWC tests when the processes are Gaussian and the the sample size is small ($N1=N2=50$).  
   \item The sizes of the FKWC tests are comparable to those of the competing tests when the processes are Skewed Gaussian. 
    \item The competing tests do not work well in the heavy tailed scenario. 
    \item Under these models, the FKWC tests have a higher power than the competing tests.
    \item The Boen test performs poorly in these conditions, perhaps because it is based on a finite number of principle components. 
\end{itemize}

\begin{figure}[t!]
\begin{minipage}[c]{.49\textwidth} 
    \centering
    \includegraphics[width=0.9\textwidth]{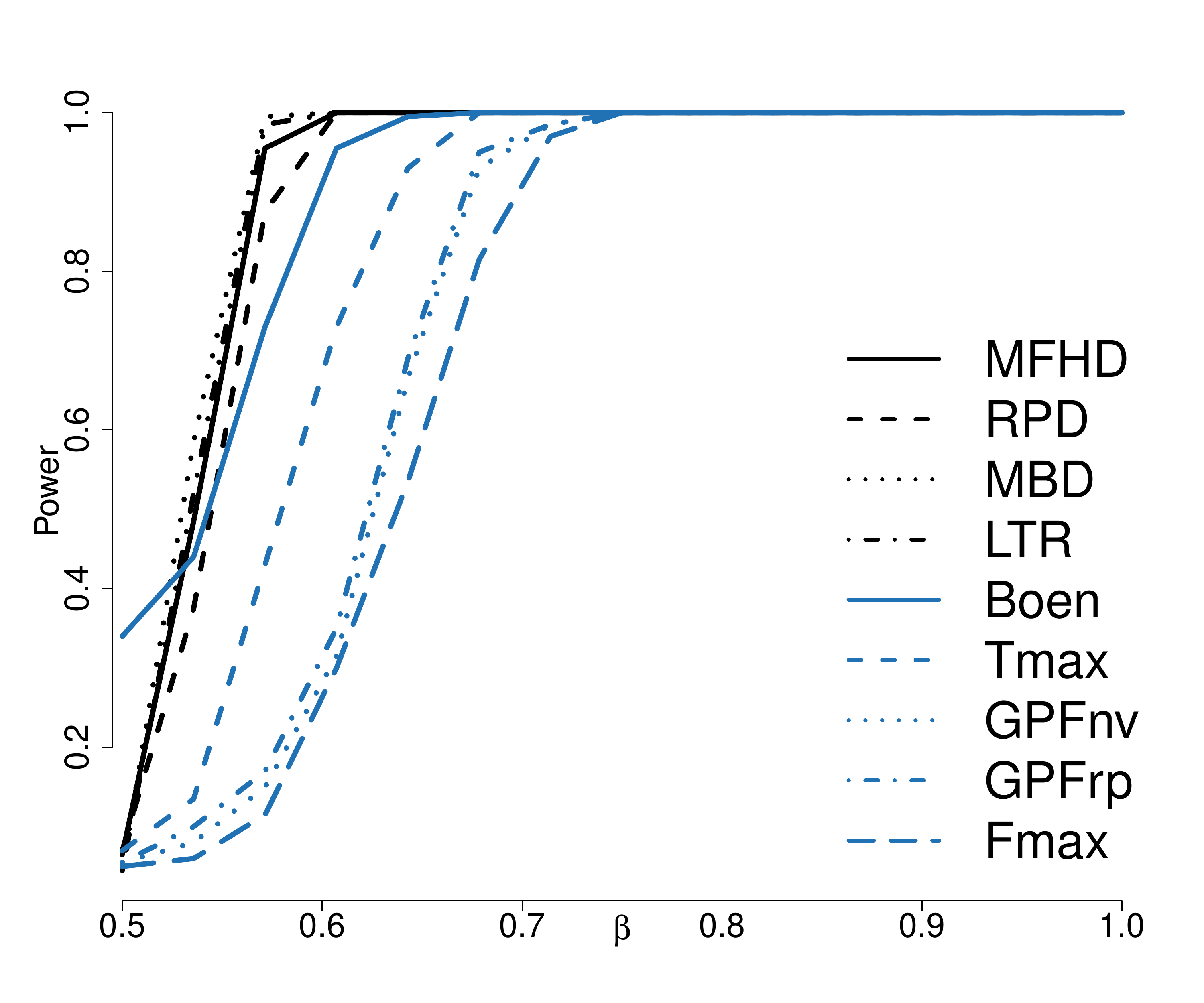}
     \caption*{(a) $\mathcal{GP}$; $\beta$ varies. }
\end{minipage}
\begin{minipage}[c]{.49\textwidth} 
    \centering
    \includegraphics[width=.9\textwidth]{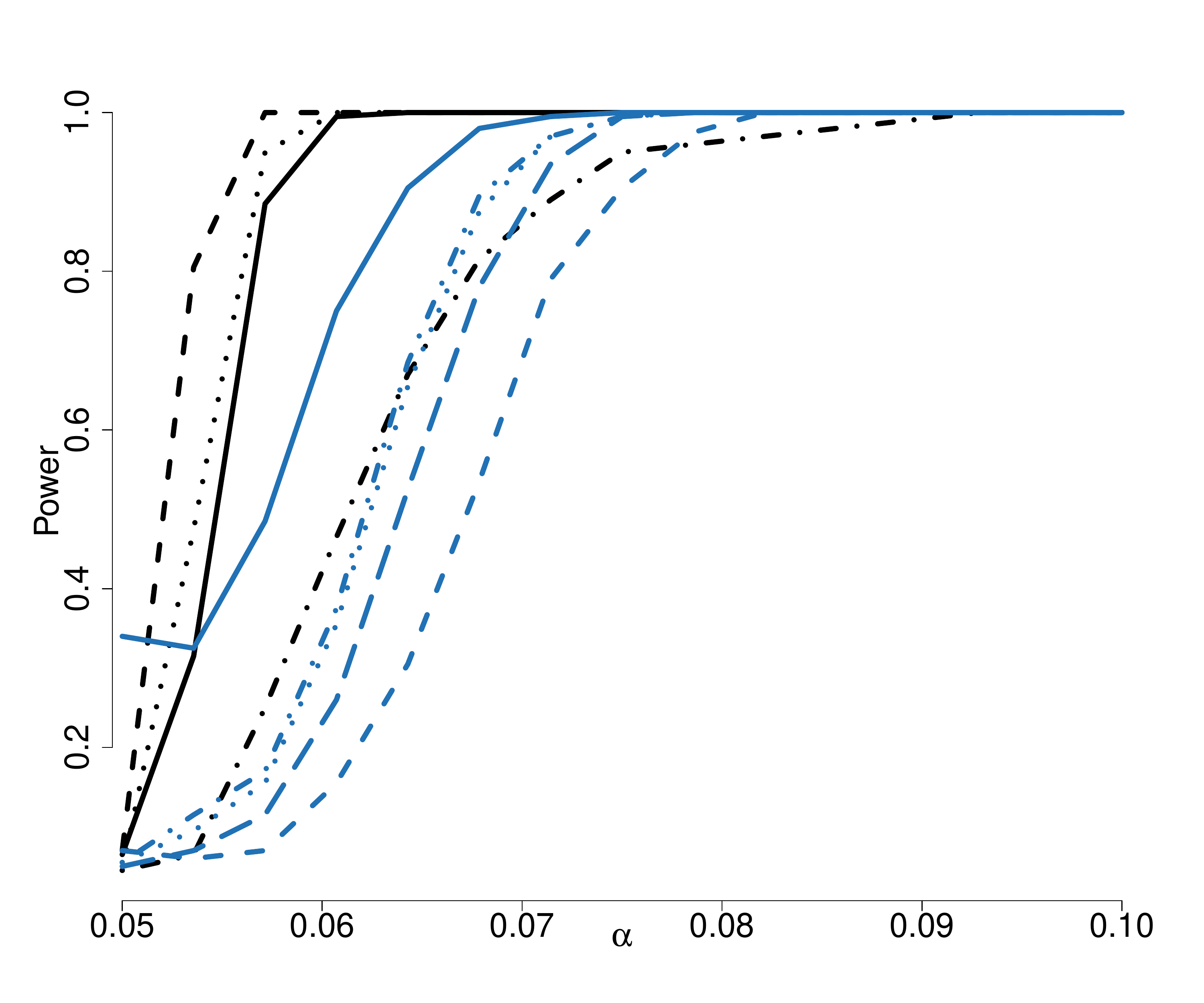}
    \caption*{(b) $\mathcal{GP}$; $\alpha$ varies. }
\end{minipage}
\hfill\newline
\begin{minipage}[c]{.49\textwidth} 
    \centering
    \includegraphics[width=.9\textwidth]{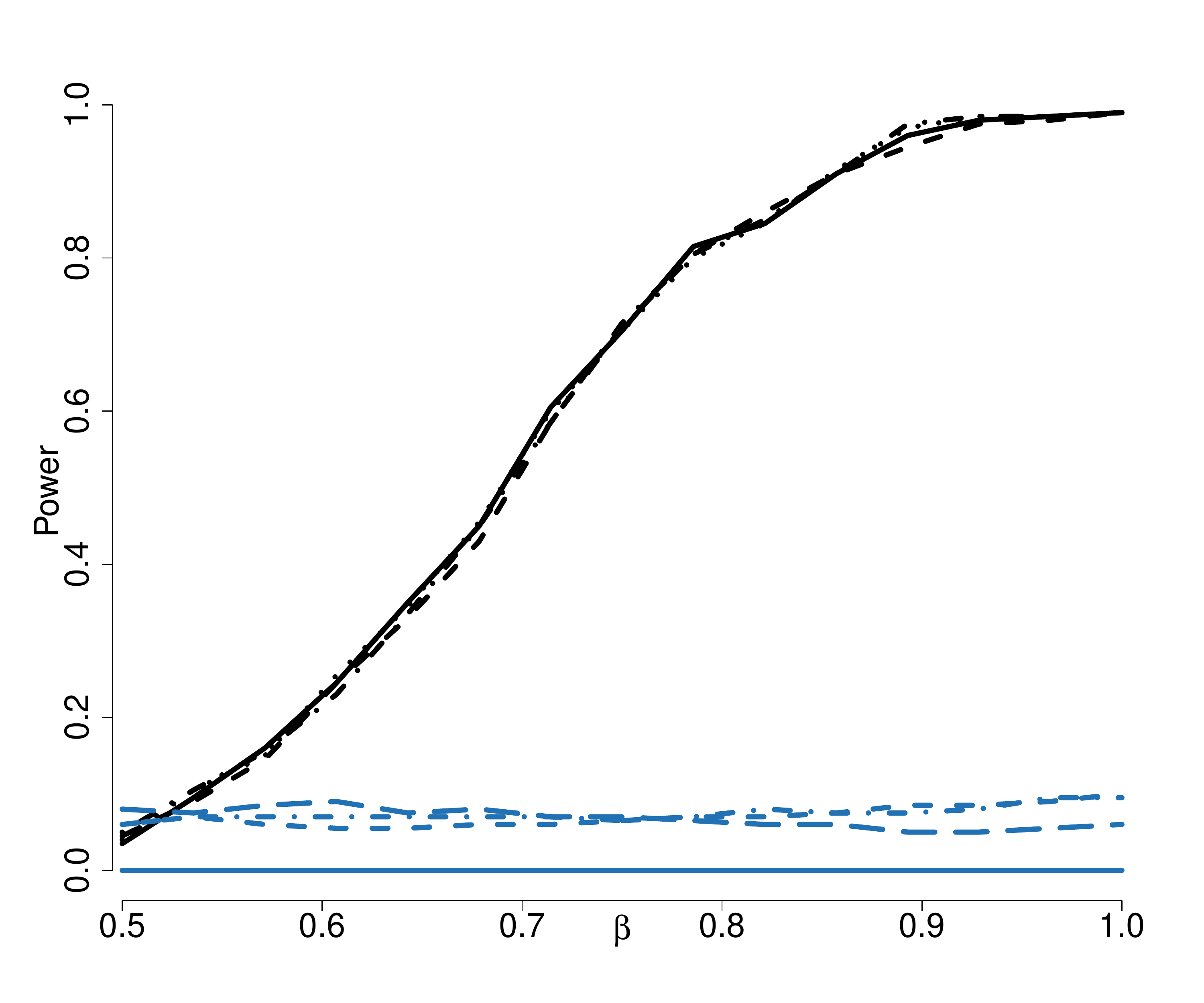}
     \caption*{(c) $t_1$; $\beta$ varies.}
\end{minipage}
\begin{minipage}[c]{.49\textwidth} 
    \centering
    \includegraphics[width=0.9\textwidth]{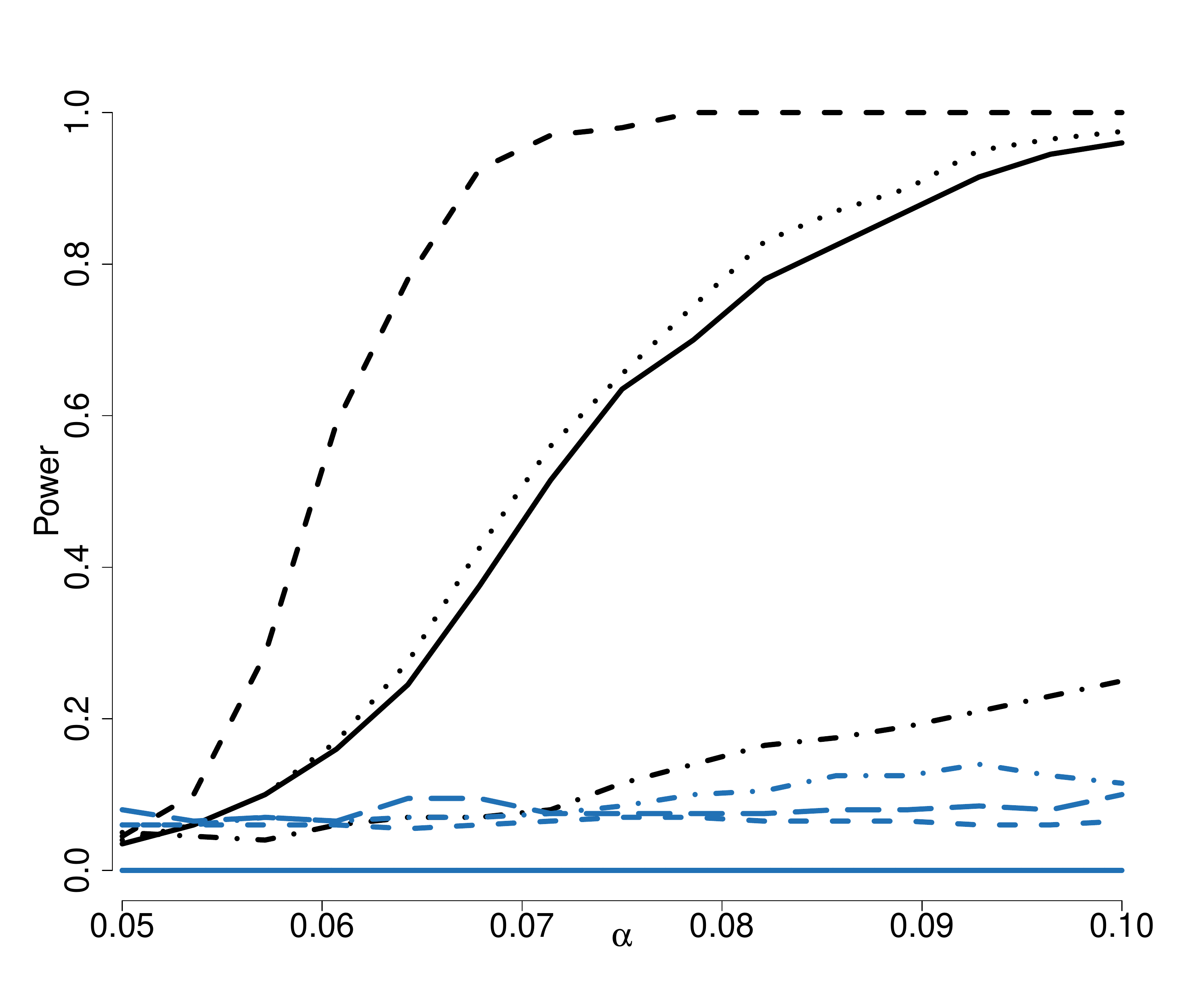}
     \caption*{(d) $t_1$; $\alpha$ varies.}
   \end{minipage}
   \hfill\newline
\begin{minipage}[c]{.49\textwidth} 
    \centering
    \includegraphics[width=.9\textwidth]{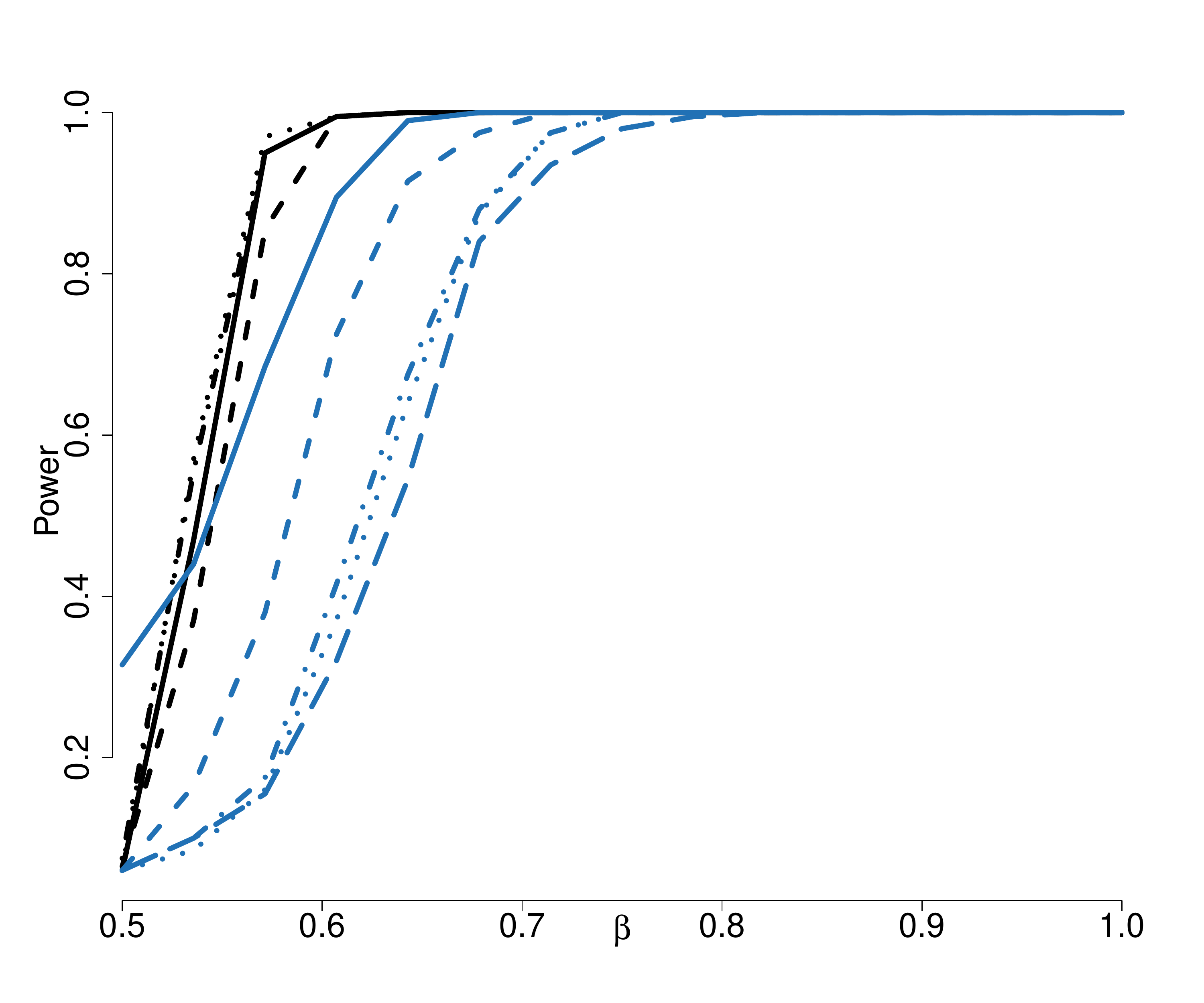}
    \caption*{(e) $\mathcal{SG}$; $\beta$ varies.}
\end{minipage}
\begin{minipage}[c]{.49\textwidth} 
    \centering
    \includegraphics[width=0.9\textwidth]{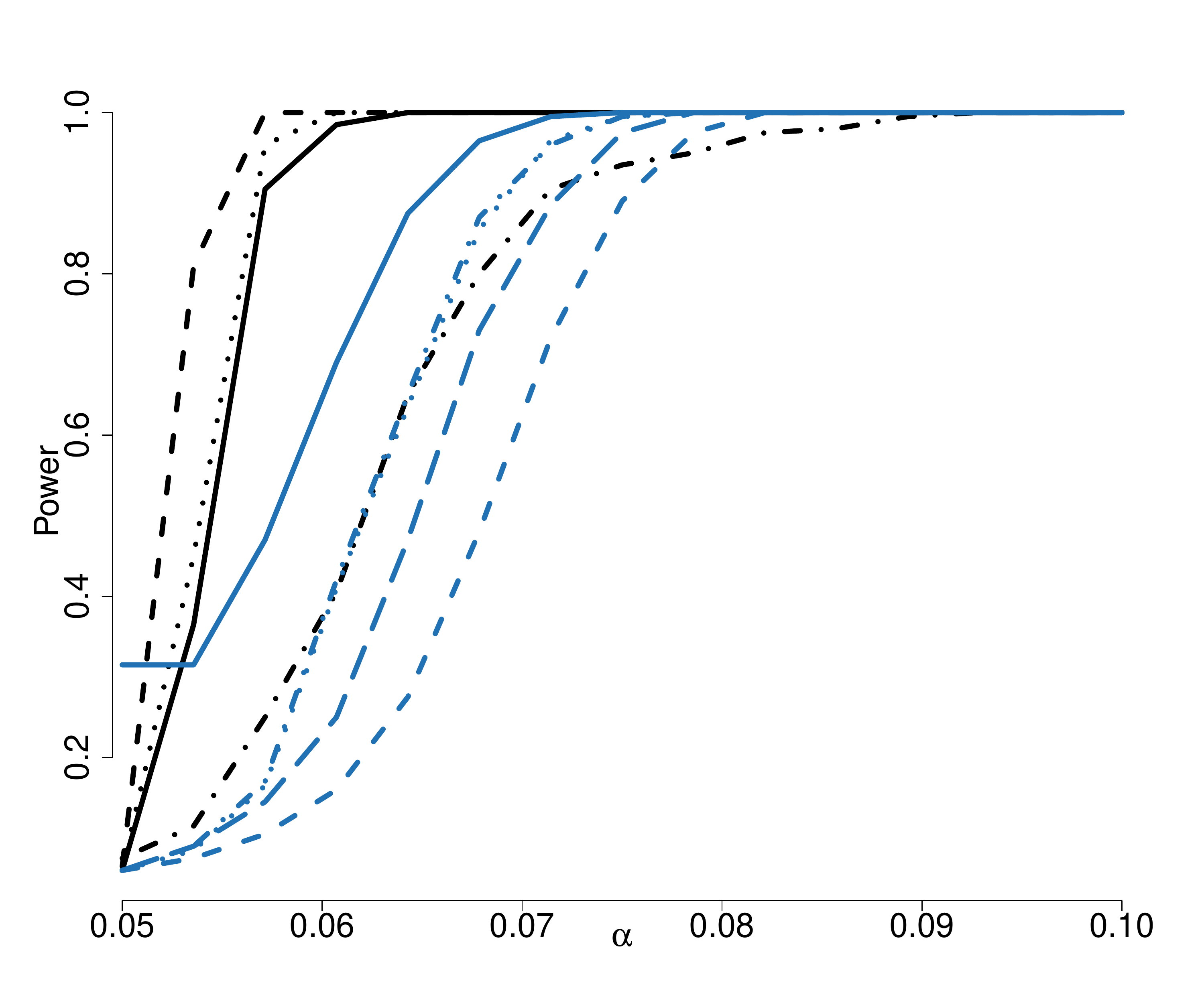}
     \caption*{(f) $\mathcal{SG}$; $\alpha$ varies.} 
   \end{minipage}
  \caption{Power curves as the $\beta$ parameter (left column) and $\alpha$ parameter (right column) of the second sample moves away from the null hypothesis. Here, $N_j=250$. }
  \label{fig:ARes}
\end{figure}
\begin{table}[t]
\centering
\begin{tabular}{cl|rcc|ccc|ccc}
  \hline
  &\multirow{2}*{ Test}& \multicolumn{3}{c}{Gaussian} & \multicolumn{3}{c}{Student t}  & \multicolumn{3}{c}{Skewed Gaussian} \\ 
 \cline{3-11}
  & & 50 & 100 & 250 & 50 & 100 & 250 & 50 & 100 & 250\\
  \hline
\multirow{ 8}{*}{FKWC} &$\MFHD'$ & 0.06 &\textbf{ 0.03} & 0.06 & 0.08 & 0.06 &\textbf{0.04}& 0.08 & 0.07 &  0.06 \\
&$\RP_{20}'$ & 0.06 &\textbf{ 0.03} & 0.07 & 0.06&\textbf{0.04}&\textbf{0.04}& 0.07 &\textbf{0.04}&  0.06 \\ 
&$\MBD'$ & \textbf{0.05} & \textbf{0.02} &\textbf{0.04}& 0.08 & 0.06 &\textbf{0.04}& 0.06 & 0.06 &  0.06  \\ 
&$\LTR'$ & 0.06 & \textbf{0.04} &\textbf{0.04}& 0.08 & 0.06 & \textbf{0.05} & 0.07 & 0.06 & 0.07 \\ 
&$\MFHD$ & 0.06 & \textbf{0.02} & 0.06 & 0.08 & \textbf{0.05} &\textbf{0.04}& 0.08 & 0.06 & 0.08 \\ 
&$\RP_{20}$ &\textbf{0.04}& \textbf{0.04} &\textbf{0.04}& 0.07 & 0.06 &\textbf{0.04}& 0.06 & \textbf{0.05} & 0.07 \\ 
&$\MBD$ & \textbf{0.05} & \textbf{0.02} & 0.07 & 0.08 & 0.06 &\textbf{0.04}& 0.07 & 0.07 & 0.09\\
&$\LTR$ & 0.07 & 0.06& \textbf{0.05} & 0.08 & 0.06 & 0.06 & 0.06 & 0.06 & 0.08 \\ 
\hline
\multirow{ 8}{*}{Competing} &  Boen & 0.20 & 0.32 & 0.40 & \textbf{0.00} & \textbf{0.00} & \textbf{0.00}& 0.22 & 0.18 & 0.28 \\
 & L2nv & \textbf{0.02} & \textbf{0.02} &0.06 & 1.00 & 1.00 & 1.00 & \textbf{0.02} & 0.08 &\textbf{ 0.00}\\ 
 & L2br  & \textbf{0.02} & \textbf{0.02} & 0.06 & 1.00 & 1.00 & 1.00 & \textbf{0.04} & 0.10 &\textbf{ 0.00}\\ 
 & L2rp & \textbf{0.02} & \textbf{0.04} &\textbf{0.04}& 0.06 & 0.10 &\textbf{0.04}& \textbf{0.04} & 0.10 &\textbf{ 0.00}\\ 
 & Tmax &\textbf{0.04}& \textbf{0.02} &\textbf{0.04}& 0.10 &\textbf{0.04} &\textbf{0.04}& 0.12 & \textbf{0.02} & \textbf{0.04} \\ 
&  GPFnv  & \textbf{0.02} & \textbf{0.02} &\textbf{0.04}&\textbf{ 0.00}&\textbf{ 0.00}&\textbf{ 0.00}& \textbf{0.02} & 0.08 &\textbf{ 0.00}\\
 & GPFrp &\textbf{0.04}& \textbf{0.02} &\textbf{0.04}& 0.06 & 0.08 & 0.06 & 0.06 & 0.08 &\textbf{ 0.00}\\ 
&  Fmax & 0.08 & 0.06 & \textbf{0.02} & 0.06 & \textbf{0.02} & \textbf{0.02} & \textbf{0.02} & 0.06 &\textbf{ 0.00}\\ 
   \hline
\end{tabular}
\caption{Empirical sizes for $J=2$ for different tests under the infinite dimensional models. The first row indicates the underlying process and the second row indicates the sample size of each group. }
\label{tab::null}
\end{table}

We now analyse the simulation results from the finite dimensional models. 
In this simulation model, the data were simulated from a Gaussian process where we directly specified $K$ non-zero eigenvalues of the covariance operator. 
A Fourier basis was used for the eigenfunctions. 
This setup was used to do two things: assess the performance of the test when the underlying model had a low dimension and compare additional kinds of covariance differences, such as differences where the trace norms of the covariance operators of each group are equal. 
Here, only the two sample case was checked, and we used $N_1=N_2=50$ and 100. 
Specifically, we ran six scenarios, which, letting $\lambda_{jk}$ be the eigenvalues of the covariance operator of group $j$, are
\begin{enumerate}
    \item Reversed short linear decay:  $\lambda_{1k}=k,\ \lambda_{2k}=3-k+1,\ k<4,\ \lambda_{jk}=0,\ k\geq 4$.
    \item Reversed long linear decay: $\lambda_{1k}=k,\ \lambda_{2k}=11-k+1,\ k<12,\ \lambda_{jk}=0,\ i\geq 12$.
    \item Reversed long exponential decay: $\lambda_{1k}=2^k,\ \lambda_{2k}=2^{11-k+1},\ k<12,\ \lambda_{jk}=0,\ k\geq 12$.
    \item Scaled short linear decay: $\lambda_{1k}=k,\  \lambda_{2k}=1.5\lambda_{1k},\ k<4,\ \lambda_{jk}=0,\ k\geq 4 $.
    \item Scaled long linear decay: $\lambda_{1k}=k,\ \lambda_{2k}=1.5\lambda_{1k}\,\ k<12,\ \lambda_{jk}=0,\ k\geq 12$.
    \item Scaled long exponential decay: $\lambda_{1k}=2^k,\ \lambda_{2k}=1.5\lambda_{1k},\ k<12,\ \lambda_{jk}=0,\ k\geq 12$.
\end{enumerate}
Table \ref{tab::evres} contains the empirical power of each test, for each of the finite dimensional scenarios. 
Notice first that the competing tests perform better in these low dimensional settings. 
This could be due to the fact that the lower complexity allowed for easier approximation of the null distributions. 
Secondly, we note that in scenarios where there was no difference in the trace of the covariance, the $\LTR$ depth could not detect the difference. 
This is in agreement with Theorem \ref{thm::SD}. 
Incorporating the derivatives again shows a large improvement in performance of the FKWC tests. 
Also, as above the random projection depth with the derivatives performs almost as well as the competing tests. 
In conclusion, the simulation has shown that the FKWC tests are very competitive with other existing tests. 
This is especially true when the underlying data have many positive eigenvalues and/or the data has heavy tails, skewness or outliers. 
Specifically, the test based on the $\RP$ ranks performed the best, and therefore we recommend using this depth function to execute the test in practice. 
\begin{table}[t]
\centering
\begin{tabular}{llcccccc}
& \multicolumn{6}{c}{Simulation Scenario}\\
  \hline
 Test  & 1 & 2 & 3 & 4 & 5 & 6 \\ 
  \hline
$\MFHD'$  & 0.38 & \textbf{1.00} & \textbf{1.00} & 0.99 & \textbf{1.00} & \textbf{1.00} \\ 
  $\RP_{20}'$ & 0.99 & \textbf{1.00} & \textbf{1.00} & 0.93 & \textbf{1.00} & 0.99 \\ 
 $\MBD'$ & 0.80 & \textbf{1.00} & \textbf{1.00} & \textbf{1.00} & \textbf{1.00} & \textbf{1.00} \\ 
$\LTR'$ & 0.07 & \textbf{1.00} & \textbf{1.00} & \textbf{1.00} & \textbf{1.00} & 0.99 \\ 
$\MFHD$  & 0.61 & 0.07 & 0.44 & 0.87 &\textbf{1.00}& 0.90 \\ 
 $\RP_{20}$& 0.24 &\textbf{1.00}&\textbf{1.00}& 0.96 &\textbf{1.00}& 0.97 \\ 
   $\MBD$ & 0.06 & 0.07 & 0.05 & 0.87&\textbf{1.00}&0.9\\ 
   $\LTR$ & 0.36 & 0.10 & 0.06 & 0.92 &\textbf{1.00} & 0.93 \\ 
  Competing& \textbf{1.00} &\textbf{1.00}&\textbf{1.00}&\textbf{1.00}&\textbf{1.00}&\textbf{1.00}\\ 
   \hline
\end{tabular}
\caption{Eigenvalue simulation scenarios when $N_1=N_2=100$. The first row indicates the scenario number. ``Competing' stands for the competing tests; all competing tests had the same empirical power. The symbol ' indicates the depth calculation included the derivatives of the observed curves as discussed in Section \ref{sec::dep}.  }
\label{tab::evres}
\end{table}

\section{Applications to real data}\label{sec::dataana}
In this section we present an application of our methodology to two different functional datasets. 
One is comprised of intraday stock prices and the other is comprised of digitized speech. 
\subsection{F-GARCH residual analysis of intraday stock price curves}\label{sec::dataana1}
We analyse the daily asset price curves of $J=3$ different stocks (\texttt{twtr}, \texttt{fb} and \texttt{snap}) starting on June 24th 2019 and ending March 20th 2020, which gives $N_1=207$ and $N_2=N_3=208$. 
Precisely, for each stock the price was measured over the course of the trading day in one minute intervals, for a total of 390 minutes per day. 
In order to account for edge effects from smoothing the curves, we trimmed 10\% of the minutes from the beginning of the day and 5\% of the minutes from the end of the day.  
This resulted in 332 minutes of stock prices. 
In actuality, we analysed the log returns, viz.
$$X_{ji}(t)=\ln(Y_{ji\floor{331t}+1})-\ln(Y_{ji\floor{331t}}),$$
where $Y_{jik}$ is the $j^{th}$ asset price on the $i^{th}$ day at minute $k$. 
Figure \ref{fig:DA}\textcolor{blue}{(a)} shows the intraday log return curves $\{X_{2i}(t)\}_{i=1}^{208}$ of Facebook (\texttt{fb}) stock. 
The data was fit to a B-spline basis, using 50 basis functions, see \texttt{smooth.basis} in the \texttt{fda} \texttt{R} package. 

Notice that the magnitudes of the curves vary widely. 
Figure \ref{fig:DA}\textcolor{blue}{(b)} displays the squared norms of the daily curves as a function of the day $i$. 
We can see that the magnitude of each observation is related to the day on which it was observed. 
For example, around March 2020 the norms are higher, likely due to the volatility which resulted from the COVID-19 pandemic.

\begin{figure}[t!]
\begin{minipage}[c]{.5\textwidth} 
    \centering
    \includegraphics[width=0.95\textwidth]{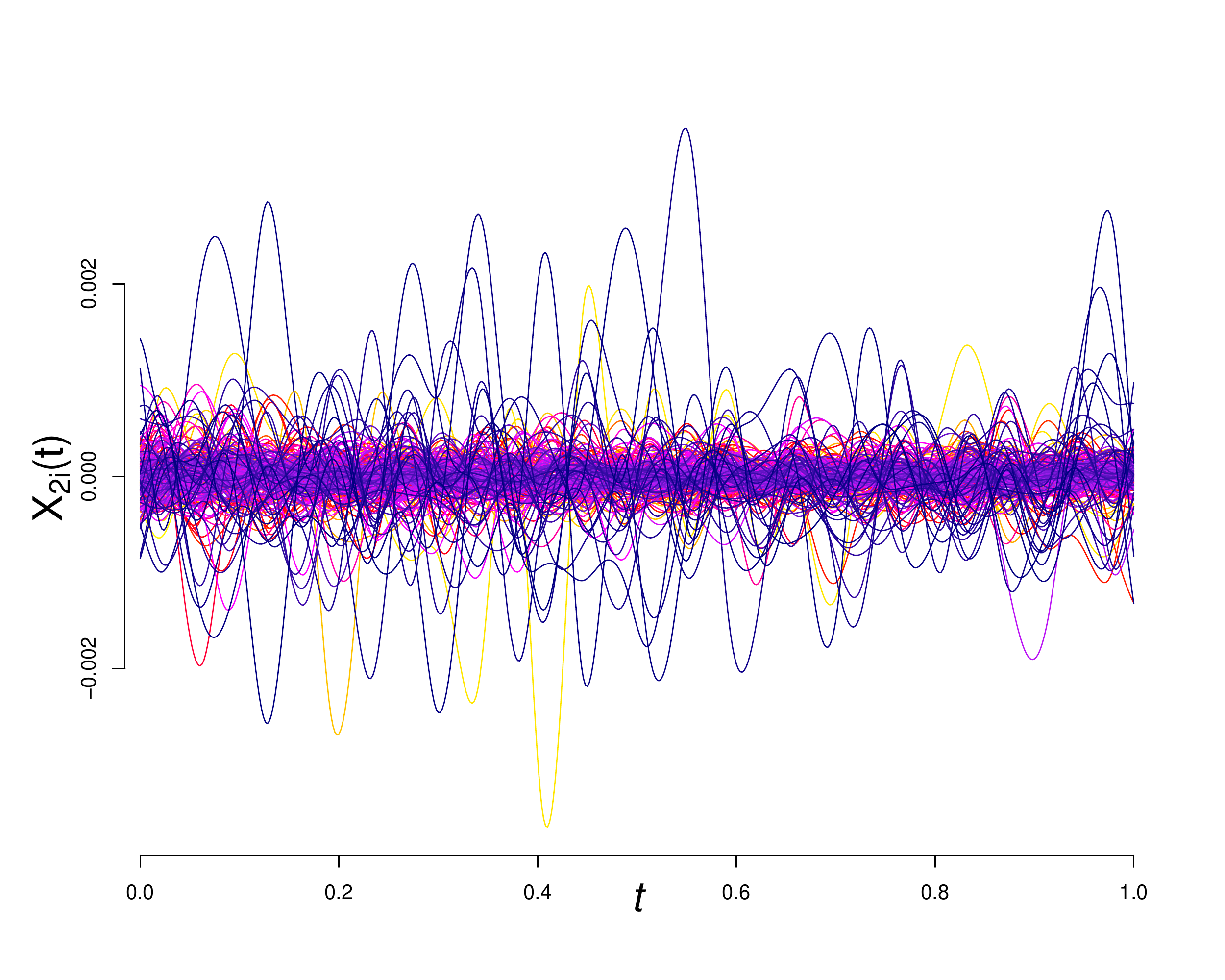}
    \caption*{(a)}
\end{minipage}
\begin{minipage}[c]{.5\textwidth} 
    \centering
    \includegraphics[width=.95\textwidth]{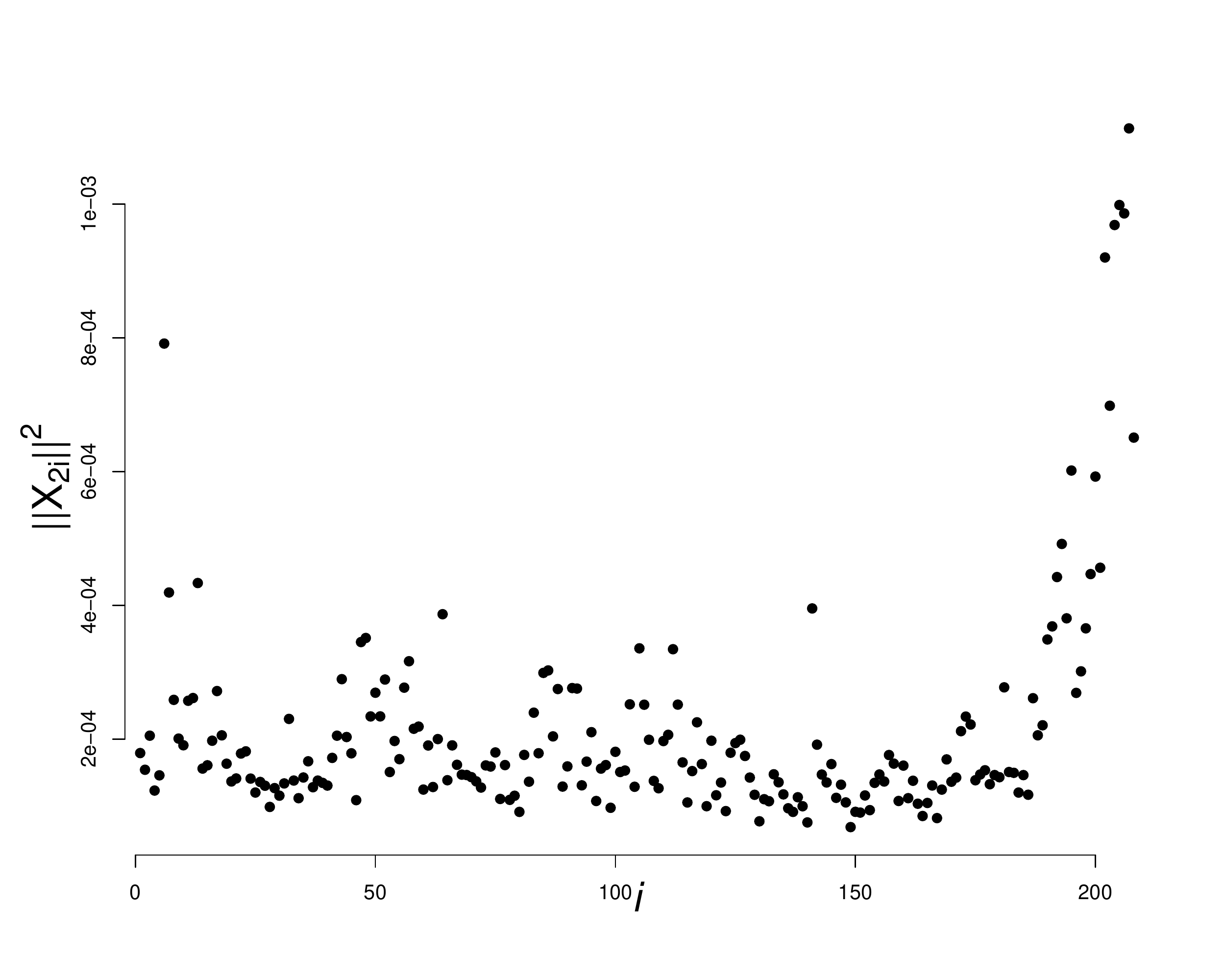}
    \caption*{(b)}
\end{minipage}
\hfill\newline
\begin{minipage}[c]{.5\textwidth} 
    \centering
    \includegraphics[width=.95\textwidth]{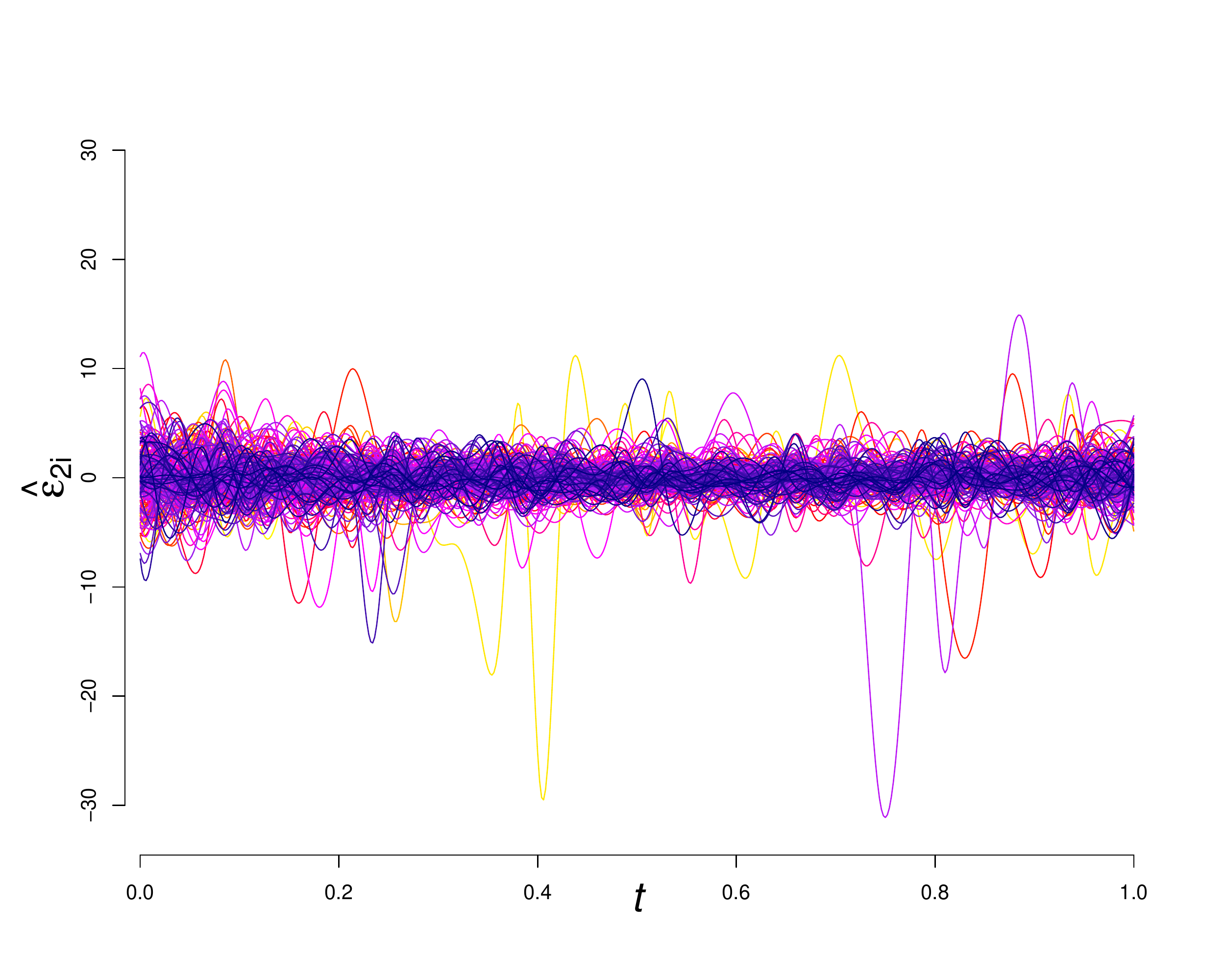}
    \caption*{(c)}
\end{minipage}
\begin{minipage}[c]{.5\textwidth} 
    \centering
    \includegraphics[width=0.95\textwidth]{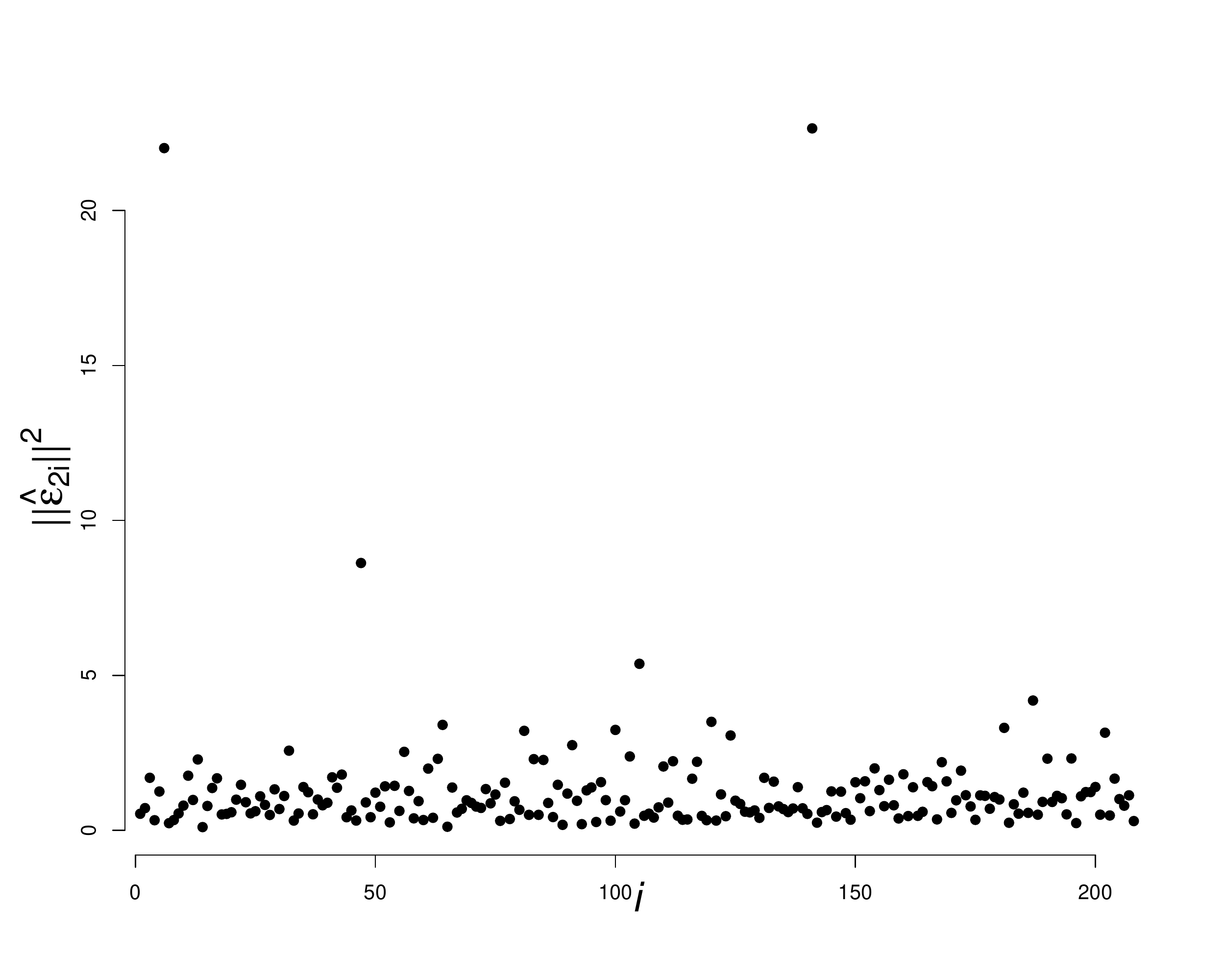}
      \caption*{(d)}
    \label{fig:Snap}
\end{minipage}
  \caption{(a) Daily log differenced intraday return curves for \texttt{fb} stock, starting on June 24th 2019 and ending March 20th 2019. (b) Daily squared norms of the intraday returns. Notice that these norms vary with the time period; the curves exhibit heteroskedastic features. For example, the most recent month of returns are much more variable. (c) Residuals for the \texttt{fb} log returns after fitting a functional GARCH(1,1) model (d) squared norms of the residuals over time. }
  \label{fig:DA}
\end{figure}
\begin{figure}[t!]
\begin{minipage}[c]{.31\textwidth} 
    \centering
    \includegraphics[width=\textwidth]{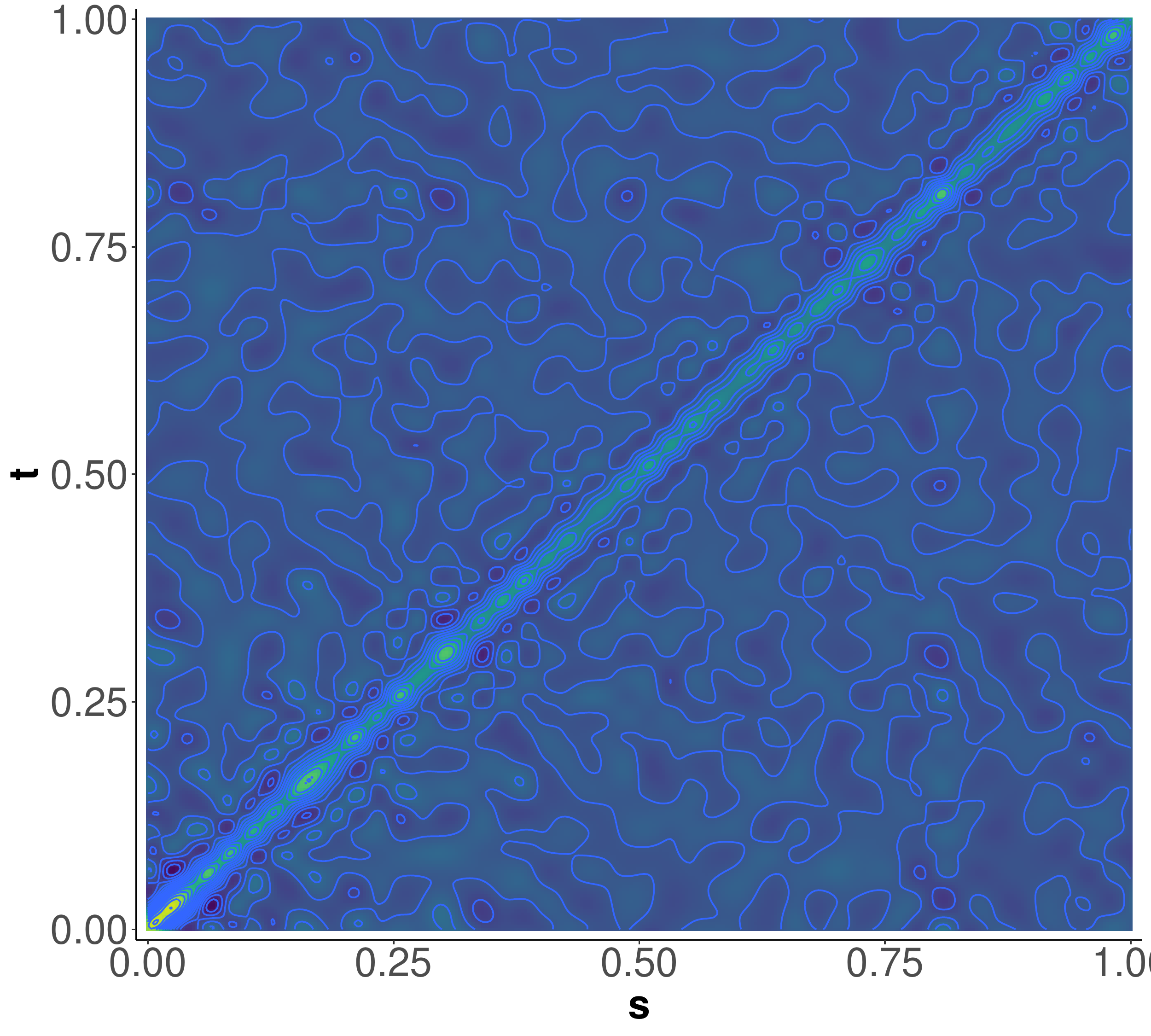}
    \caption*{(a)}
\end{minipage}
\begin{minipage}[c]{.31\textwidth} 
    \centering
    \includegraphics[width=\textwidth]{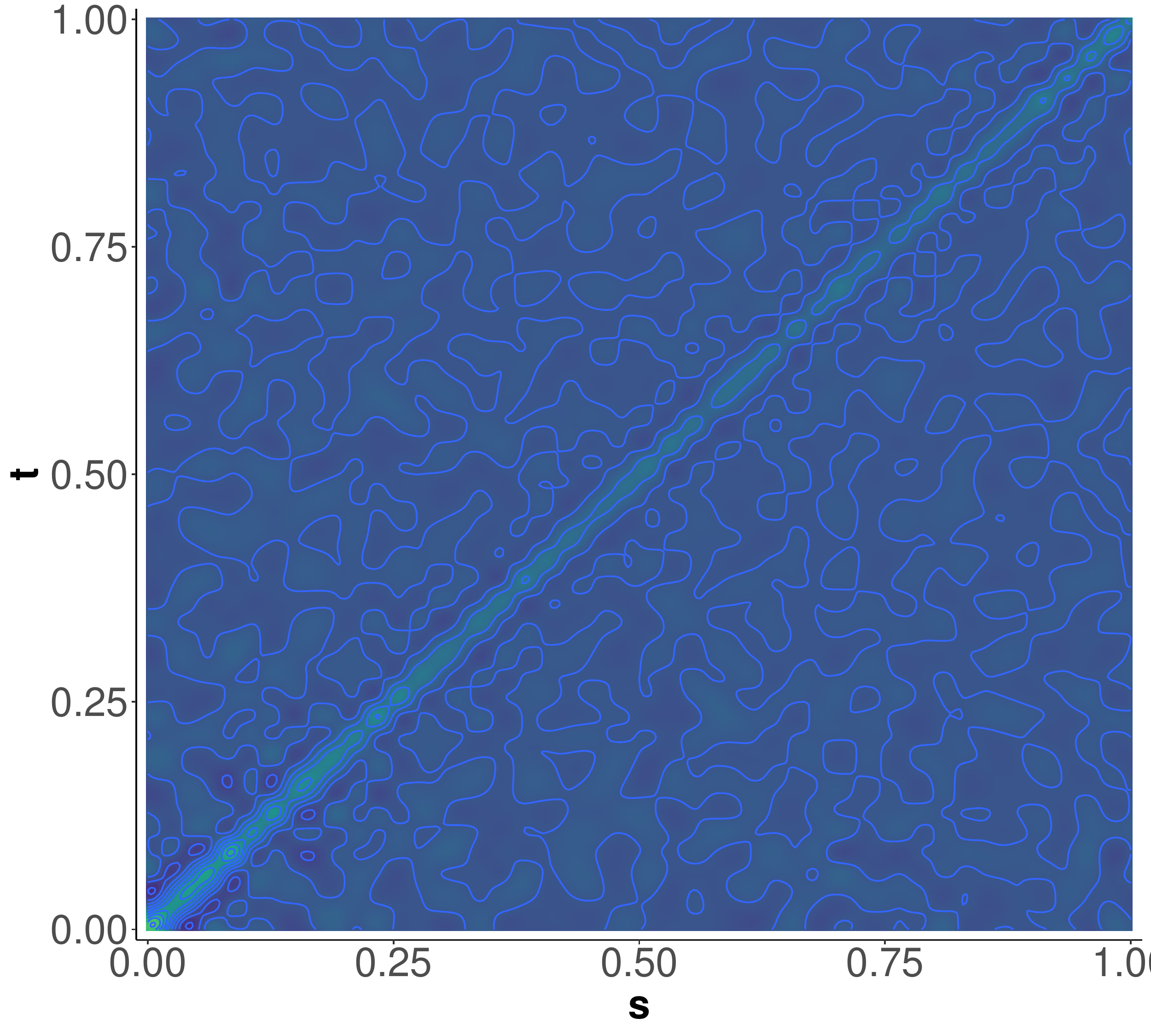}
    \caption*{(b)}
\end{minipage}
\begin{minipage}[c]{.35\textwidth} 
    \centering
    \includegraphics[width=\textwidth]{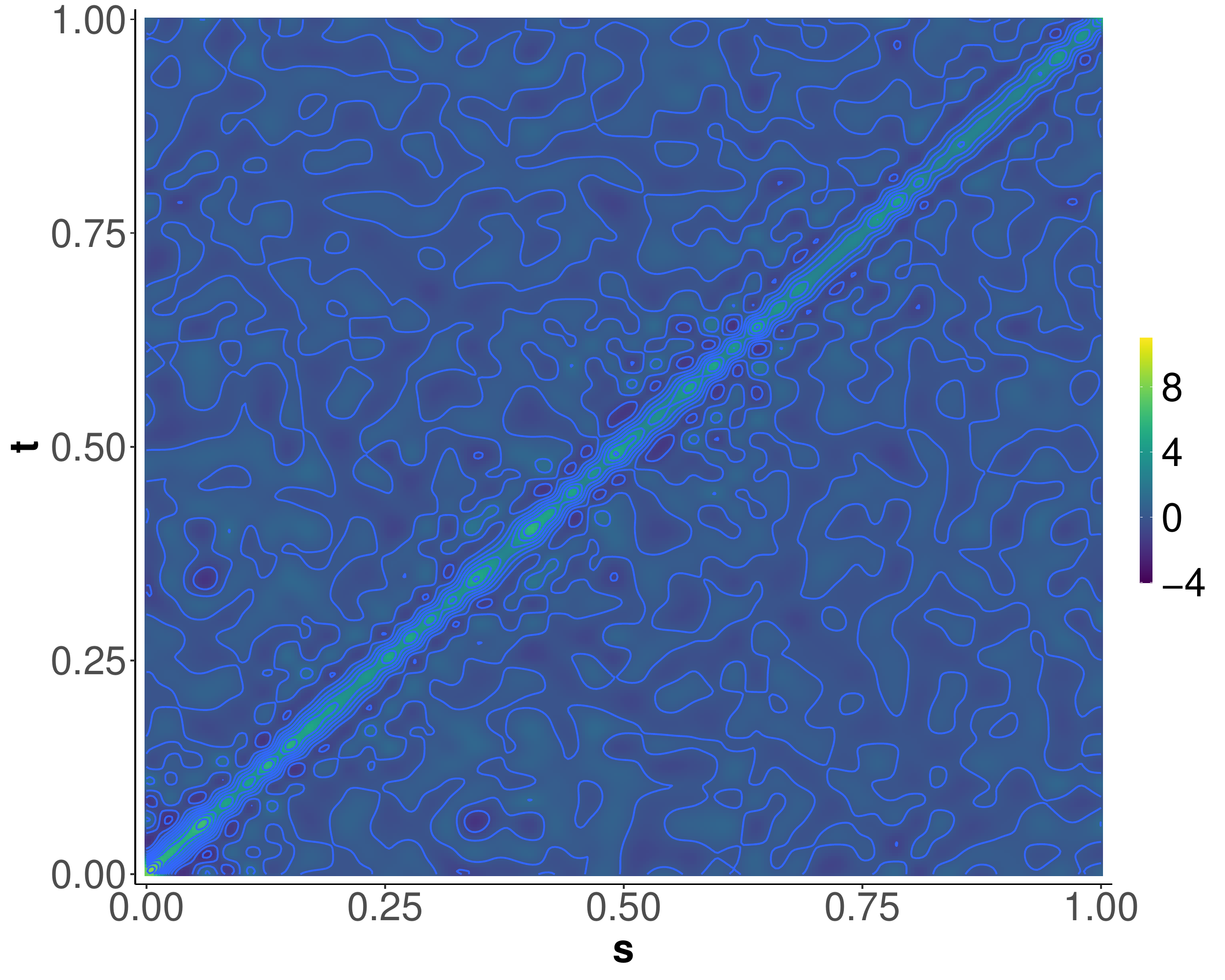}
    \caption*{(c)}
\end{minipage}
  \caption{Covariance kernels $\kerr(s,t)$ of the residuals of (a) \texttt{twtr},  (b) \texttt{fb} and (c) \texttt{snap} log returns. }
  \label{fig:DA2}
\end{figure}
To handle the heteroskedasticity and any serial correlation present in the data, we employ a functional GARCH(1,1) model \citep{Aue2017, CEROVECKI2019353} and apply the FKWC test to the residuals. 
The idea is to decompose the data into the conditional volatility $\eta^2_i$ and the independent error $\epsilon_{ji}(t)$, which can be approximated by the residuals $\widehat{\epsilon}_{ji}(t)$. 
Unlike in the univariate GARCH model, the second order behaviour of $\epsilon_{ji}(t)$ can differ between different assets; $\E{}{\epsilon^2_{ji}(t)}=1$ for all $t$ is assumed for an identifiable model \citep{CEROVECKI2019353} but nothing is assumed about $\E{}{\epsilon_{ji}(t)\epsilon_{ji}(s)}$ for $s\neq t$. 
Thus, it is also of interest to investigate the properties of $\E{}{\epsilon_{ji}(t)\epsilon_{ji}(s)}$. 
For example, if the errors come from the same distribution, then the residuals can be pooled and bootstrapped to provide standard errors. 

Since this type of data is typically heavy tailed, a robust test is suitable. 
In order to check the condition that $\eta^2_{ji}$ completely encapsulates the serial dependence in the data, we use the tests described by \cite{Rice2019b}. 
Specifically, we fit the functional GARCH(1,1) to each series of intraday returns using quasi-maximum likelihood \citep{CEROVECKI2019353}. 
We assumed that the volatility curves could be represented as linear combinations of $M$ Bernstein basis functions. 
These were chosen based on a combination of the Box-Jenkins type test for the functional GARCH model \citep{Rice2019b}, assessing the fit of the raw mean of the squares graphically (see \cite{CEROVECKI2019353}) and keeping the number of basis functions similar between assets. 
This resulted in choosing $M=4$, and our results were insensitive to the number of basis functions in terms of testing the residuals for a difference in covariance. 
Figure \ref{fig:DA}\textcolor{blue}{(c)} shows the resulting residuals of the GARCH(1,1) model fitted to the \texttt{fb} stock log returns and Figure \ref{fig:DA}\textcolor{blue}{(d)} shows the norms of those residuals as a function of the day $i$. 
Notice that both the residuals and their norms are fairly uniform over time, especially when compared with the raw data. 
Figure \ref{fig:DA}\textcolor{blue}{(d)} also shows that there are some outliers in the data.  

Figure \ref{fig:DA2} shows contour plots of the estimated covariance kernels of the residuals of each functional time series, where 5\% of the lowest random projection depth (with derivatives) observations were trimmed to account for the outliers. 
Notice that the estimated covariance kernels of the residuals of \texttt{fb} differs from the other two assets visually.  
We conducted the FKWC test at the 5\% level of significance, using ranks based on the random projection depth which incorporates the derivatives. 
The means of the ranks are 244.4203 , 424.2837, and  266.9712 for the \texttt{twtr}, \texttt{fb},  and \texttt{snap} residuals, respectively; $\widehat{\mathcal{W}}_{\scaledN}=120.37$ and we reject the hypothesis that these three series have the same covariance kernels. 
The means of the ranks are similar for that of the \texttt{twtr} and \texttt{snap} stock, but the \texttt{fb} stock differs, which matches Figure \ref{fig:DA2}. 
\subsection{Comparing speech variability with phoneme periodograms}\label{sec::dataana2}
In this section we analyse the Phoneme data, where the observations are log peroidograms of digitized speech. 
The data can be retrieved as part of the \texttt{fda R} package \citep{Hastie1995}. 
The data is split into five groups representing the syllables  `aa', `ao', `dcl', `iy' and  `sh'. 
The goal is to characterize differences between the syllables' distributions in order to aid understanding of speech as well as to help improve the performance of speech recognition models. 
\begin{figure}[t]
\begin{minipage}[c]{.5\textwidth}
        \centering
    \includegraphics[width=\textwidth]{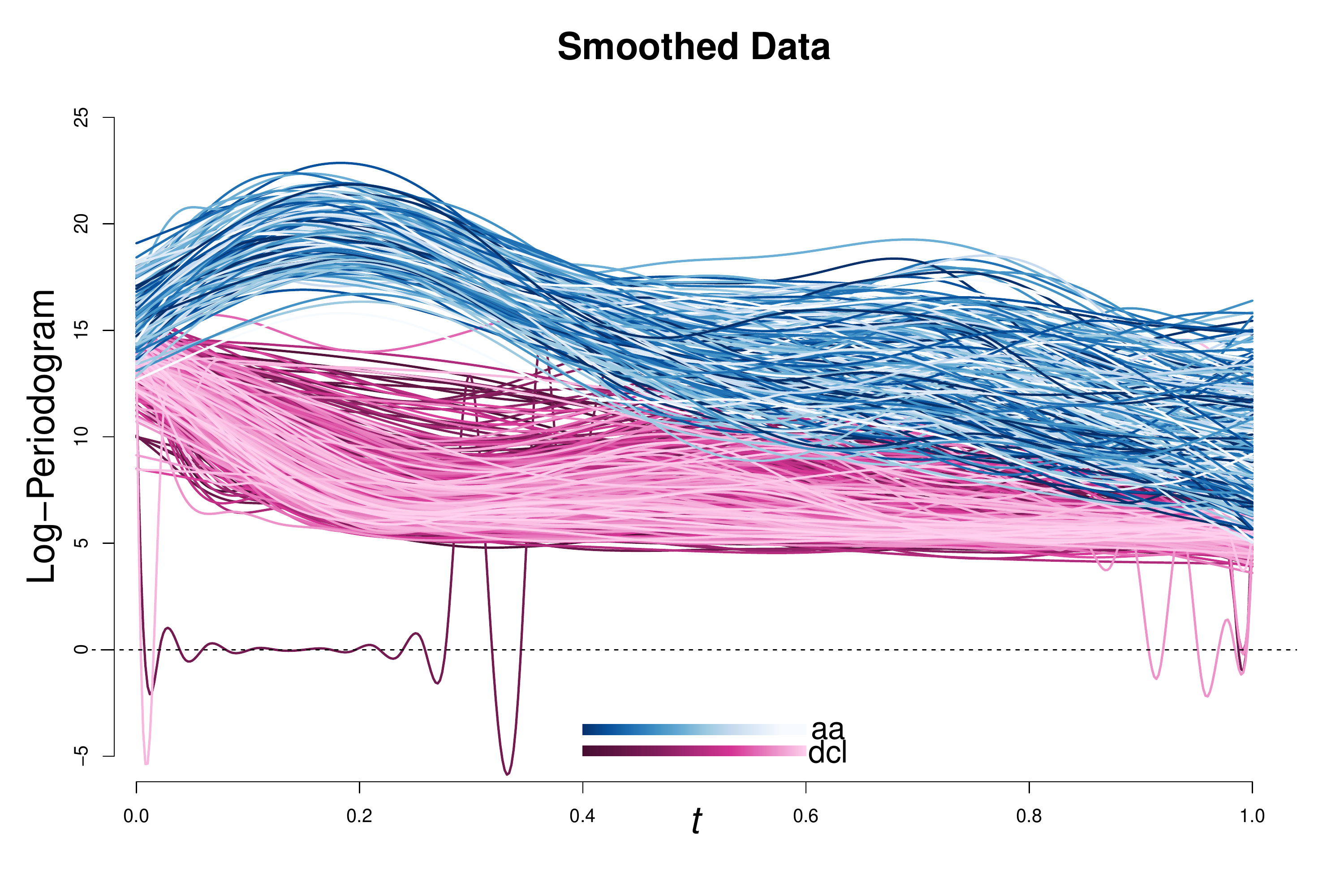}
    \caption*{(a)}
\end{minipage}
\begin{minipage}[c]{.5\textwidth}
        \centering
    \includegraphics[width=\textwidth]{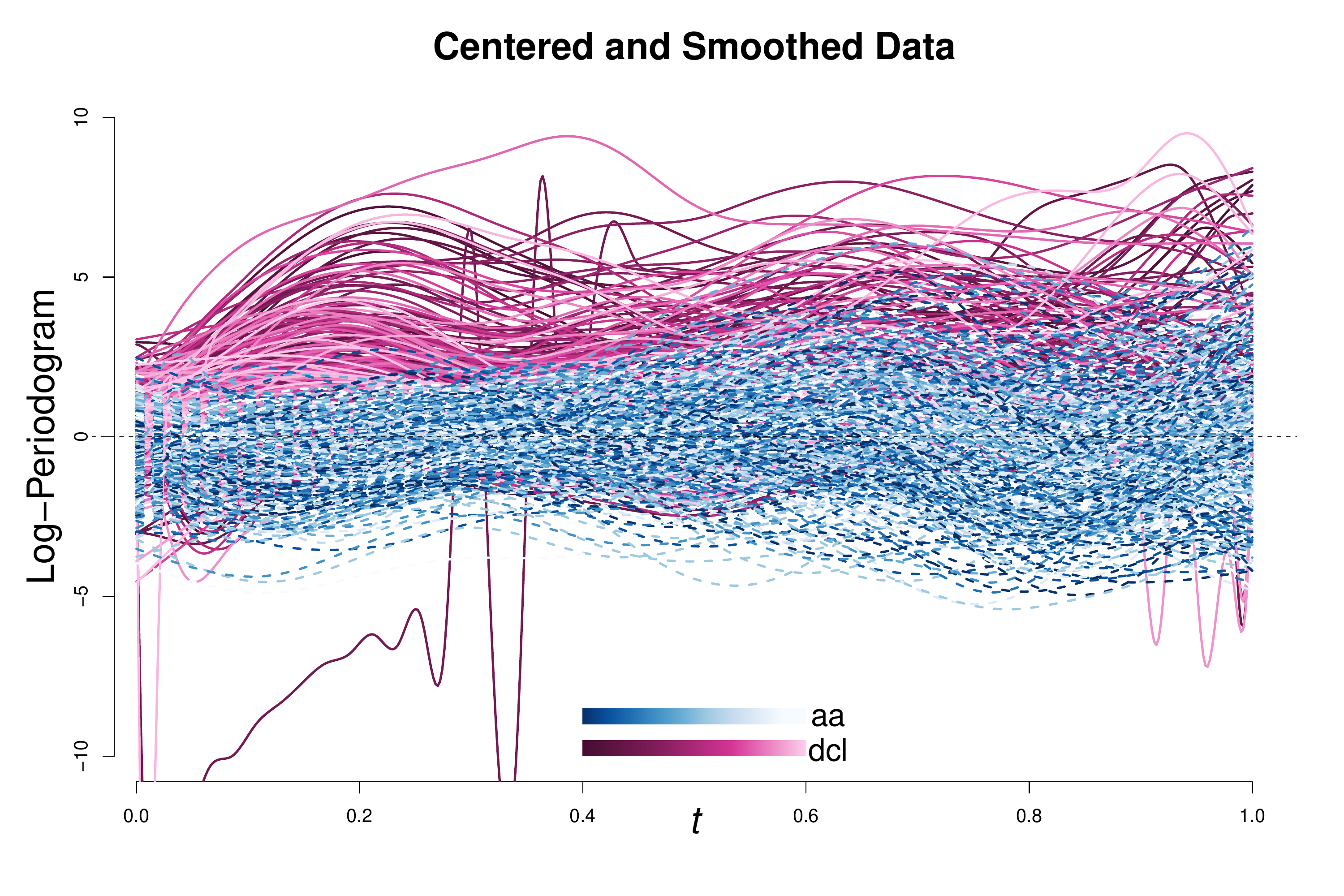}
    \caption*{(b)}
\end{minipage}
\caption{Log periodograms for the syllables `aa' and `dcl'. After centering, we see a difference in magnitude at different times between the curves. There is also a large outlier in the `dcl' group. }
\label{fig::pho_1}
\end{figure}
The data has obvious location differences, for example Figure \ref{fig::pho_1}\textcolor{blue}{(a)} shows the periodograms for two different syllables on the same plot. 
We centered the data by the deepest curve, as measured by random projection depth, within each group. 
We used a robust measure of centre to account for outliers, for example, there is a large outlier in the `dcl' syllable group. 
From Figure \ref{fig::pho_1}\textcolor{blue}{(b)} we might suspect that there are differences in the covariance kernels between the curves. 
To this end, we can run the FKWC test on the five groups of syllables, we run two versions of the test and compare the results. 
We run the FKWC test with random projection depth including the derivative information and as well as the FKWC test with $L^2$-root depth. 
These tests will differ if there are differences of the form $\mathcal{K}_j(s,t)\neq\mathcal{K}_k(s,t)$, $s\neq t,\ j\neq k$. 
Both tests result in incredibly small p-values, smaller than $2.2\times 10^{-16}$. 
We can further examine differences between groups, by performing multiple comparisons. 
Suppose we would like to compare the covariance kernels of groups $j$ and $k$, then there are two obvious routes for multiple comparisons. 
One method is to directly use the pre-calculated joint sample ranks, analogous to the univariate method of \cite{Dunn1964}. 
Here, for large $N$, one is essentially assessing the behaviour of the random variables $\D{X_{j1};F_*}-\D{X_{k1};F_*}$, through combined sample ranks. 
The other method is to extend the methods of \cite{Steel1960} and compare the mean ranks of $\D{X_{ji};F_{jk}}$ and $\D{X_{k\ell};F_{jk}}$, where
$$F_{jk}=\frac{\vartheta_j}{\vartheta_j+\vartheta_k}F_j+\frac{\vartheta_k}{\vartheta_j+\vartheta_k}F_k.$$ 
Since Theorem \ref{thm::SD} and Theorem \ref{thm::RPD} imply that differences in covariance structure will be exhibited in the pairwise ranks of the depth values, when the depth values are taken with respect to the empirical estimate of $F_{jk}$, it seems natural to use the methods of \cite{Steel1960}. 
A second argument in support of the methods of \cite{Steel1960} is as follows. 
Suppose that one group, say $j'\neq k,\ j$ has a very different covariance structure when compared to the remaining groups, if the depth values are computed with respect to the combined sample, then the group $j'$ may then `wash away' any differences between the remaining groups. 
In other words, it is possible that there is a difference between the random variables $\D{X_{ji};F_*}$ and $\D{X_{k\ell};F_*}$, but this difference is small relative to the combined sample and therefore may not be detected. 
The multiple comparisons procedure is as follows. 
For each pair of groups, $j,k$, compute the combined, two-sample depth values: $$\{\D{X_{j1};F_{jk,\scaledN}},\ldots,\D{X_{j\scaled{N}_j};F_{jk,\scaledN}},\D{X_{k1};F_{jk,\scaledN}},\ldots,\D{X_{k\scaled{N}_k};F_{jk,\scaledN}}\},$$
where $F_{jk,\scaledN}$ is the empirical distribution of $\{X_{j1},\ldots,X_{j\scaled{N}_j},X_{k1},\ldots,X_{k\scaled{N}_k}\}$. 
Next, perform the Wilcoxon rank-sum test on the depth values for each pair. 
Lastly, correct the final p-values using the \u{S}id\'{a}k correction \citep{Sidak1967} (or any other multiple testing correction). 
Again, Theorems \ref{thm:::null}, \ref{thm::SD} and \ref{thm::RPD} justify this procedure. 
\begin{table}[t]
\centering
\begin{tabular}{r|rrrrr|rrrrr}
 \multicolumn{1}{c }{ }& \multicolumn{5}{c}{$\RP'$} & \multicolumn{5}{c}{$\LTR$}\\
  \hline
 Syllable & aa  &  ao  &  dcl  &  iy  &  sh  &  aa  &  ao  &  dcl  &  iy  &  sh \\ 
  \hline
aa  & 1.00 & 0.85 & \textbf{0.00} & 0.97 & \textbf{0.00} & 1.00 & 0.13 & 0.40 & 1.00 & \textbf{0.00} \\ 
   ao  & 0.85 & 1.00 & \textbf{0.00} & 0.27 & \textbf{0.00} & 0.13 & 1.00 & \textbf{0.00} & 0.18 & \textbf{0.00} \\ 
   dcl  & \textbf{0.00} & \textbf{0.00} & 1.00 & \textbf{0.00} & \textbf{0.00} & 0.40 & \textbf{0.00} & 1.00 & 0.29 & 0.11 \\ 
   iy  & 0.97 & 0.27 & \textbf{0.00} & 1.00 & \textbf{0.00} & 1.00 & 0.18 & 0.29 & 1.00 & \textbf{0.00} \\ 
   sh  & \textbf{0.00} & \textbf{0.00} & \textbf{0.00} & \textbf{0.00} & 1.00 & \textbf{0.00} & \textbf{0.00} & 0.11 & \textbf{0.00} & 1.00 \\ 
   \hline
\end{tabular}
\caption{\v{S}id\'{a}k corrected p-values of pairwise functional Steel tests performed on the centred curves. }
\label{tab::pho_1}
\end{table}
Table \ref{tab::pho_1} shows \v{S}id\'{a}k corrected p-values of pairwise `functional Steel tests' performed on the centered curves.
The p-values are corrected for the tests done across both hypothesis tests, i.e., across 22 tests. 
We see that the results show that the syllables `dcl' and `sh' differ from the remaining syllables and from each other in terms of the variability of the magnitude of their log-frequencies. 
We can also see that under the $\LTR$ test, `dcl' is similar to the other tests. 
This implies that the trace norm of the covariance operator of `dcl' is similar to that of the other syllables, with the exception of `ao'. 
Under the $\RP'$ test, the syllable `dcl' differs from all other syllables. 
We could interpret this as the log-periodograms of `dcl' are more or less `wiggly' when compared to the other syllables, or that the frequencies that have high variability are different for the syllable `dcl' than the other syllables. 

\bibliography{refs2}
\bibliographystyle{apa}
\appendix
\section{Proofs}\label{sec::app}
\begin{proof}[Proof of Theorem \ref{thm::ltd_prop}]
Let $aF+b$ be the measure associated with $aX+b$. 
We have that 
\begin{align*}
    \LTR(ax+b;aF+b)=\left(1+\E{F}{\norm{ax+b-aX+b}^2}^{1/2}\right)^{-1}=\left(1+\norm{a}\E{F}{\norm{x-X}^2}^{1/2}\right)^{-1}. 
\end{align*}
The function $(1+c'x)^{-1}$ is monotonic for any $c'>0$ and so for any $x,y\in\mathfrak{F}$ such that $\LTR(x;F)<\LTR(y;F)$ then $\LTR(ax+b;aF+b)<\LTR(ay+b;aF+b)$ then $\LTR(ax+b;aF+b)<\LTR(ay+b;aF+b)$. 
This gives the first property. For the second property, observe that
\begin{align*}
    \LTR(x;F)&=\left(1+\E{F}{\norm{x-X}^2}^{1/2}\right)^{-1}\\
    &=\left(1+2^{-1/2}\E{F}{\norm{x-X}^2+\norm{x+X}^2}^{1/2}\right)^{-1}\\
    &=\left(1+2^{-1/2}\E{F}{2\norm{x}^2+2\norm{X}^2}^{1/2}\right)^{-1}\\
    &=\left(1+2^{-1/2}2^{1/2}\norm{x}^2+2^{-1/2}2^{1/2}\E{}{\norm{X}^2}^{1/2}\right)^{-1}\\
    &=\left(1+\norm{x}+c'\right)^{-1},
\end{align*}
which is maximized at $x=\mathbf{0}$. 
For the third and fourth properties, we have in similar fashion
\begin{align*}
   \LTR(cx;F)&=\left(1+\E{F}{\norm{cx-X}^2}^{1/2}\right)^{-1}=\left(1+c\norm{x}+c'\right)^{-1},
\end{align*}
which is decreasing toward 0 as $c$ increases. 
Lastly, if $X_1,\ldots,X_N$ is a random sample from $F$, then it holds that
$$\frac{1}{N}\sum_{i=1}^N\norm{x-X_i}^2=\norm{x}^2+\frac{1}{N}\sum_{i=1}^N\norm{X_i}^2-2\ip{x,\frac{1}{N}\sum_{i=1}^N X_i}\coloneqq\norm{x}^2+\overline{Y}_{x,\scaledN}.$$
We have that 
\begin{align*}
|\LTR(x;F_{\scaledN})-\LTR(x;F)|    &=\left|\left(1+(\norm{x}^2+\overline{Y}_{x,\scaledN})^{1/2}\right)^{-1}-\left(1+(\norm{x}^2+\norm{\kerrO}_{TR})^{1/2}\right)^{-1}\right|\\
 &=\left|\frac{(\norm{x}^2+\norm{\kerrO}_{TR})^{1/2}-(\norm{x}^2+\overline{Y}_{x,\scaledN})^{1/2}}{\left(1+(\norm{x}^2+\overline{Y}_{x,\scaledN})^{1/2}\right)\left(1+(\norm{x}^2+\norm{\kerrO}_{TR})^{1/2}\right)}\right|\\
 &\leq\left|\frac{|\norm{\kerrO}_{TR}-\overline{Y}_{x,\scaledN}|^{1/2}}{\left(1+(\norm{x}^2+\overline{Y}_{x,\scaledN})^{1/2}\right)\left(1+(\norm{x}^2+\norm{\kerrO}_{TR})^{1/2}\right)}\right|,
 \end{align*}
 where the third line comes from the fact that $\sqrt{x}-\sqrt{y}\leq \sqrt{|x-y|}$. 
 Now, suppose that $\norm{x}<c'N^{1/2}(\log N)^{-1}$. 
\begin{align*}
 |\LTR(x;F_{\scaledN})-\LTR(x;F)|   &\leq |\norm{\kerrO}_{TR}-\overline{Y}_{x,\scaledN}|^{1/2}\\
 &\leq \left|\norm{\kerrO}_{TR}-\frac{1}{N}\sum_{i=1}^N\norm{X_i}^2-c'N^{-1/2}(\log N)^{-1}\sum_{i=1}^N\int X_i dt \right|^{1/2}\\
 &=o(1)\ a.s.\ ,
\end{align*}
where the last line is from the strong law of large numbers and the law of the iterated logarithm. 
Note that $\int X_i dt$ has finite variance for all $i\in\{1,\ldots ,N\}$ and that the second inequality does not depend on $x$. 
Suppose now that $\norm{x}\geq c'N^{1/2}(\log N)^{-1}$. 
Then, it is easy to see that $|\LTR(x;F_{\scaledN})-\LTR(x;F)|\rightarrow 0$ by the vanishing at infinity property. 
\end{proof}
\begin{proof}[Proof of Theorem \ref{thm:::null}]
Observe that $\widehat{R}_{ji}$ are identically distributed under the null hypothesis. 
This implies that the rank vector has the uniform distribution with probability of each outcome being $1/N!$. 
This is the same setup as in \citep{Kruskal1952} and so it is immediate that $\mathcal{W}_{\scaledN}\cond \rchi^2_{\scaled{J}-1}$. 
Similarly, it follows directly from Theorem 2 of \citep{Chenouri2011} that $\mathcal{M}_{\scaledN,r}\cond \rchi^2_{\scaled{J}-1}$. 
\end{proof}
\begin{proof}[Proof of Theorem \ref{thm:::alt}]
Let $$\widetilde{\sigma}^2_{\scaledN}=\frac{N(N+1)}{12}.$$
We can rewrite $\widehat{\mathcal{W}}_{\scaledN}$ as follows
\begin{equation*}
    \widehat{\mathcal{W}}_{\scaledN}=\frac{1}{\widetilde{\sigma}^2_{\scaledN}}\sum_{j=1}^{J} N_j \rhatbarj^{2}-3(N+1). 
\end{equation*} 
Now, define
\begin{align}
 \label{eqn:T}
   \mathcal{W}_{\scaledN}&\coloneqq \frac{1}{\widetilde{\sigma}^2_{\scaledN}} \sum_{j=1}^{J} N_j \overline{R}_{j}^{2}-3(N+1)\qquad \text{with}\qquad  \overline{R}_{j}\coloneqq\frac{1}{N_j}\sum_{i=1}^{N_j} R_{ji}.
\end{align}
Under the alternative hypothesis, Assumption \ref{ass:cont} and Assumption \ref{ass:med_d}, $\D{X_{ji};F_*}$ are equivalent to the univariate random variables studied by \cite{Kruskal1952}. 
For all $\delta>0$, it then holds that
$$\mathrm{P}\left(\mathcal{W}_{N}>\delta\right) \rightarrow 1,\ \text { as } N \rightarrow \infty.$$
We now show that $|\widehat{\mathcal{W}}_{\scaledN}-\mathcal{W}_{\scaledN}|=O_p(1),$ which will complete the proof. 
To this end, note that the sequences $R_{j1},\ldots, R_{j\scaledN_j}$ and $\widehat{R}_{j1},\ldots, \widehat{R}_{j\scaledN_j}$ are both triangular arrays of exchangeable random variables. 
This fact allows us to apply the central limit theorem of \cite{Weber1980}. 
Specifically, it holds that
\begin{align} 
\frac{\sqrt{N_j}}{\Var{R_{j1}}}(\overline{R}_{j}-\E{}{R_{j1}})=O_p(1) \qquad\text{ and }\qquad \frac{\sqrt{N_j}}{\Varr{\widehat{R}_{j1}}}\left(\overline{\widehat{R}}_{j}-\Eee{}{\widehat{R}_{j1}}\right)=O_p(1).
\label{eqn::weberclt}
\end{align}
We now relate these two quantities with $\widehat{R}_{ji}=R_{ji}+\mathcal{E}_{ji}$ where,
\begin{align}
\label{eqn::rankrep}
    \mathcal{E}_{ji}&=\sum_{\ell=1}^J\sum_{m=1}^{N_j}\ind{B_{{ji},\ell m}}-\ind{A_{{ji},\ell m}},
\end{align}
and
\begin{align*} A_{ji, \ell m} &=\left\{D\left(X_{ji}, F_{*}\right) \leq D\left(X_{\ell m}, F_{*}\right)\right\} \cap\left\{D\left(X_{ji}, F_{*,\scaleto{N}{4pt}}\right)>D\left(X_{\ell m}, F_{*,\scaleto{N}{4pt}}\right)\right\} \\ B_{ji, \ell m} &=\left\{D\left(X_{ji}, F_{*}\right)>D\left(X_{\ell m}, F_{*}\right)\right\} \cap\left\{D\left(X_{ji}, F_{*,\scaleto{N}{4pt}}\right) \leq D\left(X_{\ell m}, F_{*,\scaleto{N}{4pt}}\right)\right\}.
\end{align*}
Observe that $\E{}{\ind{B_{ji,\ell m}}}=O(N^{-1/2})$ and $\E{}{\ind{A_{ji,\ell m}}}=O(N^{-1/2})$, which is due to Assumption \ref{ass:cont} and Assumption \ref{ass:consDepth}, \citep[see the paragraph between A4 and A5 of][where one notes that Assumption \ref{ass:cont} implies that $\Pr(|\D{X_{j1};F_{*}}-\D{X_{k1};F_{*}}|\leq v)$ is also Lipschitz in $v.$]{Chenouri2020DD}. 
It then follows that 
$\Eee{}{\mathcal{E}_{ji}}=\Eee{}{\widehat{R}_{ji}}-\Eee{}{R_{ji}}=O(N^{1/2}).$
With this fact in mind, we now show that 
\begin{equation}
\label{eqn::varsame}
    \Varr{\widehat{R}_{ji}}/\Varr{R_{ji}}=O(1)\qquad\text{ and }\qquad \Varr{R_{ji}}/\widetilde{\sigma}^2_{\scaleto{N}{4pt}}=O(1).
\end{equation}
The right-side identity follows easily from the fact that $N_j/N=\vartheta_j+o(1)$; $\Varr{R_{ji}}=O(N^2),$ for any $j\in\{1,\ldots,J\}$ and $i\in \{1,\ldots,N_j\}$. 
For the left identity, we can write
\begin{align*}
\Var{\widehat{R}_{ji}}&=\Var{R_{ji}+\mathcal{E}_{ji}}\\
&=\Var{R_{ji}}+\Var{\mathcal{E}_{ji}}+2\,\mathrm{C}ov\left(\mathcal{E}_{ji},R_{ji}\right)\\
&\leq \Var{R_{ji}}+\Var{\mathcal{E}_{ji}}+2\,\E{}{|\mathcal{E}_{ji}-\E{}{\mathcal{E}_{ji}}|}N\\
&= \Var{R_{ji}}+\Var{\mathcal{E}_{ji}}+O(N^{3/2})\\
&=\Var{R_{ji}}+\E{}{\left(\sum_{\ell=1}^J\sum_{m=1}^{N_j}\ind{B_{{ji},\ell m}}-\ind{A_{{ji},\ell m}}\right)^2}+O(N)+O(N^{3/2})\\
&\leq\Var{R_{ji}}+\E{}{\sum_{\ell=1}^J\sum_{m=1}^{N_j}\ind{B_{{ji},\ell m}}+\ind{A_{{ji},\ell m}}}+O(N)+O(N^{3/2})\\
&\leq \Var{R_{ji}}+O(N^{3/2}),
\end{align*}
where the fourth line comes from applying equation (A5) of \cite{Chenouri2020DD}. 
The last line is from the the fact that $\E{}{\ind{B_{ji,\ell m}}}=O(N^{-1/2})$ and $\E{}{\ind{A_{ji,\ell m}}}=O(N^{-1/2})$. 
Now, 
$$\limn \frac{\Varr{\widehat{R}_{ji}}}{\Varr{R_{ji}}}=\limn\frac{\Varr{\widehat{R}_{ji}}/N^2}{\Varr{R_{ji}}/N^2}=\limn\frac{\Varr{R_{ji}}/N^2+o(1)}{\Varr{R_{j1}}/N^2}=1.$$
It then follows from Slutsky's theorem, continuous mapping theorem and the central limit theorem of \cite{Weber1980} that
\begin{align}
    \widehat{\mathcal{W}}_{\scaledN}-\mathcal{W}_{\scaledN}&=\frac{1}{\widetilde{\sigma}^2_{\scaleto{N}{4pt}}}\sumJ N_j\left(\overline{\widehat{R}}_{j}^{2}-\overline{R}_{j}^{2}\right)\nonumber\\
    &= \sumJ \left[\bigg( \frac{\sqrt{N_j}\ \overline{\widehat{R}}_{j}}{\widetilde{\sigma}_{\scaleto{N}{4pt}}}\bigg)^2-\bigg( \frac{\sqrt{N_j}\ \overline{R}_{j}}{\widetilde{\sigma}_{\scaleto{N}{4pt}}}\bigg)^2\right]\nonumber\\
    &=O_p(1)+ \frac{1}{\widetilde{\sigma}^2_{\scaleto{N}{4pt}}}\sumJ N_j\left[ \Eee{}{\widehat{R}_{ji}}^2-\Eee{}{R_{ji}}^2+\Eee{}{R_{ji}}\overline{R}_{j}-\Eee{}{\widehat{R}_{ji}}\overline{\widehat{R}}_{j}\right]\nonumber\\
    &=O_p(1).\nonumber
\end{align}
The third line results from we adding and subtracting $\frac{\sqrt{N_j}\ \mathrm{E}[\widehat{R}_{ji}]}{\widetilde{\sigma}_{\scaleto{N}{4pt}}}$ and $\frac{\sqrt{N_j}\ \E{}{R_{ji}}}{\widetilde{\sigma}_{\scaleto{N}{4pt}}}$ inside the left and right round brackets respectively, multiplying out and lastly applying \eqref{eqn::weberclt} and \eqref{eqn::varsame}.
\end{proof}
\begin{proof}[Proof of Theorem \ref{thm::SD}]
In view of Theorem \ref{thm:::alt}, it is only necessary to show that under the alternative, Assumption \ref{ass:med_d} holds. 
We take univariate observations with equivalent ranks to the depth values. 
The function $(1+a^{1/2})^{-1}$ is a monotonic function in $a$, which implies that we can use ranks based on $\E{}{\norm{X_{ji}-Z}^2\big|X_{ji}},$ where $Z\sim F_*$.  
We can then write
$$\widetilde{Y}_{ji}=\E{}{\norm{X_{ji}-Z}^2\big|X_{ji}}=\norm{X_{ji}}^2+\E{}{\norm{Z}^2}-\E{}{\,2\,\ip{X_{ji},Z}\,|\,X_{ji}\,}=\norm{X_{ji}}^2+\norm{\kerrO_Z}_{TR},$$
where $\kerrO_Z$ is the covariance operator corresponding to $Z$ and the last equality comes from the fact that $\E{}{Z}=\mathbf{0}$. 
We can re-center $\widetilde{Y}_{ji}$ by $\norm{\kerrO_Z}_{TR}$ to finally use ranks generated by $Y_{ji}=\norm{X_{ji}}^2.$ 
Therefore, we have that 
$$\E{}{\norm{X_{11}}^2-\norm{X_{21}}^2}=\norm{\kerrO_1}_{TR}-\norm{\kerrO_2}_{TR}\neq 0. $$
\end{proof}
\begin{proof}[Proof of Theorem \ref{thm::RPD}]
First, let $u\in S$, where $S=\{u\colon \norm{u}=1,\ u\in\mathfrak{F}\}$ and let $Y_{u,j}=\ip{X_{j1},u}$. 
Observe that $\E{}{Y_{u,j}}=0$ and that
\begin{align*}
    \sigma^2_{j,u}&\coloneqq \E{}{Y^2_{u,j}}=\E{}{\intd\intd  X_{j1}(t)u(t)\cdot X_{j1}(s)u(s)\ dsdt}=\ip{\kerrO_j u,u},
\end{align*}
where we can take the expectation inside due to Lebesgue's dominated convergence theorem. 
Namely, $\ip{X_{j1},u}^2\leq \norm{X_{j1}}^2$ which has finite expectation. 
One should also recall that in $\RP_\infty$ we take $$\D{\ip{x,u};F_u}=F_u(\ip{x,u})(1-F_u(\ip{x,u}))$$ for the univariate depth. 
For the remainder of the proof we suppress the $\infty$ in $\RP_\infty$ for brevity. 
Now, following the same argument from the first paragraph of the proof of Theorem \ref{thm::SD}, it is only necessary to verify that 
\begin{equation}
    \Pr\left(\RP(X_{11};F^*)>\RP(X_{21};F_{*})\right)\neq \frac{1}{2}.
\end{equation}
Which, under the conditions of the theorem, this is equivalent to showing 
$$\E{}{\RP(X_{11};F_{*})-\RP(X_{21};F_{*})}\neq 0.$$
We can write
\begin{align*}
\E{}{\RP(X_{11};F_{*})-\RP(X_{21};F_{*})}&=\E{}{\int_{\mathcal{C}}\D{Y_{u,1};F_{u,*}} d\nu(u)-\int_{\mathcal{C}}\D{Y_{u,2};F_{u,*}} d\nu(u)}\\
&=\E{}{\int_{\mathcal{C}}[ F_{u,*}(Y_{u,1})(1-F_{u,*}(Y_{u,1}))-F_{u,*}(Y_{u,2})(1-F_{u,*}(Y_{u,2}))]\ d\nu(u)}.
\end{align*}
Clearly, since $0<F_{u,*}(Y_{u,1})<1$,  we have that
$$ F_{u,*}(Y_{u,1})(1-F_{u,*}(Y_{u,1}))-F_{u,*}(Y_{u,2})(1-F_{u,*}(Y_{u,2}))\leq 1/2.$$
Using Lebesgue's dominated convergence theorem,
\begin{align*}
\E{}{\RP(X_{11};F^*)-\RP(X_{21};F^*)}&=\int_{\mathcal{C}}\E{}{ F_{u,*}(Y_{u,1})(1-F_{u,*}(Y_{u,1}))-F_{u,*}(Y_{u,2})(1-F_{u,*}(Y_{u,2}))}d\nu(u).
\end{align*}
Using the fact that $F_{*,u}$ is thrice differentiable for all $u$, we can write 
\begin{align*}
\E{}{F_{u,*}(Y_{u,j})}&=F_{u,*}(0)+\frac{1}{2}f_{u,*}^{(1)}(0)\sigma_{j,u}^2+\mathcal{R}_{u,j,1} \\
\E{}{F^2_{u,*}(Y_{u,j})}&=F^2_{u,*}(0)+(F_{u,*}(0)f_{u,*}^{(1)}(0)+f^2_{u,*}(0))\sigma_{j,u}^2+\mathcal{R}_{u,j,2} \intertext{ with } 
\mathcal{R}_{u,j,1}&\coloneqq \E{}{\frac{1}{6}\int_0^{Y_{u,j}}f_{u,*}^{(2)}(t)(Y_{u,j}-t)^3dt}\\
\mathcal{R}_{u,j,2}&\coloneqq \E{}{\frac{1}{3}\int_0^{Y_{u,j}}(3f_{u,*}(t)f_{u,*}^{(1)}(t)+F_{u,*}(t)f_{u,*}^{(2)}(t))(Y_{u,j}-t)^3dt}.
\end{align*}
Note that we expect $\mathcal{R}^i_{u,j}$ to be small from 
the fact that the mean of $Y_{u,j}$ is 0. 
It follows that
\begin{align*}
\E{}{\D{Y_{u,j};F_{u,*}}}&=F_{u,*}(0)+\frac{1}{2}f_{u,*}^{(1)}(0)\sigma_{j,u}^2+\mathcal{R}_{u,j,1}-F^2_{u,*}(0)-(F_{u,*}(0)f_{u,*}^{(1)}(0)-f^2_{u,*}(0))\sigma_{j,u}^2-\mathcal{R}_{u,j,2}\\
&=\mathcal{H}(F_{*,u})\sigma_{j,u}^2+F_{u,*}(0)-F^2_{u,*}(0)+\mathcal{R}_{u,j,3},
\end{align*}
where 
$$\mathcal{H}(F)\coloneqq \frac{1}{2}f^{(1)}(0)-(F(0)f^{(1)}(0)-f^2(0))\qquad\text{and}\qquad \mathcal{R}_{u,j,3}=\mathcal{R}_{u,j,1}-\mathcal{R}_{u,j,2}.$$
We can now write
\begin{align*}
\E{}{\D{Y_{u,1};F_{u,*}}-\D{Y_{u,2};F_{u,*}}}&=\mathcal{H}(F_{*,u})(\sigma_{1,u}^2-\sigma_{2,u}^2)+\mathcal{R}_{u,1,3}-\mathcal{R}_{u,2,3}.
\end{align*}

To conclude, under univariate simplicial depth it holds that
\begin{align*}
    \E{}{\RP(X_{11};F^*)-\RP(X_{21};F^*)}&=\int_{\mathcal{C}}\mathcal{H}(F_{*,u})(\sigma_{1,u}^2-\sigma_{2,u}^2)+\mathcal{R}_{u,1,3}-\mathcal{R}_{u,2,3}\ d\nu(u)\\
    &=\int_{\mathcal{C}}\mathcal{H}(F_{*,u})\ip{\kerrO_1 u-\kerrO_2 u,u}+\mathcal{R}_{u,1,3}-\mathcal{R}_{u,2,3}\ d\nu(u)\\
    &=\int_{\mathcal{C}}\mathcal{H}(F_{*,u})\ip{\kerrO_1 u-\kerrO_2 u,u} d\nu(u)+\mathcal{R}_{1}-\mathcal{R}_{2},
\end{align*}
where $\mathcal{R}_{j}<\infty$ by the fact that the integrand is bounded in $u$. 
\end{proof}
\begin{proof}[Proof of Theorem \ref{thm::sdla}]
In view of the proof of Theorem \ref{thm::SD}, we can use the fact that the $\LTR$ depth-based ranks are equivalent to ranks generated by $Y_{ji}=\norm{X_{ji}}^2.$ 
Now, $Y_{ji}$ are univariate observations from a scale family, meaning that $Z_{ji}\coloneqq \left(1+\delta_j/\sqrt{N}\right)^{-1}Y_{ji}$ with $Z_{ji}\sim G$. 
Now, let $\tau\coloneqq \limN\tau_{\scaledN}$. 
It follows from \citep{Fan2011} that the test statistic $\mathcal{W}_{\scaledN}\rightarrow \rchi_{J-1}^2(\tau)$. 
We have that
\begin{equation*}
    \tau_{\scaledN}=\frac{12}{N(N+1)}  \sum_{j=1}^{J} N_{j}\left\{N \sum_{k \neq j} \vartheta_{k}\left(\Pr(Y_{k1}\leq Y_{j1})-1 / 2\right)\right\}^{2}.
\end{equation*}
We have that
\begin{align}
    \Pr(Y_{k1}\leq Y_{j1})&=\Pr\left(Z_{k1}\leq Z_{j1}\left[\frac{\sqrt{N}+\delta_j}{\sqrt{N}+\delta_k}\right]\right)=\bigintsss_{\re}\Pr\left(Z_{k1}\leq z\left[\frac{\sqrt{N}+\delta_j}{\sqrt{N}+\delta_k}\right]\right)g(z)dz.
    \label{eqn::subbb}
\end{align}
Now, note that $z\left[\frac{\sqrt{N}+\delta_j}{\sqrt{N}+\delta_k}\right]$ is in a neighborhood of $z$ we can write the Taylor expansion of $G$ about $z$ at the point $z\left[\frac{\sqrt{N}+\delta_j}{\sqrt{N}+\delta_k}\right]$ as
\begin{align*}
    \Pr\left(Z_{k1}\leq z\left[\frac{\sqrt{N}+\delta_j}{\sqrt{N}+\delta_k}\right]\right)&=G(z)+z\left[1-\left[\frac{\sqrt{N}+\delta_j}{\sqrt{N}+\delta_k}\right]\right]g(z)+O(N^{-1}).
\end{align*}
Substituting into \eqref{eqn::subbb}, we have that 
\begin{align*}
   \bigintsss_{\re}\Pr\left(Z_{k1}\leq z\left[\frac{\sqrt{N}+\delta_j}{\sqrt{N}+\delta_k}\right]\right)g(z)dz &=\bigintsss_\re\left[ G(z)+z\left[1-\left[\frac{\sqrt{N}+\delta_j}{\sqrt{N}+\delta_k}\right]\right]g(z)+O(N^{-1})\right]dG(z)\\
   &=1/2+\left[1-\left[\frac{\sqrt{N}+\delta_j}{\sqrt{N}+\delta_k}\right]\right]\bigintsss_\re zg(z)^2dz +O(N^{-1})\\
   &= 1/2+\left[1-\left[\frac{\sqrt{N}+\delta_j}{\sqrt{N}+\delta_k}\right]\right]\Delta_{G} +O(N^{-1}),
\end{align*}
where $$\Delta_{G} =\bigintsss_\re zg(z)^2dz.$$
Now, substituting the above identity into \eqref{eqn::tau}, gives
\begin{align*}
    \tau_{\scaledN}&=\frac{12}{N(N+1)} \sum_{j=1}^{J} N_{j}\left[N \sum_{k \neq j} \vartheta_{k}\left(\left[1-\left[\frac{\sqrt{N}+\delta_j}{\sqrt{N}+\delta_k}\right]\right]\Delta_{G} +O(N^{-1})\right)\right]^{2},
\end{align*}
which then immediately implies that
   $$ \limN \tau_{\scaledN}=12 \Delta_{G}^2  \sum_{j=1}^{J} \vartheta_{j}\left( \delta_j-\overline{\delta}\right)^{2},$$
which completes the proof. \qedhere
\end{proof}

\end{document}